\newcommand{\be}{\begin{equation}}
\newcommand{\ee}{\end{equation}}
\newcommand{\Lie}{\mathrm{Lie}}
\newcommand{\A}{{\mathcal{A}}}
\renewcommand{\d}{{\mathrm{d}}}
\newcommand{\D}{{\mathrm{D}}}
\newcommand{\SU}{{\mathrm{SU}}}
\newcommand{\YM}{{\text{YM}}}
\newcommand{\G}{{\mathcal{G}}}
\newcommand{\Ad}{{\mathrm{Ad}}}
\newcommand{\pp}{{\partial}}
\newcommand{\fG}{{\mathrm{Lie}(\G)}}
\renewcommand{\bar}{\overline}
\newcommand{\dd}{{\mathbb{d}}}
\renewcommand{\hat}{\widehat}
\newcommand{\T}{{\mathrm{T}}}
\renewcommand{\ker}{{\mathrm{ker}}}
\newcommand{\sdw}{{\text{SdW}}}
\newcommand{\rad}{{\text{rad}}}
\newcommand{\Coul}{{\text{Coul}}}
\newcommand{\GC}{{\mathsf{G}}}
\newcommand{\lbr}{\llbracket}
\newcommand{\rbr}{\rrbracket}
\newcommand{\cint}{{\int\kern-.87em{<}}}
\newcommand{\sint}{{\int\kern-.75em{\sim}}}
\newcommand{\fint}{{\int\kern-1.00em{\int}}}
\newcommand{\bb}{\mathbb}
\newcommand{\tr}{\mathrm{Tr}}
\newcommand{\bbv}{\dot{\bb A}}
\renewcommand{\i}{(\textit{i}) }
\newcommand{\ii}{(\textit{ii}) }
\newcommand{\iii}{(\textit{iii}) }
\newtheorem{Prop}{Proposition}[section]
\newtheorem{Lemma}{Lemma}[section]
\newtheorem{CorP}{Corollary}[Prop]
\newtheorem{Def}{Definition}[section]
\newcounter{Aeq}
\newenvironment{AEquation}
   {\stepcounter{Aeq}%
     \addtocounter{equation}{-1}%
     \renewcommand\theequation{A\arabic{Aeq}}\equation}
   {\endequation}
\newcounter{Caux}   
\newenvironment{AAlign}
   {\setcounter{Caux}{\theequation}
     \setcounter{equation}{\theAeq}%
     \renewcommand\theequation{A\arabic{equation}}
     \align}
   {\endalign\setcounter{Aeq}{\value{equation}}\setcounter{equation}{\theCaux}}
\renewcommand{\#}{\sharp}
\let\oldmarginpar\marginpar
\renewcommand\marginpar[1]{\oldmarginpar{\color{red}\raggedright\footnotesize #1}}
\begin{document}

\begin{center}{\Large \textbf{
Symplectic reduction of Yang-Mills theory with boundaries: from superselection sectors to edge modes, and back
}}\end{center}

\begin{center}
A. Riello\textsuperscript{1*}
\end{center}

\begin{center}
{Physique Théorique et Mathématique, Université libre de Bruxelles, Campus Plaine C.P. 231, B-1050 Bruxelles, Belgium}
\\
*aldo.riello@ulb.be
\end{center}

\begin{center}
\today
\end{center}


\section*{Abstract}
{\bf 
I develop a theory of symplectic reduction that applies to bounded regions in electromagnetism and Yang--Mills theories.
In this theory gauge-covariant superselection sectors for the electric flux through the boundary of the region play a central role: within such sectors, there exists a natural, canonically defined, symplectic structure for the reduced Yang--Mills theory.
This symplectic structure does not require the inclusion of any new degrees of freedom.
In the non-Abelian case, it also supports a family of Hamiltonian vector fields, which I call ``flux rotations,'' generated by smeared, Poisson-non-commutative, electric fluxes. 
Since the action of flux rotations affects the total energy of the system, I argue that flux rotations fail to be dynamical symmetries of Yang--Mills theory restricted to a region.
I also consider the possibility of defining a symplectic structure on the union of all superselection sectors. This in turn requires including additional boundary degrees of freedom aka ``edge modes.''
However, I argue that a commonly used phase space extension by edge modes is inherently ambiguous and gauge-breaking.
 }

\vspace{10pt}
\noindent\rule{\textwidth}{1pt}
\setcounter{tocdepth}{2}
\tableofcontents\thispagestyle{fancy}
\noindent\rule{\textwidth}{1pt}
\vspace{10pt}



\section{Introduction} 

\paragraph{Context and motivations}
Building on \cite{GoRieGhost,GoRiePRD,GoHopfRie,RielloSoft,GoRieYMdof}, in this article I elaborate and present---in a self-contained manner---a theory of symplectic reduction for Yang--Mills gauge theories over finite and bounded regions.
Physically, this article answers the following question:  what are the quasilocal degrees of freedom (dof) in electromagnetism and non-Abelian Yang-Mills (YM) theories?
By ``quasilocal'' I  mean ``confined in a finite and bounded region,'' with possibly a degree of nonlocality allowed {\it within} the region. 

Gauge theoretical dof cannot be completely localized, since gauge invariant quantities are somewhat nonlocal, the prototypical example being a Wilson line.
In electromagnetism, or any Abelian YM theory, although the field strength $F_{\mu\nu}=2\pp_{[\mu} A_{\nu]}$ provides a  basis of local gauge invariant observables, its components do not provide gauge invariant {\it canonical} coordinates on field space: in 3 space dimensions, $\{ E^i(x) , B^j(y) \} = \epsilon^{jik}\pp_k\delta(x,y)$ is not a canonical Poisson bracket and the presence of the derivative on the right-hand-side is the signature of a nonlocal behavior.

From a canonical perspective, what is responsible for this nonlocality is the Gauss constraint, $\GC=\pp^iE_i + [A^i,E_i]-\rho$, whose Poisson bracket generates  gauge transformations.\footnote{See \cite{StrocchiBook,strocchi_phil} for a thorough discussion of the central and pervasive role the Gauss constraint plays in quantum electrodynamics and quantum Yang--Mills theory.}
The Gauss constraint is an {\it elliptic} equation that initial data on a Cauchy surface $\Sigma$ must satisfy. 
In other words, the initial values of the gauge potential $A_i$ and its momentum $E^i$ cannot be freely specified throughout a Cauchy surface $\Sigma$. Ultimately, this is the source of both the nonlocality and the difficulty of identifying freely specifiable initial data---the ``true'' {\it degrees of freedom}.  

To summarize, the identification of the quasilocal dof requires dealing with (\textit{1}) the Gauss constraint and with (\textit{2}) the separation of pure-gauge and physical dof.
The two tasks are related, but distinct. 
I will start with the second task by focusing on the foliation of the YM phase space by the gauge orbits.

\paragraph{Sketch of the reduction procedure}
The goal of the quasilocal symplectic reduction is to construct a closed and non-degenerate, i.e. symplectic, 2-form on the reduced phase space of YM and matter fields over a spacelike region $R\subset\Sigma$. The reduced phase space is the space of gauge orbits of the YM configurations---comprising the gauge potential, the electric field, and the matter fields---which are on shell of the Gauss constraint. 

Since the reduced phase space is de facto inaccessible to an intrinsic description, it is most convenient to concentrate one's efforts on the space of gauge-variant fields. 
Viewing this space as a foliated space with the structure of a fiducial infinite-dimensional fibre bundle, I will focus on the construction of a horizontal and gauge-invariant (i.e. basic) presymplectic 2-form. 
This pre-symplectic 2-form can then be restricted to the on-shell\footnote{In this article, ``on-shell'' refers to the Gauss constraint only, and not to the equations of motion.} gauge-variant configurations and thus projected down to the reduced phase space. 
Here, ``horizontal'' means ``transverse to the gauge orbits.'' 
Note that, since there is no canonical notion of horizontality,\footnote{This statement is analogous, albeit more encompassing, to the statement that there is no canonical gauge fixing. Note that despite the non-existence of a canonical choice, I will discuss a particularly natural (i.e. convenient) one.} the {\it construction} of the basic 2-form will a priori depend on the choice of such a notion.\footnote{Note that an explicit notion of horizontality (as encoded in a connection form, see below) is not needed to decide {\it whether} a given form is basic, but it is needed to {\it build} a horizontal form out of a given one.}

A basic 2-form projects down to a {\it non-degenerate} 2-form on the reduced space only if its kernel coincides with the space of gauge transformations, i.e. with the space of vertical vectors.
I will find that the ``naive'' presymplectic 2-form satisfies this condition automatically only in Abelian theories---or in the absence of boundaries. %
In non-Abelian theories with boundaries, however, a canonical completion of the naive presymplectic 2-form exists which leads to an actual non-degenerate, and therefore symplectic, 2-form on the reduced phase space.

This completion also erases all dependence on the chosen notion of horizontality, so that the final result is independent from it.

Crucially, throughout this construction, I will let gauge transformations be {\it un}con-strained at the boundary $\pp R$.\footnote{This distinguishes the current approach from the standard lore, by which gauge symmetry is usually frozen or broken at the boundary. See e.g. \cite{RegTeit,BrownHenn,Schwab, Ali} (cf. also \cite{gervais76,wadia77,gervais80,HennTroess18} on the related but different topic of asymptotic boundaries and infra-red sectors of gauge theories).} I will also associate separate phase spaces to different gauge-classes of the electric flux through $\pp R$, which I will call {\it covariant superselection sectors} of the flux. As I will explain later, these two ingredients are closely related to each other.

The fixing of a (covariant) superselction sector has two crucial consequences.
First, within a covariant superselection sector the construction of the symplectic form on the reduced phase space does {\it not} require the addition of new dof, i.e. it does not require a modification of the phase space manifold; even in the non-Abelian case this extension is avoided, and the projected 2-form is rather made non-degenerate through the addition of a {\it canonical} term to the presymplectic 2-form which eliminates its kernel.
And second, (only) within these sectors it is possible to define a symplectic form on the reduced phase space which is independent of the chosen notion of horizontality---think of this as a form of gauge-fixing independence.

\paragraph{Flux superselection sectors} 
Given their central role in the construction of the reduced symplectic structure, I  will now spend a few words discussing flux superselection sectors.

Flux superselection sectors find their origin in the structure of the Gauss constraint in bounded regions.
To streamline this introductory discussion, let me introduce the Coulombic potential $\varphi$, so that the Gauss constraint can be written as an elliptic (Poisson) equation that fixes the Laplacian of $\varphi$ in terms of the YM charge density,\footnote{$\D  = \d + A$ is the gauge-covariant differential, and $\D^2$ is the gauge covariant Laplace-Beltrami operator.} $\D^2\varphi = \rho$.
However, in a finite and bounded region, this Poisson equation is insufficient to fully fix the Coulombic potential: a boundary condition is needed. 

A preferred choice of boundary condition (in YM theory) is dictated by the interplay between the symplectic and fibre-bundle geometry of the YM phase space.
This choice of boundary condition corresponds to fixing the electric flux $f$ through the boundary $\pp R$. I.e. denoting by $s^i$ the unit outgoing normal at $\pp R$, the missing boundary condition is $\D_s \varphi = f$.

From the global perspective of the entire Cauchy surface $\Sigma$, $f$ at $\pp R$ depends on the geometry\footnote{Indeed, although the ``creation'' of a point charge in $R$ will in general affect the flux $f$, to be able to {\it compute} the induced change in the flux one needs the Green's function of the Laplacian over $\Sigma$, which in turn depends on the geometry of both $R$ {\it and its complement}. Therefore, from {\it within} $R$, there is no way to compute the induced change in flux. Maybe more strikingly, at generic background configurations of the non-Abelian theory (which are irreducible), the charge and flux data are completely independent. At reducible configurations (and in Abelian theories), at most a finite number of surface integrals of $f$ is fixed---through an integral Gauss law---by the charge content within $R$. This number is bound by the dimension of the charge group of the theory. See \cite{GoRieYMdof} for a more thorough discussion. \label{fnt:Gauss}} and the field content of both $R$ and its complement.
But these data are inaccessible from the quasilocal perspective intrinsic to $R$: hence the only meaningful manner to understand the quasilocal Gauss constraint is within a {\it superselection sector} of fixed electric flux.
However, fixing the flux completely goes against my pledge of not treating gauge transformations differently at the boundary.
Therefore, in the non-Abelian theory where the flux is gauge variant, it becomes necessary to introduce {\it covariant superselection sectors} (CSSS) labelled not by a flux, but by a conjugacy class of fluxes, $[f]$.

In quantum mechanics, a superselection sector is invoked when a certain physical observable $O$ with eigenvalues $o_\alpha$ commutes with \emph{all} other available observables in the theory; then, 
 the theory’s states decompose into statistical mixtures of states in sectors labeled by the eigenvalues $o_\alpha$. Such an $O$ is said ``superselected.''
How is this characterization of the notion of superselection consistent with the one used above for the electric flux $f$?
In Abelian theories $f$ does not appear in the reduced symplectic structure associated to a superselection sector; therefore, $f$ is treated as a mere parameter which commutes with all other quantities, and the two notion of superselection are perfectly compatible. 
In the non-Abelian theory, however, the situation is more subtle: the different components of $f$ turn out to be conjugate to each other as a consequence of the above-mentioned completion procedure. This fact, whose origin I will explain in the next paragraph, means in turn that the bulk ``Coulombic'' electric field---which depends functionally on $f$ as a consequence of the Gauss constraint---{\it fails} to commute with itself; however, $f$ {\it does} commute with all the unconstrained, i.e. ``freely specifiable,'' bulk fields and observables, which are provided by the matter fields as well as the ``radiative'' part of the YM field.

\paragraph{Flux rotations}

In non-Abelian theories, the enlargement of the notion of superselection sector to its covariant counterpart introduces the possibility of ``rotating'' $f$ within its conjugacy class $[f]$ {\it without} altering neither $\rho$ nor $A$. These transformations---that I name {\it flux rotations}---produce genuinely new field configurations, for they alter, via the Gauss constraint, the Coulombic potential and thus the energy content of $R$.
Flux rotations are therefore {\it physical} transformations that survive the reduction.

This is the problem with the naive reduction in the non-Abelian theory: flux rotations end up being in the kernel of the candidate symplectic 2-form. 
Heuristically, this could have been expected because the more naive reduction procedure gets rid of the pure-gauge dof as well as of the Coulombic dof which are conjugate to them; but since $f$ is imprinted precisely in the Coulombic part of the electric field, getting rid of it means that $f$ drops from the candidate symplectic 2-form.

This is not a problem in Abelian theories, where $f$ is fixed once and for all in a given superselection sector.
But in non-Abelian theories, the covariant superselection sector only fixes the conjugacy class of $f$: within $[f]$, variations of $f$  {\it relative to the bulk fields} become thus possible and physically relevant---but remain ``invisible'' to the candidate symplectic 2-form.

\paragraph{Completion of the reduced symplectic structure} 
This problem can be overcome by completing the candidate symplectic structure in a canonical manner.
Indeed, within a CSSS, the space of allowed fluxes $[f]$ is equipped with a \emph{canonical} symplectic structure: the Kirillov--Konstant--Souriau (KKS) symplectic structure for the (co)adjoint orbits.
Using the techniques developed for the bulk fields, also the KKS symplectic structure can be modified into a basic 2-form and thus fed into the reduction procedure.

Remarkably, the inclusion of this basic 2-form on the flux space, makes the ensuing {\it total} symplectic structure over a covariant superselection sector $[f]$ independent from the chosen notion of horizontality!

The ensuing ``completed'' symplectic structure over a covariant superselection sector $[f]$  can be characterized in a simple way: it is given by the sum of (the pullback to the given superselection sector of) the naive symplectic structure $\Omega_\YM = \int \tr(\dd E \curlywedge \dd A)$  and a boundary KKS 2-form over the space of covariantly superselected fluxes $[f]$, $\omega^{[f]}_\text{KKS}$; inclusion of charged matter fields does not alter this structure. 
Schematically:\footnote{The symbol $\approx$ here indicates equality upon pullback to the covariant superselection sector of $[f]$.}
\be
\Omega_\text{total}^{[f]} \approx \Omega_\YM + \Omega_\text{matter} + \omega^{[f]}_\text{KKS}.
\label{eq:eq1}
\ee

As sketched in Appendix \ref{app:AppMWM}, this construction closely parallels the ``shifting trick''  \cite[Sect.26]{guillemin1990symplectic} (also \cite{Arms1996}) used to define the Marsden--Weinstein--Meyer symplectic reduction \cite{MarsdenWeinstein1974,Meyer1973} at (regular) values of the moment map different from zero.\footnote{I would like to thank Michele Schiavina for pointing out this trick to me.} Heuristically, this trick allows one to reduce at non-zero values of the moment map, and thus to deal in our case with non-vanishing electric fluxes through the boundary. However, in order to make this statement precise, a careful separation is necessary between bulk gauge transformations---whose moment map is the usual Gauss constraint and must therefore be set to zero value---and boundary gauge transformations---whose moment map is, loosely speaking, given by the non-vanishing flux. Elaboration of this equivalent viewpoint is left to future work.\footnote{In \cite{Riello2021}, which summarizes the content of this and other papers by myself and collaborators in more elementary but less general terms, the presentation follows much more closely the Marsden--Weinstein paradigm, even though it makes no \emph{explicit} mention of the shifting trick.}

\paragraph{Superselection sectors vs. edge modes} \label{sec:1.6}
The physical viability of the notion of superselection sector for the electric fluxes---which implies that of the superselection of the electric charge charge \cite{StrocchiWightman, 
Buchholz1986}\footnote{See also: \cite{StrocchiMorchio, BalachandranVaidya2013} as well as \cite{Herdegen}. Moreover, for recent results on a residual gauge-fixing dependence of QED in the presence of (asymptotic) boundaries and flux superselection, see  \cite{DybaWege19}  (and also \cite{MundRehrenSchroer}). At present it is unclear how these recent results square with the classical treatment presented here.}---has been criticized in the past \cite{AharSuss,AharBook}  on the basis of a theory of relational reference frames \cite[Chapter 6]{Giulini_1996}.\footnote{See also the review \cite{SpekkensEtAl} and references therein.} 

This tension can be reconciled through a careful analysis of the gluing of the reduced phase spaces associated to adjacent bounded regions  \cite[Sect.6]{GoRieYMdof}. Indeed, this analysis shows that the origin of the superselection is precisely to be found in the dismissal of the dof in one of the two regions which would otherwise serve as a relational reference frame (see \cite[Sect.7]{GoRieYMdof} and \cite[Rmk.5.5]{Riello2021} for details)---a novel viewpoint which is consistent with the general stance on superselection elaborated e.g. in \cite{AharSuss,Giulini_1996,SpekkensEtAl}.\footnote{See also \cite{GoRieGhost} and especially the introduction of \cite{GoHopfRie} for a discussion of the relationship between a choice of relational reference frame for the gauge system and the choice of a functional connection $\varpi$ over field space (the geometrtical role of $\varpi$ is the aforementioned definition of a notion of horizontality in field-space).}

It is nonetheless instructive to attempt the reduction procedure in the union of all superselection sectors to see what precisely fails in this context. 
But, on the union of all superselection sectors, no symplectic structure can be defined in a completely canonical fashion: therefore to go beyond the superselection framework, it is necessary to resort to an {\it extended} phase space which includes additional new fields symplectically conjugate to the electric flux---the most natural choice for these new dof gives precisely DF's ``edge modes.''\footnote{From the perspective of  \cite[Chapter 6]{Giulini_1996}, edge modes can be understood as (models for) phase, or gauge, reference frames. In \cite[Sect.6-7]{GoRieYMdof} and \cite[Sect.5]{Riello2021} I argued that the most natural such reference frame is constituted by the (gauge invariant) dof present in the complementary region.\label{fnt14}}

Curiously, the total symplectic structure on a covariant superselection sector, once written in an over-complete set of coordinates over $[f]$, is formally similar to the ``edge mode'' proposal of Donnelly and Freidel (DF) \cite{DonFreid,Geiller1,Geiller_2020,FreiGeiPra2}  (see also \cite{RegTeit,BrownHenn,CarlipDellaPietra,Balachandran:1994up,Carlip1995,Polikarpov,Donnelly:2014fua} among many others). But the two {\it are} very different in substance: contrary to what happens in a superselection sector, in the DF edge mode framework {\it no} superslection sector is fixed and new dof {\it are} added to the phase space.
The relationship between the two constructions will be made explicit in Sect. \ref{sec:beyond}.

Beyond the addition of new dof, there is also another price to pay for the DF construction: a new type of ambiguity emerges which is related to the {\it non}-canonical nature of the DF extension of phase space.
This ambiguity is rooted in the fact that the DF extended phase space can as well be obtained from breaking the boundary gauge symmetry, and different ways to do so lead to different (although isomorphic) DF phase spaces.

In sum, it appears that any attempt to go beyond the superselection framework must not only introduce new dof but  also introduce ambiguities which would not be present otherwise. In my opinion, this strongly supports the viability and necessity of the concept of covariant superselection sectors. In the conclusions I will come back to this point.\\

After this overview, it is time to delve into the details.

\section{Mathematical setup}

\paragraph{The YM configuration space as a foliated space}\label{sec:YMfoliation}

To start, let me introduce some notation and recall some simple facts.
Let $G=\SU(N)$ be the {\it charge group} of the YM theory under investigation; in the Abelian case, I will have electromagnetism in mind, with $G=\mathrm{U}(1)$ or $(\bb R^+, \times)$.
The quasilocal configuration space of the gauge potential\footnote{This kinematical (or off-shell) setup can be understood as stemming e.g. from a canonical approach in temporal gauge ($A_0=0$). Consider electromagnetism. Since $A$ and $E$ will be treated as independent coordinates on $\mathrm T^*\A$, the spatial components of the gauge potential can be further gauge-fixed to e.g. Coulomb gauge, $\nabla^i A_i = 0$, without losing the Coulombic potential which is the pure-gradient part of $E$. Analogous statements  hold in the non-Abelian case. This will become clear below.  Even \emph{without} fixing temporal gauge, one is led to the same kinematical phase space $ \mathrm T^*\A$, by focusing on the ghost-number-zero part of the BV-BFV boundary structure of YM theory, e.g. \cite{Schiavina} (here ``boundary'' is understood in relation to the spacetime bulk).} over $R\subset \Sigma$, with $\mathring R\cong \bb R^D$,  is the space of Lie-algebra valued one forms $ A \in \A:=\Omega^1(R, \Lie(G))$ that transform under gauge transformations $g:R \to G$ as $A \mapsto A^g = g^{-1} A g + g^{-1}\d g$, with $\d$ the spatial exterior derivative.
The space of gauge transformations  $\mathcal G:=\mathcal C^\infty(R, G)\ni g$ inherits a group structure from $G$ via pointwise multiplication. Call $\G$ the {\it gauge group}.
The action of gauge transformations $g$ on the gauge potential $A$, provides an action of $\G$ on $\A$. The orbits of this action, $\mathcal O_A$, are called gauge orbits and their space $\A/\G = \bigcup_{A\in\A} \mathcal O_A$ is the space of physical configurations. This is the ``true'' configuration space of the theory, but it is de facto inaccessible.

The orbits of $\G$ on $\A$ induces an infinite-dimensional foliation of configuration space, $\A \to \A/\G$.\footnote{Strictly speaking the above choice of gauge group does not lead to a bona-fide (albeit infinite dimensional) foliation of $\A$. This is due to the presence of reducible configurations \cite{PalaisSlice,IsenSlice,SliceThm}, i.e. configurations that are left invariant by a (necessarily finite) subgroup of $\G$. For what concerns the present article, this issue can be solved by replacing $\G$ with $\G_*\subset \G$, the subgroup of gauge transformations which are trivial at one given point $x_*\in R$, $g(x_*)\equiv 1$. For a more thorough discussion of this subtlety and its physical consequences in relation to global charges, see \cite{GoRieYMdof}. \label{fnt:reducible}}    
An infinitesimal gauge transformation, $\xi\in\Lie(\G)$, defines a vector field tangent to the gauge orbits. 
I will denote this vector field\footnote{The covariant derivative is $\D \xi = \d\xi + [A, \xi] $.} by $\xi^\#\in \Gamma(\mathrm T \A)$,
\be
\xi^\# = \int (\D_i\xi)^\alpha(x) \frac{\delta}{\delta A_i^\alpha(x)}.
\label{eq:xihash}
\ee
where  $\int := \int_R \d^Dx$.
At each $A\in\A$, the span of the $\xi^\#{}_{|A}$ defines the {\it vertical} subspace of $\mathrm T_A\A$, i.e. $V_A:=\mathrm{Span}_{\xi\in\fG}(\xi^\#{}_{|A}) \subset \mathrm T_A\A$. The ensemble of these vertical subspace gives the vertical distribution $V = \T(\bigcup_{A\in\A} \mathcal O_A) \subset\T\A$. 

Physically, vertical directions are ``pure gauge''  and variations of the fields in these directions are physically irrelevant.
Therefore, the ``physical directions'' in $\mathrm T\A$ must be those transverse to $V$, i.e. the {\it horizontal directions} $H\subset\mathrm T\A$. 
However, the decomposition $\mathrm T\A = V \oplus H$ is not canonically defined; in loose terms, there is no canonical ``gauge fixing" for infinitesimal variations of the fields. Rather, the choice of an equivariant horizontal distribution is equivalent to the  choice of a functional Ehresmann connection on $\A\to\A/\G$. This is a functional 1-form valued in the Lie algebra of the gauge group,\footnote{I usually pronounce $\varpi$ by its typographical code, \textsc{Var-Pie}.}
\be
\varpi \in \Omega^1(\A, \Lie(\G)),
\ee 
characterized by the following two properties:
\be
\begin{cases}
\bb i_{\xi^\#}\varpi = \xi,\\
\bb L_{\xi^\#} \varpi = [\varpi, \xi] + \dd \xi.
\end{cases}
\label{eq:varpi}
\ee
Hereafter, double-struck symbols refer to geometrical objects and operations in field space: $\dd$ is the (formal) field-space exterior differential,\footnote{I prefer this notation to the more common $\delta$, because the latter is often used to indicate vectors as well as forms, hence creating possible confusions.} with $\dd^2\equiv 0$; $\bb i$ is the inclusion, or contraction, operator of field-space vectors into field-space forms; and $\mathbb L_\bb X $ is the field-space Lie derivative of field-space vectors and forms along the field-space vector field $\bb X\in\mathfrak{X}^1(\A)$. When acting on forms, the field-space Lie derivative can be computed through Cartan's formula, $\bb L_{\bb X} = \bb i_{\bb X} \dd + \dd \bb i_{\bb X} $.
When acting on vectors, I will also use the Lie-bracket notation, $\bb L_{\bb X} \bb Y \equiv \lbr \bb X, \bb Y\rbr$.
Finally, I will denote  the wedge product between field space forms by $\curlywedge$.

The first of the above properties, the projection property, is what grants one to define $H$ as the complement to $V$ through 
\be
H := \ker\, \varpi.
\ee
This means that the vertical and horizontal projections in $\T\A$ can be written respectively as $\hat V(\bb X) = (\bb i_{\bb X} \varpi)^\#$ and $\hat H(\bb X) = \bb X - (\bb i_{\bb X}\varpi)^\#$.
The second property ensures that the above definition is compatible with the group action of $\G$ on $\A$, i.e. transforms ``covariantly'' under the action of gauge transformations. 
The term $\dd\xi$ is only present if $\xi$ is chosen differently at different points of $\A$, i.e. if $\xi$ is an infinitesimal {\it field-dependent} gauge transformation.
Gauge fixings, and changes between gauge fixings, bring about typical examples of field-dependent gauge transformations.
Note that the generalization to field-dependent gauge transformations comes ``for free:'' the equivariance property of $\varpi$ as it appears in \eqref{eq:varpi} can be deduced from the standard transformation property $\bb L_{\xi^\#} \varpi = [\varpi, \xi]$ for field-independent $\xi$'s (i.e. $\xi$'s constant throughout $\A$), together with the projection property of $\varpi$.   
  
From a mathematical perspective, generalizing to field-dependent gauge transformations means promoting $\xi$ to be a general, i.e. not-necessarily constant, section of the trivial bundle $\A\times\fG\to \A$, i.e. $\xi \in \Gamma(\A,\A\times\fG)$. In this way, {\it any} vertical vector field $\bb V \in \Gamma(\A,V)$ can be written as $\bb V = \xi^\#$ for the field-dependent $\xi = \varpi(\bb V)$---here $\xi^\#$ is defined pointwise over $\A$ as in \eqref{eq:xihash}. 
This setup is best expressed in terms of the {\it action Lie algebroid}\footnote{For generalities on action Lie algebroids, see e.g. \cite{Fernandes}.} associated to the action of $\G$ on $\A$, with anchor map $\cdot^\#$ (see \cite[Sect.2]{GoRieYMdof}). 
In the following, equations that hold only for field-{\it in}dependent $\xi$'s will be accompanied by the notation $(\dd \xi = 0)$ which indicates the choice of a {\it constant} section of $\A\times\fG\to\A$.
(Later we will replace $\A$ with a larger space $\Phi$ which includes the electric and matter fields.)

One last property of the horizontal distribution $H = \ker\, \varpi$ is its anholonomicity, i.e. a measure of its failure of being integrable in the sense of Frobenius's theorem:
\be
\bb F := \varpi( \lbr \hat H(\cdot), \hat H(\cdot) \rbr) \in\Omega^2(\A,\fG),
\ee
so that $\bb F^\#$ captures the vertical part of the Lie bracket of any two horizontal vector fields.
As standard from the theory of principal fibre bundles, one can prove that 
\be
\bb F = \dd \varpi +\tfrac12 [\varpi\stackrel{\curlywedge}{,}\varpi],
\label{eq:FF}
\ee
for $[\,\cdot\,,\,\cdot\,]$ the Lie-bracket in $\Lie(G)$ pointwise extended to $\fG$.
The functional curvature $\bb F$ satisfies an algebraic Bianchi identity $\dd_H \bb F = \dd \bb F + [\varpi \stackrel{\curlywedge}{,} \bb F] = 0.
$ 

If $\bb F=0$, $\varpi$ is said to be flat. In this case, the horizontal distribution is integrable and defines a horizontal foliation of $\A$. A leaf in such a foliation provides a global section $\A/\G \rightarrow \A$, i.e. a gauge fixing proper. In this sense functional connections are ``infinitesimal'' gauge fixings which generalize the usual global concept.

Finally, the choice of a horizontal distribution allows to introduce a new differential: the horizontal differential $\dd_H$.
This is by definition transverse to the vertical, pure gauge, directions: that is, for any $\bb X$, $\bb i_{\bb X}\dd_H A := \bb i_{\hat H(\bb X)} \dd A$. Thus, $\bb i_{\xi^\#} \dd_HA \equiv 0$. 
Later, I will show that (in most circumstances of interest) $\dd_H$, expressed in terms of $\varpi$, takes the form of a ``covariant differential'' $\dd_H = \dd - \varpi $, whose failure to be nilpotent is captured by $\bb F$.

\paragraph{An example: the SdW connection}\label{sec:SdW}

An important example of functional connection is provided by the Singer--DeWitt connection (SdW), $\varpi_\sdw$.
Whenever $\A$ is equipped with a positive-definite gauge-invariant supermetric\footnote{A weaker condition is indeed sufficient, see \cite{GoHopfRie} and \cite{GoRieYMdof}.}  $\bb G$---i.e. a supermetric such that $\bb L_{\xi^\#}\bb G = 0$ if $\dd \xi =0$,---then an appropriately covariant functional connection I call SdW can be defined by orthogonality to the fibres, $\T\A = H_{\bb G} \perp V$.  

In the case of YM theory there is a natural such candidate for $\bb G$. This is the kinetic supermetric\footnote{The name comes from the fact that the kinetic energy of YM theory is given by $K = \bb G(\bbv, \bbv)$ where $\bbv := \int \dot A_i\frac{\delta}{\delta A_i}$. Hereafter the contraction $\tr(\cdot \cdot)$ is normalized to coincide with the killing form over $\Lie(G)$ for $G=\SU(N)$.}
\be
\bb G(\bb X^1,\bb X^2) := \int \sqrt{g}\,g^{ij}\tr( X_i^1 X^2_j)
\qquad \forall \bb X^{1,2} = \int X^{1,2}_i \frac{\delta}{\delta A_i} \in \T\A.
\ee

From this definition, it is easy to see that a vector $\bb h = \int h_i\frac{\delta}{\delta A_i}$ is SdW-horizontal, i.e. $\bb G$-orthogonal to all $\xi^\# = \int \D_i\xi \frac{\delta}{\delta A_i}$, if and only if
\be
\begin{cases}
\D^i h_i = 0 & \text{in }R,\\
s^i h_i = 0 & \text{at }\pp R.
\end{cases}
\label{eq:SdW-hor}
\ee
Therefore, SdW-horizontality is a generalization, to the non-Abelian and bounded case, of Coulomb gauge for the infinitesimal variations of $A$. 
In this sense, SdW-horizontal variations generalize the concept of a (transverse) photon in the same manner.

Demanding that $\bb h \equiv \hat H_{\bb G}(\bb X) = \bb X - (\bb i_{\bb X}\varpi_\sdw )^\#$ is SdW-horizontal for all $\bb X$, leads to the following defining equation for $\varpi_\sdw$:
\be
\begin{cases}
\D^2 \varpi_\sdw = \D^i \dd A_i & \text{in }R,\\
\D_s \varpi_\sdw = \dd A_s & \text{at }\pp R.
\end{cases}
\label{eq:SdW}
\ee

In the Abelian case, the SdW connection is exact and flat.\footnote{A connection $\varpi$ is exact iff $G$ is Abelian and $\varpi$ is flat.} 
In the non-Abelian case, the anholonomicity of $\varpi_\sdw$ satisfies the following \cite{GoHopfRie, GoRieYMdof}:
\be
\begin{cases}
\D^2 \bb F_\sdw = g^{ij}[ \dd_HA_i \stackrel{\curlywedge}{,} \dd_H A_j] & \text{in }R,\\
\D_s \bb F_\sdw = 0 & \text{at }\pp R.
\end{cases}
\label{eq:SdW-FF}
\ee

Notice that the unrestricted nature of the gauge freedom at the boundary is crucial to obtain not only a boundary condition for the SdW horizontal modes, but most importantly a well-posed boundary value problem for $\varpi_\sdw$ (and thus for $\bb F_\sdw$, too).

In the following I will call elliptic boundary value problems of the same kind as \eqref{eq:SdW} and \eqref{eq:SdW-FF}, ``SdW boundary value problems.'' Their properties are analyzed in detail in \cite{GoRieYMdof}. In this article, I will consider SdW boundary value problems to be uniquely invertible in $\Omega^0(\A,\fG)$.\footnote{This is true at irreducible configurations, and also otherwise provided that $\fG$ is replaced by $\Lie(\G_*)$ modified; cf. footnote \ref{fnt:reducible}. }

\paragraph{The phase space of YM theory with matter as a fibre bundle}

These constructions can be readily generalized to the full {\it off-shell} phase space 
\be
\Phi := \T^*\A \times \Phi_\text{matter} \ni (A_i , E^i , \psi, \bar \psi).
\ee 
Here, $E^i$ is the electric field and, for definiteness, I chose $\Phi_\text{matter} = \Psi \times \bar \Psi$ to be the space of Dirac spinors and their conjugates---which correspond to the matter's canonical momenta.
The off-shell phase space is foliated by the action of gauge transformations, $A^g = g^{-1} A g + g^{-1}\d g$, $E^g = g^{-1} E g$, $\psi^g = g^{-1} \psi$, and $\bar \psi ^g = \bar \psi g$.
Thus, the corresponding infinitesimal field-dependent gauge transformations $\xi \in \Gamma(\Phi, \Phi\times \fG)$ define vertical vectors tangent to the gauge orbits in $\Phi$,
\be
\xi^\# = \int \D_i \xi \frac{\delta}{\delta A_i} + [E^i,\xi]\frac{\delta}{\delta E^i}  - \xi \psi \frac{\delta}{\delta \psi } + \bar\psi \xi \frac{\delta}{\delta \bar \psi} \in \Gamma(\Phi,V) \subset \mathfrak X^1(\Phi).
\ee
Notice that I have redefined $V$ to be the vertical subspace of $\T\Phi$ rather than $\T\A$. 
Similarly, for the horizontal distribution $H\subset \T\Phi$, $\T\Phi = H\oplus V$.

A choice of an equivariant horizontal distribution can once again be identified with a choice of a connection on $\Phi$. 
Rather than considering general connections over $\Phi$, I will {\it exclusively} focus on connections defined on $\Phi$ through a {\it pull-back}  from $\A$. 
That is, $\pi: \Phi \to \A$ the canonical projection, I will only consider functional connections of the form\footnote{For $\varpi$ a connection on $\A$, it is straightforward to check that $\tilde \varpi := \pi^*\varpi $ satisfies the two defining properties  of a functional connection over $\Phi$, cf. \eqref{eq:varpi}.} $\tilde \varpi = \pi^*\varpi \in\Omega^1(\Phi,\fG)$. 
This restriction will be important in the following.
Henceforth, $\tilde\varpi\leadsto\varpi$.

Horizontal differentials can also be introduced on $\Phi$. Explicitly, they read
\be
\begin{cases}
\dd_H A = \dd A - \D \varpi,\\
\dd_H E = \dd E - [E, \varpi],
\end{cases}
\qquad\text{and}\qquad
\begin{cases}
\dd_H \psi = \dd \psi +  \varpi\psi,\\
\dd_H \bar\psi = \dd \bar\psi - \bar\psi \varpi.
\end{cases}
\ee

Loosely speaking, horizontal differentials are ``covariant'' differentials in field space.\footnote{On forms that are horizontal and equivariant with respect to some representation $R$, $\dd_H$ can be shown to to act as a covariant derivative $\dd_H = \dd - R(\varpi)$. \label{fnt:equivariant}} Indeed, as it is easy to show horizontal differentials transform always homogeneously under gauge transformations, even field-dependent ones:
\be
\begin{cases}
\bb L_{\xi^\#} \dd_H A = [\xi, \dd_H A] ,\\
\bb L_{\xi^\#} \dd_H E = [\xi, \dd_H E] ,
\end{cases}
\qquad\text{and}\qquad
\begin{cases}
\bb L_{\xi^\#}\dd_H \psi =-\xi \dd_H \psi,\\
\bb L_{\xi^\#}\dd_H \bar\psi =  ( \dd_H \bar\psi)\xi.
\end{cases}
\label{eq:Lphi}
\ee

Finally, from the above it is easy to prove that relationship between $\dd_H^2$ and $\bb F$:
\be
\begin{cases}
\dd_H^2 A = -\D \bb F,\\
\dd_H^2 E =  - [E, \bb F],
\end{cases}
\qquad\text{and}\qquad
\begin{cases}
\dd_H^2 \psi =\bb F \psi,\\
\dd_H^2 \bar\psi =  - \bar\psi \bb F.
\end{cases}
\ee

\paragraph{Flat functional connections, gauge fixings, and dressings}\label{sec:dressing}
There is a relationship between gauge fixings, flat functional connections, and dressings of the charged matter fields \cite[Sect.9]{GoHopfRie} (on dressings, see \cite{Dirac, Dollard, FK, LavelleQCD, Bagan1, Bagan2, FrancoisPhD, Francois2020}).
As explained at the end of paragraph \ref{sec:YMfoliation}, flat connections correspond to integrable horizontal distribution, i.e. to (a family of) global sections of $\A\to \A/\G$, i.e. to a gauge fixing. 
Moreover, flat connections are of the form $\varpi = \kappa^{-1}\dd \kappa$ for some $\kappa=\kappa[A]\in\Omega^0(\A,\G)$ which transforms by right translation under the action of $\G$, i.e. $R_g^*(h)= hg$. Pulling back $\kappa$ from $\A$ to $\Phi$, it is easy to prove that
\be
\dd \hat\psi = \kappa \dd_H\psi
\qquad\text{for}\qquad
\hat \psi := \kappa \psi
\ee
and that $\hat \psi$ is a composite, generally nonlocal,\footnote{Spatially local dressings exist in special circumstances where the gauge symmetry is ``non-substantial'' \cite{FrancoisPhD}. The prototypical example of this is a gauge symmetry introduced via a St\"uckelberg trick. For other examples see e.g. \cite[Sect.s 7 \& 8]{GoHopfRie}.} gauge invariant field (similar equations hold for $\bar\psi$).
E.g. in electromagnetism, the SdW connection is flat and $\hat \psi$ reduces precisely to the Dirac dressing of the electron if $R=\bb R^3$ \cite{GoRieYMdof}.

Therefore, choices of flat connections correspond to choices of dressings for the matter fields, and the horizontal differentials of the matter fields are closely related to the differentials of the associated dressed matter fields.

Its relationship to dressings equips the geometric object $\dd_H$ with an intuitive physical interpretation.\footnote{See \cite[Sect.9]{GoHopfRie} for a generalization of the notion of dressing to the case of non-Abelian connections, which relates this construction to the Vilkovisky--DeWitt geometric effective action.}

\section{Symplectic geometry, the Gauss constraint, and flux superselection}

\paragraph{Off-shell symplectic geometry}

Define the off-shell symplectic potential of YM theory and matter over $R$ to be given by
\be
\theta := \int \sqrt{g}\, \tr( E^i \dd A_i) - \int \sqrt{g} \,\bar \psi \gamma^0 \dd \psi \in \Omega^1(\Phi),
\ee
where $\sqrt{g}$ is the square root of the spatial metric $g_{ij}$ on $R$, and $(\gamma^0, \gamma^i)$ are Dirac's $\gamma$-matrices.\footnote{I follow here Weinberg's conventions \cite{WeinbergQFT1}.} Note that the first term in the equation above ($\theta_\YM$) is the tautological 1-form over $\T^*\A$---this identifies the geometrical meaning of the electric field $E^i$.
Of course this formula for $\theta$ can be derived from the YM Lagrangian.

The above (polarization of the) symplectic potential is invariant under field-{\it in}dependent gauge transformations:
\be
\bb L_{\xi^\#} \theta = 0 \qquad (\dd \xi = 0),
\label{eq:Ltheta=0}
\ee
but it fails to be invariant under field-dependent ones.
Moreover, in the presence of boundaries $\theta$ fails to be horizontal even on-shell of the Gauss constraint ($\approx$): 
\be
\bb i_{\xi^\#}\theta \approx \oint \sqrt{h} \,\tr(E_s \xi),
\ee
where $\sqrt{h}$ is the square root of the determinant of the induced metric on $\pp R$, and  $\oint := \int_{\pp R}\d^{d-1}x$. This formula is the main reason why it has become common lore that gauge transformations that do not trivialize at the boundary must have a different status: their associated charge does not vanish. As proven in section \ref{sec:DFbreaking}, this is e.g. the interpretation embraced by the ``edge mode'' approach of \cite{DonFreid}. But there is an alternative to this conclusion which relies on the introduction of superselection sectors.

To understand what this means, appreciate how this choice will be imposed upon us, and understand how it can be implemented in the non-Abelian case without breaking the gauge symmetry at $\pp R$, it is essential to first devise an alternative to $\theta$ which is manifestly gauge-invariant {\it and} horizontal. 

(For a sketch of the superselection construction in the Abelian case where most subtleties do not arise, see section \ref{sec:overview4}.)

A manifestly gauge-invaraint and horizontal, i.e. {\it basic}, 1-form is obtained by taking the horizontal part of $\theta$. In formulas, 
\be
\theta^H := \theta(\hat H(\cdot)) =  \int \sqrt{g}\, \tr( E^i \dd_H A_i) - \int \sqrt{g}\, \bar \psi \gamma^0 \dd_H \psi \in \Omega^1(\Phi)
\label{eq:thetaH}
\ee
is such that 
\be
\bb L_{\xi^\#} \theta^H = 0
\qquad\text{and}\qquad 
\bb i_{\xi^\#} \theta^H = 0, 
\label{eq:LthetaH=0}
\ee
even for field-dependent gauge transformations (cf. \eqref{eq:Lphi}).

The remaining part of the off-shell symplectic potential, its vertical part $\theta^V := \theta - \theta^H$, is on the other hand given by
\be
\theta^V = \int \sqrt{g}\, \tr(E^i \D_i \varpi + \rho \varpi) =  \int \sqrt{g}\, \tr\big( ( - \D_i E^i  + \rho) \varpi\big) + \oint \sqrt{h}\,\tr( E_s \varpi)
\label{eq:thetaV}
\ee
where I introduced the charge density $\rho:= \sum_\alpha(\bar\psi\gamma^0\tau_\alpha\psi)\tau_\alpha$ for $\{\tau_\alpha\}$ a $\tr$-orthonormal basis of $\Lie(G)$.\footnote{That is, $\rho$ is defined by $\tr(\rho\xi) :=  \bar\psi\gamma^0\xi\psi$ for all $\xi$. Indeed, $\rho$ (just as the electric field and flux) is most naturally understood as valued in the dual of the Lie algebra $\Lie(G)^*$.}
On shell of the Gauss constraint $\theta^V$ becomes supported on the boundary (however, note that $\varpi$ is nonlocal within $R$). This is the part of $\theta$ that is responsible for it not being horizontal.

The off-shell symplectic form is defined by differentiating the off-shell symplectic potential, i.e. $\Omega := \dd \theta$. 
Similarly, I define a horizontal presymplectic 2-form by differentiating the horizontal part of the symplectic potential, $\theta^H$:
\be
\Omega^H := \dd \theta^H.
\ee
Since $\theta^H$ is basic \eqref{eq:LthetaH=0}, and thus gauge invariant, it is not hard to realize that  $\dd_H \theta^H \equiv \dd \theta^H$.\footnote{Cf. footnote \ref{fnt:equivariant}.} 
Moreover, using the Cartan calculus equation $\lbr \dd, \bb L\rbr = 0$, it is immediate to verify from \eqref{eq:Ltheta=0} that $\Omega^H$ is also gauge invariant. In sum, $\Omega^H$ is basic and closed:
\be
\bb i_{\xi^\#} \Omega^H = 0 ,\qquad
\bb L_{\xi^\#} \Omega^H = 0,
\qquad\text{and}\qquad 
\dd \Omega^H = 0.
\label{eq:OmegaBasic}
\ee

Let me observe that $\theta^H$ and $\Omega^H$ depend on the choice of horizontal distribution, i.e. on the choice of functional connection $\varpi$. Moreover, if the horizontal distribution is non-integrable i.e. $\bb F\neq0$, one can show that $\Omega^H \neq \Omega(\hat H(\cdot),\hat H(\cdot))$, and only the former is a closed 2-form.\footnote{Indeed, $\Omega(\hat H(\cdot),\hat H(\cdot)) = \Omega^H + \oint \sqrt{h}\,\tr(f\bb F)$ \cite{GoRieYMdof}.}

A crucial remark: it is ultimately thanks to equation \eqref{eq:Ltheta=0} that it is possible to define a basic and closed presymplectic structure on $\Phi$ in this geometric manner. In this regard, it is relevant to notice that equation \eqref{eq:Ltheta=0} distinguishes YM theory from Chern-Simons and $BF$ theories, where no polarization exists in which the (off-shell) symplectic potential is gauge invariant---this is because in those theories the variables canonically conjugate to the gauge potential do not transform homogeneously under gauge transformations.\footnote{See \cite{Schiavina} for a derivation of the Wess-Zumino-Witten edge-mode theory of Chern-Simons from the failure of the latter to have a gauge-invariant symplectic structure.}

\paragraph{Radiative and Coulombic electric fields}
Before delving into the reduction procedure, it is convenient to further analyze the dof entering $\theta^H$, and in particular its pure-YM contribution.
In the previous sections I have introduced two decompositions: one for the tangent space $\T\A = H\oplus V$, and a dual one for the 1-form $\theta = \theta^H + \theta^V\in\T^*\A$. 
It is instructive to express the latter decomposition in coordinates, i.e. to define $E=E_\rad + E_\Coul$ so that 
\be
\theta^H_\YM = \int \sqrt{g}\,\tr(E_\rad^i \dd A_i)
\qquad\text{and}\qquad
\theta^V_\YM = \int \sqrt{g}\,\tr(E_\Coul^i \dd A_i).
\ee
Notice that, contrary to \eqref{eq:thetaH} and \eqref{eq:thetaV}, in these formulas the burden of the horizontal and vertical projections is not carried by $\dd_HA$ and $\varpi$, but instead by the functional properties of $E_\rad$ and $E_\Coul$.

Thus, from the verticality condition $\theta^V(\hat H(\bb X)) \equiv 0$ for all $\bb X$, one obtains the following $\varpi$-dependent conditions for $E_\Coul$ (hereafter, $\varpi(\bb X) \equiv \bb i_{\bb X}\varpi$): 
\be
\int\sqrt{g} \,\tr\big( E_\Coul^i (X_i - \D_i \varpi(\bb X)) \big) = 0 \qquad \forall \bb X = \int X_i \frac{\delta}{\delta A_i}.
\label{eq:Ecoul}
\ee
Conversely, from the horizontality condition $\theta^H(\xi^\#) = \int \sqrt{g}\,\tr( E^i \D_i \xi) \equiv 0$ for all $\xi$, one obtains through a simple integration by parts the following universal conditions for $E_\rad$:
\be
\begin{cases}
\D_i E^i_\rad = 0 & \text{in }R,\\
s_i E^i_\rad = 0 & \text{at }\pp R.
\end{cases}
\label{eq:Erad}
\ee
Therefore, $E_\rad$ generalizes a transverse electric field to the non-Abelian and bounded case; this is why I call it ``radiative.'' 
Independently of the choice of connection, the radiative electric field contains two independent dof, whereas the Coulombic electric field contains one.

Finally, note that the YM dof in $\theta$ organize as follows: the radiative electric field is paired with the horizontal modes of $A$ in $\theta^H$, whereas the Coulombic electric field is paired with $\varpi$ in $\theta^V$. The horizontal matter modes appearing in $\theta^H$ do not satisfy an independent condition of their own; they can be understood as ``dressed matter fields'' \cite{GoRiePRD,GoHopfRie,GoRieYMdof}.

\paragraph{An example: the SdW case}

The reader will have noticed the analogy between \eqref{eq:SdW-hor} and \eqref{eq:Erad} defining the SdW-horizontal variations and the radiative electric field, respectively.
Indeed, the SdW choice of connection is the only one for which the velocity-momentum relation reads as a simple horizontal projection, i.e. $\bb E_\rad= \hat H_{\bb G}(\bbv)$ where $\bb E_\rad = \int g_{ij} E_\rad^i \frac{\delta}{\delta A_j}$ and $\bbv = \int \dot A_i \frac{\delta}{\delta A_i}$.

This parallel between SdW-horizontal variations and radiative electric fields, has an analogue in the complementary sectors which comprise vertical (i.e. pure-gauge) variations and the Coulombic electric field. Indeed, equation \eqref{eq:Ecoul} for $\varpi=\varpi_\sdw$ dictates that the SdW-Coulombic electric field is of the form 
\be
E_\Coul^i = g^{ij}\D_j \varphi \qquad (\text{SdW}),
\ee
for some $\fG$-valued scalar $\varphi$ I will call the SdW Coulombic potential.

\paragraph{The Gauss constraint and superselection sectors}

Note that, in the absence of boundaries, the first of the equations \eqref{eq:Erad} indicates that $E_\rad$ does {\it not} take part in the Gauss constraint.
This not only justifies the name ``Coulombic'' for $E_\Coul = E-E_\rad $, but most importantly prompts the following definition.

As I have argued earlier, in the presence of boundaries the Gauss constraint needs to be complemented by a boundary condition.
The question is which boundary condition is the most natural.
This is where the second of the equations \eqref{eq:Erad} plays a crucial role:
demanding that  $E_\rad$ {\it completely} drops from the Gauss constraint---i.e. not only from its bulk term, but also from its boundary condition,---one is led to define the Gauss constraint in bounded regions as a boundary value problem {\it labelled} by the value of the electric flux, $f:= s_i E^i_{|\pp R}$:
\be
\GC^f:\quad 
\begin{cases}
\D_i E^i - \rho =0 & \text{in }R,\\
s_i E^i - f = 0 & \text{at }\pp R.
\end{cases}
\ee
Note that \eqref{eq:Erad} for $E_\rad$ is canonical and therefore so is the above extension of the Gauss constraint: i.e. neither equation depends on the choice of $\varpi$.

Introducing the radiative/Coulombic split of the electric field, $E=E_\rad+E_\Coul$, from \eqref{eq:Erad}  it follows that the burden of satisfying the Gauss constraint $\GC^f$ falls completely on $E_\Coul$:
\be
\GC^f:\quad 
\begin{cases}
\D_i E^i_\Coul - \rho=0 & \text{in }R,\\
s_i E^i_\Coul - f =0 & \text{at }\pp R.
\end{cases}
\label{eq:GaussCoul}
\ee
In Appendix \ref{app:UniqEcoul} I prove that the above boundary value problem uniquely fixes a $E_\Coul$ as a function of $(A,\rho,f)$.\footnote{Uniqueness of the solution holds only at irreducible configurations. Cf. footnotes  \ref{fnt:reducible} and \ref{fnt:reducible2}.}
However, note that since the radiative/Coulombic {\it split} of $E$ depends on a choice of functional connection $\varpi$, so do the solutions of \eqref{eq:GaussCoul}. Of course, all such solutions differ by a radiative contribution.

That $f$ is a datum completely independent from the values of $A$, $E_\rad$, $\psi$ and $\bar\psi$, as well as from the geometry of $R$,  is particularly clear for the SdW choice of connection, where $\GC^f$ is yet another SdW boundary value problem which can always be inverted, no matter what the values of $\rho$ and $f$ are.\footnote{This statement holds true at irreducible configurations, which constitute a dense subset of configurations in non-Abelian YM theory. At reducible configurations, however, there are integral relations between these quantity in a number bounded by $\dim(G)$. In electromagnetism this is the integral Gauss law, $\int \sqrt{g} \rho = \oint \sqrt{h} f$. See also footnote \ref{fnt:Gauss} \label{fnt:reducible2}} 
It is possible to show that the same holds true for any choice of connection by building on the SdW case, see Appendix \ref{app:UniqEcoul}.

Configurations in  $\Phi_\GC := \bigcup_f\{\GC^f=0\}$ define the on-shell\footnote{In this article I ignore equations of motion. Thus ``on-shell'' means ``on-shell of the Gauss constraint'' only.} subspace of $\Phi$. 
However, since $f$ is independent of any physical quantity contained in $R$, it is meaningful to further restrict attention from the set of on-shell configurations to a given superselection sector labelled by $f$. 
In order not to break (boundary) gauge-invariance down to gauge transformations that stabilize $f$, I introduce the {\it covariant superselection sector} (CSSS) $\Phi^{[f]}$:
\be
\Phi^{[f]} := \bigcup_{f\in[f]} \{ \phi\in\Phi | \GC^f = 0\},
\ee
where $[f] := \{ f' \in \mathcal C^\infty(\pp R, \fG) \text{ such that } \exists g\in\G \text{ for which } f' = g_{|\pp R}^{-1} f g_{|\pp R}\}$. Since gauge transformations act on the whole region $R$, this conjugacy class by definition does not include boundary gauge transformations that are not connected to the identity.\footnote{For an elementary discussion, see \cite{GoRieQCD}. \label{fnt:large}}

The CSSS $\Phi^{[f]}$ is the arena in which the rest of this article will unfold.

\section{Reduced symplectic structure in a CSSS}

In this section I will construct the reduced symplectic structure in a CSSS.
Before addressing the general case, I will quickly sketch the construction in the simpler Abelian case, highlighting the main difficulties characterizing its non-Abelian counterpart.

\paragraph{Overview of the Abelian case}\label{sec:overview4}
If $G$ is Abelian, the electric field is gauge invariant. Hence, a CSSS comprises one single flux, $[f]=\{f\}$, and thus reduces to the usual notion of superselection sector (SSS).

In the rest of this overview, equality within the SSS $\Phi^{\{f\}}$ will be denoted by ``$\approx$.''\footnote{More precisely, $\approx$ stands for equality upon pullback to $\Phi^{\{f\}}$ understood as a sub-bundle of $\Phi$.}

Being basic, $\Omega^H$ can be projected down to the reduced phase space.
In the Abelian case and within a given SSS $\Phi^{\{f\}}$, the projection of $\Omega^H$ on the reduced SSS $\Phi^{\{f\}}/\G$  defines a 2-form which is not only closed but also non-degenerate. Therefore this construction equips the reduced SSS with a symplectic structure.
But this symplectic structure has a drawback: it depends on the chosen notion of horizontality, i.e. on $\varpi$. 

This dependence can be corrected by adding to $\Omega^H$ the term $\oint \sqrt{h}\tr(f \bb F)$.
To see why this term is a viable addition to $\Omega^H$, observe that it is not only manifestly basic, but also exact and therefore closed. This follows from two facts. On the one hand  the Abelian functional curvature is exact $ \bb F = \dd \varpi$ \eqref{eq:FF}. On the other hand the Abelian superselection condition $[f]=\{f\}$ means that  $\dd f \approx 0$. Taken together, these two facts imply that in the given SSS, $\oint \sqrt{h}\tr(f \bb F) \approx \dd \oint \sqrt{h}\tr(f \varpi)$. 
One can then check that not only $\Omega^H$ but also $\Omega^H +\oint \sqrt{h}\tr(f \bb F)$ defines, after projection, a symplectic structure on $\Phi^{\{f\}}/\G$.

Now, in the SSS $\Phi^{\{f\}}$, the 2-form $\Omega^H + \oint \sqrt{h}\tr(f \bb F)$ equals $\Omega$. Indeed, using \eqref{eq:thetaV} and the fact that in the Abelian SSS $\dd f \approx 0$, one has
\be
\Omega = \dd \theta^H + \dd \theta^V =\dd \theta^H + \dd \Big( - \int \sqrt{g}\, \GC \varpi + \oint \sqrt{h} \, f\varpi \,\Big) \approx \Omega^H + \oint \sqrt{h}\tr(f \bb F)
\qquad (\text{Abelian}),
\ee
where for brevity I denoted the bulk part of the Gauss constraint by $\GC := \D_iE^i - \rho$.
Therefore $\Omega$ itself is basic within the Abelian SSS $\Phi^{\{f\}}$. This can be checked explicitly and relies on the  Gauss constraint $\GC^f \approx 0$, the superselection condition $\dd f \approx 0$, and the horizontality of the curvature $\bb F = \dd \varpi$:
\be
\begin{dcases}
\bb i_{\xi^\#}\Omega = - \int \sqrt{g} \, \xi\dd \GC + \oint \sqrt{h} \,\xi \dd f \approx 0, \\
\bb L_{\xi^\#} \Omega = \dd \bb L_{\xi^\#} \theta = \dd \Big( - \int \sqrt{g}\, \GC \dd\xi + \oint \sqrt{h}\, f \dd \xi\, \Big) \approx 0 
\end{dcases}
\qquad (\text{Abelian}).
\ee

The first of these equations also shows that in a SSS the Hamiltonian charge with respect to $\Omega$ of any gauge transformation $\xi^\#$ (even field dependent ones!) vanishes\footnote{Up to a field-space constant} identically, even though $\bb i_{\xi^\#}\theta \not\approx 0 $ and {\it also} $\dd (\bb i_{\xi^\#}\theta )\not\approx 0$ if $\dd\xi\neq0$.

Since $\Omega$ is manifestly independent from $\varpi$, one concludes from the above equations that in the Abelian case it provides a {\it canonical} symplectic structure on the reduced SSS $\Phi^{\{f\}}/\G$, which is independent of the chosen notion of horizontality. 

In the non-Abelian theory, matters are more complicated.
There, not only $\dd \varpi$ fails to be horizontal, $\bb F$ to be closed, and $[f]$ to be constituted by a single point (so that in a non-Abelian {\it Covariant}-SSS $\dd f \not\approx 0$); but also $\Omega$ fails to be basic in a CSSS, and $\Omega^H$ fails to define a non-degenerate 2-form on the reduced CSSS. 
Therefore, any naive generalization of the above construction would fail in the non-Abelian theory.

The goal of the following discussion is to resolve these difficulties and provide a proper non-Abelian generalization.
To achieve this goal, I will start by investigating the last of the differences listed above, i.e. why $\Omega^H$ fails to define a non-degenerate 2-form on the reduced phase space. 
Understanding this question will lead to the definition of flux rotations among other insights, and eventually to the sought construction of a symplectic structure on the reduced CSSS $\Phi^{[f]}/\G$ which is completely canonical.

\paragraph{The projection of $\Omega^H$ in a CSSS}

Since the 2-form $\Omega^H$ is basic, it can be unambiguously projected down to the reduced phase space, and more specifically down to the reduced CSSS $\Phi^{[f]}/\G$.

The CSSS $\Phi^{[f]}$ is a submanifold of $\Phi$ foliated by the action of $\G$. Denote $\tilde\pi: \Phi^{[f]} \to \Phi^{[f]}/\G$ the projection on the space of gauge orbits and $\iota: \Phi^{[f]} \hookrightarrow \Phi$ the natural embedding.  Note that $\iota$ acts as the identity map between the gauge orbits in $\Phi^{[f]}$ and $\Phi$. Note also that pulling back by $\iota^*$ means going on shell of the Gauss constraint (within a given CSSS). 

Thus, the 2-form $ \Omega^\text{red}_\varpi\in\Omega^2(\Phi^{[f]}/\G)$ stemming from the projection of $\iota^*\Omega^H$ is defined by the relation
\be
\tilde\pi^* \Omega^\text{red}_\varpi := \iota^*\Omega^H.
\label{eq:Omegared}
\ee
Once again, this definitioin is unambiguous because $\Omega^H$ and thus $\iota^*\Omega^H$ are basic.

Moreover, since $\dd$ commutes with the pullback and since $\tilde\pi$ is surjective, one deduces from  $\dd\Omega^H=0$ that also $\Omega^\text{red}_\varpi$ is closed:
\be
\dd \Omega^\text{red}_\varpi = 0.
\ee

Since $\Omega^\text{red}_\varpi$ is closed, it defines a symplectic form on $\Phi^{[f]}/\G$ if and only if it is {\it non-degenerate}. 
In turn, being defined through the projection $\tilde\pi: \Phi^{[f]}\to \Phi^{[f]}/\G$, the 2-form $\Omega^\text{red}_\varpi$ is non-degenerate if and only if the kernel of $\iota^*\Omega^H$ coincides with the space spanned by pure-gauge transformations, i.e. with $V^{[f]} = \T\big(\bigcup_{\phi\in\Phi^{[f]}} \mathcal O_\phi\big)$, where $ \mathcal O_\phi $ is the gauge orbit of $\phi$. 
However, $\ker(\iota^*\Omega^H)$ does not generally coincide with $V^{[f]}$, unless $G$ is Abelian.

\paragraph{The kernel of $\Omega^H$}

To see why $\ker(\iota^*\Omega^H)$ does not generally coinicide with $V^{[f]}$, and to compute it in the general case, one needs an explicit formula for $\Omega^H = \dd \theta^H = \dd_H\theta^H$:
\be
\Omega^H = \int \sqrt{g}\, \tr\big( \dd_H E_\rad^i \curlywedge \dd_H A_i\big) - \int \sqrt{g}\,\big(\dd_H \bar \psi \gamma^0 \curlywedge \dd_H\psi + \tr(\rho \bb F) \big) \in \Omega^2(\Phi).
\label{eq:OmegaH}
\ee
Observe that only the radiative degrees of freedom and the ``dressed'' matter fields enter $\Omega^H$, whereas {\it no} component of the Coulombic electric field enters this formula.

Now, consider a vector $\bb X\in\T\Phi^{[f]}$, and denote  $\eta = \varpi(\bb X)$ its vertical part. Notice that, since $\varpi$ is defined by pullback from $\A$, $\eta$ is a function(al) of $\bb X(A)$ only.
Then, computing $ (\iota^*\Omega^H)(\bb X)$ it is easy to see that $\bb X\in\ker(\iota^*\Omega^H)$  if and only if $\bb X(\bullet) = \eta^\# (\bullet)$ for $\bullet \in \{A, E_\rad, \psi, \bar \psi\}$, here seen as (coordinate) functions on $\Phi$. Since these conditions do not constrain the action of $\bb X$ on $E_\Coul$, this still leaves open the possibility that $\bb X$ is not purely vertical: i.e. it can still be that $\bb X (E_\Coul) \neq \eta^\# (E_\Coul)$ and therefore $\bb X \neq \eta^\#$. 

However, since $E_\Coul$ is fixed by the Gauss constraint \eqref{eq:GaussCoul}, and $\bb X$ is assumed to preserve that constraint, the quantity $\bb X (E_\Coul)$ can be determined by deriving equation \eqref{eq:GaussCoul} along  $\bb X$.
Denoting the action of $\bb X$ on $f$ by  $ \bb X(f) = - [f, \zeta'_\pp]$ for some field-dependent $ \zeta'_\pp$ (recall that $\bb X\in\T\Phi^{[f]}$ and therefore preserves the CSSS), one can show that $\bb X\in\ker(\iota^*\Omega^H)$ if and only if $\bb X = \eta^\# + \bb Y_{\zeta_\pp}$, where $\bb Y_{\zeta_\pp}$ is defined by:
\be
\bb Y_{\zeta_\pp}(\bullet) = 0 \quad\text{for }\bullet \in \{A, E_\rad, \psi, \bar \psi\},
\quad \text{and} \quad
\begin{cases}
\D_i  \bb Y_{\zeta_\pp} (E_\Coul^i) = 0 &\text{in }R,\\
s_i  \bb Y_{\zeta_\pp}(E_\Coul^i) = - [f, \zeta_\pp]&\text{at }\pp R,\\
\int  \sqrt{g} \,\tr\big(  \bb Y_{\zeta_\pp} (E_\Coul^i)  \dd_H A_i\big) = 0,
\end{cases}
\label{eq:Y}
\ee
for $\zeta_\pp = \zeta'_\pp - \eta_{|\pp R}$.
Notice that the last equation just states that $\bb Y_{\zeta_\pp}(E_\Coul)$ is itself Coulombic. Therefore these equations have a unique solution for the very same reason that the Gauss constraint \eqref{eq:GaussCoul} does.

E.g. in the SdW case, that last equation states that $\bb Y_{\zeta_\pp}(E_\Coul) = \D \zeta$ for some $\fG$-valued function $\zeta$. Using this fact, from the remaining equations one deduces that
\be
\bb Y_{\zeta_\pp}^{(\sdw)} = \int \zeta \frac{\delta}{\delta \varphi}
\qquad\text{for}\qquad
\begin{cases}
\D^2 \zeta = 0 & \text{in }R,\\
\D_s \zeta =   - [f, \zeta_{\pp}]& \text{at }\pp R
\end{cases}
\qquad\text{(SdW)}.
\label{eq:Ysdw}
\ee

I will call {\it flux rotations} vectors of the form \eqref{eq:Y}; I will denote the space spanned by these vectors $ Y^{[f]}\subset \T\Phi^{[f]}$.
I have thus argued that the kernel of $\iota^*\Omega^H$ is composed of vertical vectors {\it and} flux rotations, i.e.
\be
\ker( \iota^*\Omega^H) = V^{[f]}\oplus  Y^{[f]}.
\label{eq:kerOmegaH}
\ee
Notice that since $\bb Y_{\zeta_\pp}(A) = 0$, flux rotations are {\it horizontal}, i.e. $\varpi(\bb Y_{\zeta_\pp})=0$ (once again, on $\Phi$, the connection $\varpi$ is defined by pullback from $\A$). 

A more thorough proof of these statements can be found in Appendix \ref{app:fluxrot}.

Henceforth, with a slight abuse of notation, I will also call ``flux rotations'' vector fields which arise as sections of $Y^{[f]}\subset \T\Phi^{[f]}$; I will denote these vector fields $\bb Y_{\zeta_\pp}$ as well, and their space 
\be 
\mathcal Y^{[f]} := \Gamma(\Phi,Y^{[f]})\subset \mathfrak X^1(\Phi^{[f]}).
\ee
 Notice that this definition allows the parameter $\zeta_\pp$ in the vector field $\bb Y_{\zeta_\pp}$ to be field-dependent itself.

\paragraph{Flux rotations are physical transformations}\label{sec:fluxrot}

Flux rotations $\bb Y_{\zeta_\pp}$ affect the electric flux $f$ precisely as a gauge transformation $\xi$ with $\xi_{|\pp R}= -\zeta_\pp$ would.
However, since they leave all other field components invariant, flux rotations are {\it not} gauge transformations. 
In fact, through the Gauss constraint (which is by definition imposed in a CSSS), they alter the bulk Coulombic field $E_\Coul$ and thus physical observables such as the total YM energy in $R$. This establishes the physical nature of flux rotations. 

Begin physical, flux rotations must survive the projection onto the reduced CSSS $\Phi^{[f]}/\G$.
However, as vector fields not all flux rotations are projectable.
Flux rotations which are projectable must be gauge invariant, i.e. $\bb Y_{\zeta_\pp}$ is projectable if and only if $\lbr \bb Y_{\zeta_\pp}, \xi^\#\rbr = 0$ for all $\dd \xi = 0$. It is easy to verify that this condition holds if and only if the parameter $\zeta_\pp$ transforms covariantly, i.e. if and only if
\be
\bb L_{\xi^\#} \zeta_\pp = [\zeta_\pp, \xi_{|\pp R}].
\ee
I will denote the set of covariant flux rotations $\mathcal Y^{[f]}_\text{cov}$.

Let me emphasize that the definition of flux rotations depends on the choice of $\varpi$, i.e. $\bb Y_{\zeta_\pp}\equiv\bb Y_{\zeta_\pp}^{(\varpi)}$: any choice of $\varpi$ produces a distinct $\bb Y_{\zeta_\pp}^{(\varpi)}$ corresponding to a distinct physical transformation in the same phase space; each of these transformations affect slightly different components of the electric field while preserving the validity of the Gauss constraint. Choosing the SdW radiative/Coulombic split of the electric field, $E=E_{\rad}^{(\sdw)}+\D\varphi$ as providing a natural choice of coordinates over field space, one finds that whereas the SdW Coulombic potential of $\varphi$ depends only on the parameter $\zeta_\pp$, the change of the SdW-radiative part of $E$ is $\varpi$-dependent.
That is, whereas $\bb Y^{(\varpi)}_{\zeta_\pp}(\varphi)=\bb Y^{(\sdw)}_{\zeta_\pp}(\varphi)$ as in \eqref{eq:Ysdw} for any choice of $\varpi$, one finds that $\bb Y^{(\varpi)}_{\zeta_\pp}(E^{(\sdw)}_{\rad})=0$ if and only if $\varpi=\varpi_\sdw$.
Therefore, the transformations $\bb Y_{\zeta_\pp}^{(\varpi)}$ for different choices of $\varpi$ are all equally physical, but distinct from each other.

\paragraph{The kernel of $\Omega^\text{red}_\varpi$}\label{sec:AbResult}

From \eqref{eq:Omegared}, \eqref{eq:kerOmegaH}, and the previous discussion on flux rotations, one concludes that 
\be
 \ker(\Omega^\text{red}_\varpi) = \tilde\pi_*Y^{[f]}.
\ee

Therefore,  $\Omega^\text{red}_\varpi$ is non-degenerate if and only if $Y^{[f]}$ is trivial.
Inspection of \eqref{eq:Y}, shows that this is the case if $G$ is Abelian or if the flux is trivial (either because $f=0$ or because $\pp R=\emptyset$).
Therefore if $G$ is Abelian, or $f$ is trivial, the reduced space $(\Phi^{[f]}/\G, \Omega^\text{red}_\varpi)$ is symplectic. In these cases, the reduction procedure could be considered complete.
However, unless $f$ is trivial, the resulting symplectic structure fails to be canonical: it depends on the {\it choice} of $\varpi$.

In the non-Abelian theory, if boundaries are present, $\Omega^\text{red}_\varpi$ even fails to provide a symplectic structure for the reduce: $\Omega^\text{red}_\varpi$ is degenerate with flux rotations constituting its nontrivial  kernel.

\paragraph{Completing $\Omega^\text{red}_\varpi$}\label{sec:completing}

In the non-Abelian case, the issue with $\Omega^\text{red}_\varpi$ is that, although there are different $f$'s in $[f]$ each potentially associated with a different Coulombic field, the candidate symplectic 2-form $\Omega^\text{red}_\varpi$ cannot tell them apart from each other. 
This leaves flux rotations as degenerate directions of $\Omega^\text{red}_\varpi$. 
To correct this issue, I propose to complete $\Omega^\text{red}_\varpi$ by adding to it a symplectic form on $[f]$, the space of fluxes belonging to a given CSSS.

The remarkable aspect is that this completion, despite {\it not} being provided by the off-shell symplectic structure  $\Omega$, can still be chosen canonically.
This canonical choice is provided by the Kirillov--Konstant--Souriau (KKS) construction of the homogeneous symplectic structure over a (co)adjoint orbit (see e.g. \cite{YKS}).
Although this symplectic structure fails to be $\varpi$-horizontal, I will show this issue can be easily rectified.

Regarding the fact that the KKS symplectic structure is canonical on {\it co}adjoint orbits, note that electric fields---and thus fluxes---are dual to the gauge potential $A$ and therefore are best understood as valued in the dual $\Lie(G)^*$ (as a vector space). The identification is performed via the Killing form, e.g. $f \leftrightarrow f^* = \tr(f\,\cdot)$.
Although I will not use this notation in the following, this observation further justifies the naturalness of the use of the KKS construction.

\paragraph{Review of the KKS symplectic structure on $[f]$}

To provide an explicit formula for the KKS symplectic form on $[f]$, let me first choose a reference flux $f_o\in[f]$ and thus over-parametrize $[f]$ by group-valued variables $u\in \G_{|\pp R}$:
\be
f = u f_o u^{-1}.
\label{eq:gfg}
\ee
From this formula it follows that
\be
[f] \cong  \G_{|\pp R}/ \G_{|\pp R}^{o}
\ee
as a right-quotient, where $ \G_{|\pp R}^{o}\subset  \G_{|\pp R}$ is the subset of $\G_{|\pp R}$ which stabilizes $f_o$.
In other words, right translations of $u$ by a stabilizer transformations $g_o\in\G_{|\pp R}^{o}$, $u\mapsto ug_o$, end up stabilizing the reference $f_o$ and therefore have no effect on $f$. Therefore, these transformations are a redundancy in the description of $[f]$ in terms of the group-valued $u\in\G_{|\pp R}$: the goal is to eventually quotient them away.

To be clear, the notation $\G_{|\pp R}$ stands for $\G_{|\pp R} = \{ u \in C^\infty(\pp R, G) \text{ such that }\exists g\in\G \text{ for which } u = g_{|\pp R} \}$, that is the group $\G_{|\pp R}$ coincides with the subset of boundary gauge transformations $C^\infty(\pp R, G)$ which are connected to the identity. In other words $[f]$ is the {\it connected} part of the adjoint orbit of $f$ by the adjoint action of $\G^\pp = \mathcal C^{\infty}(\pp R,G)$.\footnote{Cf. footnote \ref{fnt:large}.} 

Let me stress that $f_o\in\Lie(\G_{|\pp R})$ is a mere reference: all flux dof are completely---and redundantly---encoded in the group elements $u\in\G_{|\pp R}$. Hence, $\dd f_o \equiv 0$ {\it always}.
 
To explicitly construct the (otherwise canonically given) KKS symplectic structure on $[f]$, I will first endow $\G_{|\pp R}$ with a presymplectic 2-form which I will then project down to a symplectic 2-form on $ \G_{|\pp R}^{o}/  \G_{|\pp R}\cong[f]$.
With this goal in mind, it is useful to consider $\G_{|\pp R}$ as a field-space for the group value fields $u$ which is foliated by the right action of the stabilizer transformations $g_o \in \G_{|\pp R}^{o}$, and to define the projection $\pi_o$ onto the associated space of orbits $\G_{|\pp R}/ \G_{|\pp R}^{o} \cong [f]$:
\be
\pi_o : \G_{|\pp R} \to \G_{|\pp R}/ \G_{|\pp R}^{o} \cong [f], \qquad u \mapsto f := u f_o u^{-1}.
\ee

Thus, on $\G_{|\pp R}$, define the presymplectic potential $\vartheta^{[f]} := \oint \sqrt{h}\,\tr( f_o u^{-1}\dd u)$ and the corresponding presymplectic form
\be
\omega^{[f]} := \dd \vartheta^{[f]}  = - \oint \sqrt{h}\,\tr\big( \tfrac12 f_o [ u^{-1}\dd u \stackrel{\curlywedge}{,} u^{-1}\dd u] \big).
\label{eq:omega-f}
\ee
This 2-form is basic with respect to the stabilizer transformations of the reference flux $f_o$, 
 i.e. it is basic in $(\G_{|\pp R}, \pi_o)$.\footnote{On the contrary, the presymplectic potential $\vartheta^{[f]}$ {\it fails} to be basic in $(\G_{|\pp R}, \pi_o)$.\label{fnt:vartheta}}
Therefore, $\omega^{[f]}$ can be projected down to $ \G_{|\pp R}/ \G_{|\pp R}^{o} $ along $\pi_o$.
This projection defines the 2-form $\omega^{[f]}_\text{KKS}\in \Omega^2(\G_{|\pp R}/ \G_{|\pp R}^{o})$ through the relation
\be
\pi_o^*\omega^{[f]}_\text{KKS}: = \omega^{[f]}.
\label{eq:omegaKKS}
\ee
The 2-form $\omega^{[f]}_\text{KKS}$ is non degenerate, and therefore symplectic.
Moreover, although the construction of the bundle $(\G_{|\pp R}, \pi_o)$ depends on the choice of reference $f_o$, $\omega^{[f]}_\text{KKS}$ is homogeneous over $[f] \cong \G_{|\pp R}/ \G_{|\pp R}^{o}$, and thus independent of the choice of the reference $f_o$.
The 2-form $\omega^{[f]}_\text{KKS}$ is precisely the canonical KKS symplectic form on $[f]$.\footnote{E.g. in the finite dimensional example $\G_{|\pp R} \leadsto \SU(2)$, the (co)adjoint orbit of a non-vanishing element of the (dual of the) Lie algebra $\Lie(\SU(2))$ is a 2-sphere, whereas the corresponding KKS symplectic form is, up to a scale, the area element of the round 2-sphere.}

If $G$ is Abelian, the electric field is gauge invariant and $[f]=\{f_o\}$ reduces to one (functional) point that cannot support a 2-form. Consistently, the presymplectic 2-form $\omega^{[f]}$ vanishes in this case, and therefore so does $\omega_\text{KKS}^{[f]}$.
Interestingly, however, the presymplectic potentail $\vartheta^{[f]}$  {\it fails} to vanish, even in the Abelian case---a fact that will play a role later.

Thinking of $[f]$ as embedded in  the CSSS $\Phi^{[f]}$, one can pull-back $\omega^{[f]}_\text{KKS}$ along this embedding from $[f]$ to $\Phi^{[f]}$ thus giving a 2-form that I will denote by the same symbol. 

But seen as 2-form on $\Phi^{[f]}$, $\omega^{[f]}_\text{KKS}$ fails to basic with respect to the action of {\it gauge} transformations on $\Phi^{[f]}$. This menas that, as it is, $\omega^{[f]}_\text{KKS}$ cannot be projected down to the reduced CSSS $\Phi^{[f]}/\G$ and therefore cannot be used to complete $\Omega^\text{red}_\varpi$ to a symplectic form. 
This issue can be solved by defining a gauge-horizontal version of $\omega^{[f]}_\text{KKS}$.

\paragraph{A gauge-horizontal KKS 2-form}\label{sec:KKS}
In view of the over-parametrization of $[f]\cong  \G_{|\pp R}/ \G_{|\pp R}^{o}$ by $u \in \G_{|\pp R}$, I will {\it temporarily} extend the field space $\Phi^{[f]} $ to $\Phi^{[f]}_\text{ext}$ by replacing the flux dof $f\in[f_o]$ with the group-valued dof $u \in \G_{|\pp R}$. 
This extension parallels the construction of the previous paragraph, from which I will borrow the notation.

Thus, consider a fibre bundle which has  $\Phi^{[f]} $ as a base manifold and $\G_{|\pp R}^o$ as a fibre:
\be
\pi_o : \Phi^{[f]}_\text{ext} \to \Phi^{[f]} =\Phi^{[f]}_\text{ext}/\G_{|\pp R}^o , \quad( A, E_\rad, \psi, \bar \psi , u ) \mapsto (A, E_\rad, \psi, \bar \psi , f = u f_o u^{-1} ).
\ee
The fibre-generating symmetries on $(\Phi^{[f]}_\text{ext},\pi_o)$ are given by the stabilizer transformations of $f_o$, which act on   $u$ from the right while leaving all other fields invariant: $( A, E_\rad, \psi, \bar \psi , u )$ $\mapsto (A, E_\rad, \psi, \bar \psi , u g_o)$. Therefore, an infinitesimal stabilizer transformation $\sigma_o \in \Lie(\G_{|\pp R}^o)$ defines a vector field on $\Phi^{[f]}_\text{ext}$. In analogy with the map $\cdot^\#$, introduce
\be
\cdot^\S: \Lie(\G_{|\pp R}^o)\to \mathfrak X^1(\Phi^{[f]}_\text{ext}),
\qquad
\sigma_o\mapsto \sigma_o^\S
\ee
so that (here, as above, $\bullet \in \{ A, E_\rad, \psi,\bar\psi\}$)
\be
\sigma_o^\S(\bullet) := 0 
\qquad\text{and}\qquad
{\sigma_o^\S}(u) := u \sigma_o.
\ee

It is straightforward to extend the action of gauge symmetries and flux rotations from $\Phi^{[f]} $ to $\Phi^{[f]}_\text{ext}$.
Indeed, it is enough to prescribe their action on the $u$'s so that the ensuing action on $f$ is the known one.
For this, notice that gauge transformation and flux rotations act identically on $f$ (up to a sign), that is  $\bb L_{\xi^\#}f = [f,\xi_{|\pp R}]$ and $\bb L_{\bb Y_{\zeta_\pp}} f = - [f,\zeta_\pp]$ respectively.\footnote{Recall,  gauge transformations and flux rotations have (very) distinct actions on the other fields in $\Phi^{[f]}$.} From these, it is natural to prescribe that both gauge transformations and flux rotations act on $u$ from the left:
\be
\bb L_{\xi^\#} u = -\xi_{|\pp} u
\qquad\text{and}\qquad
\bb L_{\bb Y_{\zeta_\pp}} u = \zeta_\pp u.
\label{eq:Yg}
\ee
Hence, $\Phi^{[f]}_\text{ext}$ is also foliated by gauge transformations.

Combining the action of gauge transformations $\G$ and stabilizer transformations $\G^o_{|\pp R}$, one obtains an action of the group $\G_\text{ext} :=\G\times\G^o_{|\pp R} $ on $\Phi^{[f]}_\text{ext}$ and a projection $\Pi$ to the space of the $\G_\text{ext}$-orbits in $\Phi^{[f]}_\text{ext}$---which is nothing else than the reduced CSSS. In formulas:
\begin{align}
(\G\times \G^o_{|\pp R})\times \Phi^{[f]}_\text{ext} &\to  \Phi^{[f]}_\text{ext}\notag\\
\big((g,g_o) , (\bullet\,, u)\big)\;\;&\mapsto (\bullet, u)^{(g,g_o)} =  (\bullet^g\,, g^{-1}_{|\pp R}u g_o)
\end{align}
and
\be
\Pi : \Phi^{[f]}_\text{ext} \to \Phi^{[f]}_\text{ext}/(\G\times \G^o_{|\pp R}) = \Phi^{[f]}/\G,
\ee
where $\bullet = (A\,,E_\rad\,,\psi\,,\bar\psi)$ with the usual $A^g = g^{-1} A g + g^{-1}\d g$, $E^g_\rad = g^{-1} E_\rad g$, $\psi^g = g^{-1}\psi$ and $\bar\psi^g = \bar\psi g$.

The space of $\Pi$-vertical vectors in $\T\Phi^{[f]}_\text{ext}$---defined as the space vectors tangent to the orbits of $\G_\text{ext}$---is thus given by:
\be
V_\text{ext} = \mathrm{Span}\big\{ (\xi^\#,\sigma_o^\S) \big\}.
\ee
Using the functional connection $\varpi$ over $(\Phi^{[f]},\tilde\pi)$, one defines through a pull-back by $\pi_o^*$ a functional connection over  $(\Phi^{[f]}_\text{ext} ,\Pi)$---which I will still denote $\varpi$.
Because of the gauge-transformation property of $u$ \eqref{eq:Yg}, one has
\be 
\dd_H u = \dd u + \varpi_{|\pp R}u.
\ee

With these tools and notations, one can finally define on $\Phi^{[f]}_\text{ext}$ the horizontal version of the KKS potential $\vartheta^{[f]}$, that is:
\be
\vartheta^{H,[f]}_\text{ext} := \oint \sqrt{h}\, \tr( f_o u^{-1}\dd_H u)\in\Omega^1(\Phi^{[f]}_\text{ext}).
\label{eq:varthetaH}
\ee
This 1-form is basic with respect to the action of gauge transformations.\footnote{However, $\vartheta^{H,[f]}_\text{ext}$ {\it fails} to be basic with respect to the {\it whole} structure group $\G_\text{ext}$, because it fails to be basic with respect to the action of the stabilizer $\G^o_{|\pp R}$. Cf. footnote \ref{fnt:vartheta}.}
Therefore, defining $\omega^{H,[f]}_\text{ext} := \dd \vartheta^{H,[f]}_\text{ext} $ one obtains a gauge-basic and closed 2-form on $\Phi^{[f]}_\text{ext}$, i.e.
\be
\dd \omega^{H,[f]}_\text{ext}=0,
\qquad
\bb i_{\xi^\#} \omega^{H,[f]}_\text{ext} = 0,
\qquad\text{and}\qquad
\bb L_{\xi^\#}\omega^{H,[f]}_\text{ext} = 0.
\ee
Explicitly, from $\omega^{H,[f]}_\text{ext} = \dd \vartheta^{H,[f]}_\text{ext} = \dd_H \vartheta^{H,[f]}_\text{ext}$ (which holds because $\vartheta^{H,[f]}_\text{ext}$ is basic) and $\dd_H^2 u = \bb F u$, 
\be
\omega^{H,[f]}_\text{ext} = \oint \sqrt{h}\,\tr\big( - \tfrac12 f_o [ u^{-1}\dd_H u \stackrel{\curlywedge}{,} u^{-1}\dd_H u ] + f_o u^{-1} \bb F u \big) \in \Omega^2(\Phi^{[f]}_\text{ext}).
\label{eq:omegaext}
\ee

Moreover, as it was the case for $\omega^{[f]}$, the 2-form $\omega^{H,[f]}_\text{ext}$ is also basic with respect to the flux-reference stabilizer transformations, and therefore it is basic in $(\Phi^{[f]}_\text{ext},\Pi)$:\footnote{Since $\varpi$ is pulled-back from $\Phi^{[f]}$, one has $\bb i_{\sigma_o^\S}\varpi = 0 = \bb L_{\sigma_o^\S} \varpi$. }
\be
\bb i_{(\xi^\#,\sigma_o^\S)} \omega^{H,[f]}_\text{ext}  = 0 
\qquad\text{and}\qquad
\bb L_{(\xi^\#,\sigma_o^\S)}\omega^{H,[f]}_\text{ext} = 0.
\ee
Because of this property,  $\omega^{H,[f]}_\text{ext} $ can be projected not only down to $\Phi^{[f]}$---where it gives a gauge-horizontal version of the KKS symplectic structure $\omega^{[f]}_\text{KKS}$---but also down to the gauge-reduced CSSS $\Phi^{[f]}_\text{ext}/(\G\times \G^o_{|\pp R}) = \Phi^{[f]}/\G$.

Before using this fact to finally introduce the sought completion of $\Omega^\text{red}_\varpi$ which will turn the reduced CSSS $\Phi^{[f]}/\G$ into a symplectic space, let me conclude this section with an observation.


\paragraph{The completion of $\Omega^\text{red}_\varpi$}\label{sec:thecompletion}
To define the sought symplectic completion of $\Omega^\text{red}_\varpi$, first define in $(\Phi^{[f]}_\text{ext},\Pi)$ the presymplectic completion of $\Omega^H$ by $\omega^{H,[f]}$ as
\be
\Omega^{H,[f]}_\text{ext} := \pi_o^* \iota^*\Omega^H + \omega^{H,[f]}_\text{ext} \in \Omega^2(\Phi^{[f]}_\text{ext}).
\label{eq:OmegaHf}
\ee
In coordinates, this reads: 
\begin{align}
\Omega^{H,[f]}_\text{ext} = &
 \int \sqrt{g}\, \tr\big( \dd_H E_\rad^i \curlywedge \dd_H A_i\big) - \int \sqrt{g}\,\big(\dd_H \bar \psi \gamma^0 \curlywedge \dd_H\psi + \tr(\rho \bb F) \big) \notag\\
& + \oint \sqrt{h}\,\tr\big( - \tfrac12 f_o [ u^{-1}\dd_H u \stackrel{\curlywedge}{,} u^{-1}\dd_H u ] +  f_o u^{-1} \bb F u \big) .
\label{eq:OmegaHfExt}
\end{align}
This form is closed, basic, and its kernel can be shown to comprise vertical vectors only (now, the KKS contribution takes care of the flux rotations in the kernel of $\pi_o^*\iota^*\Omega^H$; see Appendix \ref{app:Vext} for a proof):
\be
\ker(\Omega^{H,[f]}_\text{ext} ) = V_\text{ext}.
\label{eq:kerOmegaHFext}
\ee
Being basic, this 2-form, can be projected down to the reduced CSSS $\Phi^{[f]}/\G$ to give the 2-form  $\Omega^{\text{red},[f]}\in\Omega^2(\Phi^{[f]}/\G)$, defined through the relation
\be
\Pi^*\Omega^{\text{red},[f]} := \Omega^{H,[f]}_\text{ext}.
\label{eq:OmegaRedf}
\ee 
Finally, thanks to the above-mentioned properties of $ \Omega^{H,[f]}$, the 2-form $\Omega^{\text{red},[f]}$ is also closed and non-degenerate, and thus 
\be
\text{the reduced CSSS } (\Phi^{[f]}/\G,\Omega^{\text{red},[f]}) \text{ is symplectic}.
\label{eq:CSSSsymplectic}
\ee

(As explained in footnote \ref{fnt:reducible}, I have been neglecting {\it  reducible} configurations---cf. appendix \ref{app:UniqEcoul}. Here, let me only mention that in electromagnetism, where all configurations are reducible, the above equation needs to be corrected by excluding from $V^{[f]}$ the 1-dimensional space of spatially constant ``gauge transformations.'' This fact is relevant because these transformations---related to the {\it total} electric charge---are then promoted to physical transformations in the reduced field space. The situation is much more complicated, and less clear-cut, in non-Abelian theories, cf. the forthcoming v3 of \cite{GoRieYMdof}.)

\paragraph{Independence from $\varpi$}\label{sec:indep}
Although it is not manifest from e.g. \eqref{eq:OmegaHfExt}, the presymplectic 2-form $\Omega^{H,[f]}_\text{ext}$ is independent of $\varpi$, and therefore so is the symplectic form $\Omega^{\text{red},[f]}$ on the reduced CSSS $\Phi^{[f]}/\G$. The reduced symplectic structure $ \Omega^{\text{red},[f]}$ on $\Phi^{[f]}/\G$ is indeed completely canonical.

To show this---and highlight the role of the KKS symplectic structure---it is convenient to perform the reduction by $\Pi$ of $(\Phi^{[f]}_\text{ext}, \Omega^{H,[f]}_\text{ext})$ in two steps: first to $(\Phi^{[f]}, \Omega^{H,[f]})$ by projecting out the stabilizer transformations, and then to $(\Phi^{[f]}, \Omega^{\text{red},[f]})$ by projecting out gauge transformations.

Thus, let me backtrack to the definition of $\vartheta^{H,[f]}$ \eqref{eq:varthetaH}.
Using $\dd_H u = \dd u + \varpi_{|\pp R}u$ and the fact that within a CSSS $\iota^*\theta^V = \oint \sqrt{h}\,\tr(f \varpi)$ \eqref{eq:thetaV}, this 1-form can be written as 
\be
\vartheta^{H,[f]} = \vartheta^{[f]} + \pi_o^*\iota^*\theta^V \in \Omega^1(\Phi^{[f]}_\text{ext}).
\ee
Differentiating, one finds 
\be
\omega^{H,[f]} = \dd \vartheta^{H,[f]} = \omega^{[f]} + \pi_o^*\iota^*\dd \theta^V \in \Omega^2(\Phi^{[f]}_\text{ext})
\label{eq:omegaHfext}
\ee 
where $\omega^{[f]}$ is precisely the KKS presymplectic form \eqref{eq:omega-f}. 
Note that the rightmost term in this equation explains why $\omega^{H,[f]} $ does not identically vanish in the Abelian case, even though $\omega^{[f]}$ does: in this case it is easy to check that $\omega^{H,[f]}  = \pi_o^*\iota^*\dd \theta^V =\pi_o^*\iota^*\oint \sqrt{h}\, f_o \bb F$, in agreement with the overview of paragraph \ref{sec:overview4}. 

Although (in the non-Abelian theory) only the {\it sum} of $ \omega^{[f]}$ and $ \pi_o^*\iota^*\dd \theta^V$ is basic with respect to gauge transformations, each term is individually basic with respect to stabilizer transformations of the reference flux $f_o$. 
Therefore, each term can be independently projected down to $\Phi^{[f]}$ along $\pi_o$. 
In particular, $ \omega^{[f]}$ projects to the KKS 2-form $\omega^{[f]}_\text{KKS} \in \Omega^2(\Phi^{[f]})$ \eqref{eq:omegaKKS}.

From equations \eqref{eq:omegaKKS} and \eqref{eq:omegaHfext}, one readily sees that in the definition of the presymplectic 2-form  $\Omega^{H,[f]}_\text{ext}$ over $( \Phi^{[f]}_\text{ext},\Pi)$  \eqref{eq:OmegaHf}, the horizontal and vertical parts of the off-shell symplectic structure combine as in $\Omega = \Omega^H + \dd \theta^V$ (see \eqref{eq:thetaV}), thus yielding
\be
\Omega^{H,[f]}_\text{ext} = \pi_o^* \iota^*\Omega + \omega^{[f]} \in \Omega^2(\Phi^{[f]}_\text{ext}),.
\label{eq:OmegaHfextOmega}
\ee

Then, subsequent reductions along $\Phi^{[f]}_\text{ext} \xrightarrow{\pi_o }\Phi^{[f]}\xrightarrow{\tilde\pi }\Phi^{[f]}/\G$ lead first to a gauge-basic 2-form on $\Phi^{[f]} = \pi_o(\Phi^{[f]}_\text{ext} )$
\be
\Omega^{H,[f]} := \iota^*\Omega + \omega^{[f]}_\text{KKS} \in \Omega^2(\Phi^{[f]}),
\label{eq:OmegaHfOmega}
\ee
and finally to the fully reduced symplectic 2-form $\Omega^{\text{red},[f]}$ on $\Phi^{[f]}/\G = \tilde\pi(\Phi^{[f]})$.

Because not only $ \omega^{[f]}$ and $ \omega^{[f]}_\text{KKS}$, but also $\Omega$ and the maps $\iota$ and $\pi_o$ are independent of $\varpi$, so must be the expressions \eqref{eq:OmegaHfextOmega} and \eqref{eq:OmegaHfOmega}, and therefore the reduced symplectic space:
\be
(\Phi^{[f]}/\G,\Omega^{\text{red},[f]}) \text{ is independent from the choice of $\varpi$}.
\ee

\paragraph{The reduced symplectic structure $(\Phi^{[f]}/\G,\Omega^{\text{red},[f]})$ is canonical}
An important aspect of equations \eqref{eq:OmegaHfextOmega} and \eqref{eq:OmegaHfOmega} which express $\Omega^{H,[f]}_\text{ext}$ and $\Omega^{H,[f]}$  in terms of $\Omega$---as opposed to equation \eqref{eq:OmegaHf} which expresses $\Omega^{H,[f]}_\text{ext}$ in terms of $\Omega^H$,---is that only the {\it sum} of the 2-forms appearing on their right-hand side defines a 2-form which is basic with respect to gauge transformations. 

The only exceptions to this statement arise when $\omega^{[f]}_\text{KKS}$ vanishes, that is when the flux is trivial (either because $f=0$ or $\pp R = \emptyset$) or when $G$ is Abelian. This explains why the Abelian case is so much simpler.

In the Abelian case, comparison with the result of paragraph \ref{sec:AbResult} shows that there are multiple symplectic structures available on the reduced superselection sector $\Phi^{\{f_o\}}/\G$: there is the symplectic structure $\Omega^{\text{red},\{f_o\}}$, but also the family of structures $\Omega^\text{red}_\varpi$ for each $\varpi$. However, as the notation suggests, only $\Omega^{\text{red},\{f_o\}}$ is independent of $\varpi$---cf. paragraph \ref{sec:overview4}.

In the non-Abelian case none of the 2-forms $\Omega^\text{red}_\varpi$ is non-degenerate due to the nontrivial nature of flux-rotations; hence, the only available {\it symplectic} form over the reduced CSSS $\Phi^{[f]}/\G$ is $\Omega^{\text{red},[f]}$, which also happens to be {\it independent} of $\varpi$.
Therefore the completion of $\Omega^{\text{red}}_\varpi$ through the addition of a (gauge-horizontal) KKS symplectic structure on the space of fluxes, not only cures the kernel of $\iota^*\Omega^H$ but also its dependence on $\varpi$.

Indeed, the reduced symplectic structure $\Omega^{\text{red},[f]}$ is fully canonical: its construction does not depend on any external choice or input, even the KKS form is canonically given on $[f]$.
In sum, once the focus is set on {\it covariant} superselection sectors, the form of $\Omega^{\text{red},[f]}$ is enforced upon us by the resulting geometry of field space and by the existence of flux rotations.

\paragraph{Flux rotations symmetries}
What is the fate of the projectable flux transformations $\mathcal Y^{[f]}_\text{cov}$ in relation to the canonical symplectic structure $\Omega^{\text{red},[f]}$?

Consider a flux rotation $\bb Y_{\zeta_\pp}\in \mathcal Y^{[f]}_\text{cov} $ viewed, as described above, as a horizontal vector field on $\Phi^{[f]}_\text{ext}$.
Recall that a flux rotation $\bb Y_{\zeta_\pp}$ is projectable down to $\Phi^{[f]}_\text{ext}/\G_\text{ext}=\Phi^{[f]}/\G$ if and only if it is ``covariant'' i.e. if only if  the label $\zeta_\pp$ has a field-dependence that makes it change covariantly along the gauge directions: $\bb L_{\xi^\#}\zeta_\pp = [\zeta_\pp , \xi_{|\pp R}]$. Denote its projection $\tilde{\bb Y}_{\zeta_\pp} := \Pi_* \bb Y_{\zeta_\pp} \in \mathfrak{X}^1(\Phi^{[f]}/\G)$. 

Now, for a covariant parameter $\zeta_\pp$, define 
\be
H_{\zeta_\pp} := - \oint \sqrt{h}\,\tr(f_o u^{-1} \zeta_\pp u) 
\in \Omega^0(\Phi^\text{[f]}/\G)
\ee
and note that this quantity is invariant both under the flux-reference stabilizer transformations, and under gauge transformations. Therefore, $H_{\zeta_\pp} = - \oint \sqrt{h}\,\tr(f  \zeta_\pp) $ can be understood as defining a function on the reduced CSSS $\Phi^{[f]}/\G$, or more precisely $H_{\zeta_\pp} = \Pi^*\tilde H_{\zeta_\pp}$ for a $\tilde H_{\zeta_\pp}\in\Omega^0(\Phi^{[f]}/\G)$.
Moreover, from its gauge invariance, one deduces that
\be
\dd H_{\zeta_\pp} = \dd_H H_{\zeta_\pp} .
\ee

One can then compute the contraction
\be
\Omega^{H,[f]}_\text{ext}(\bb Y_{\zeta_\pp}) = \omega^{H,[f]}_\text{ext}(\bb Y_{\zeta_\pp}) = \oint \sqrt{h}\, \tr\big( f_o [ u^{-1} \zeta_\pp u, u^{-1}\dd_H u ] \big).
\ee
and thus to verify that
\be
\Omega^{H,[f]}_\text{ext}(\bb Y_{\zeta_\pp}) = - \dd H_{\zeta_\pp} + H_{\dd_H \zeta_\pp},
\ee
where $\dd_H \zeta_\pp = \dd \zeta_\pp - [\zeta_\pp, \varpi_{|\pp R}]$. 	
As all other terms, $H_{\dd_H \zeta_\pp}$ is also basic both with respect to flux-reference stabilizer transformations and  gauge transformations. It is therefore projectable to a $\tilde h_{\zeta_\pp} \in \Omega^1(\Phi^{[f]}/\G)$ according to $\Pi^*\tilde h_{\zeta_\pp} := H_{\dd_H \zeta_\pp}$.

Using the defining relation of $\Omega^{\text{red}, [f]}$ and the surjectivity of $\Pi$, one finds
\be
\Omega^{\text{red}, [f]}(\tilde {\bb Y}_{\zeta_\pp}) = - \dd \tilde H_{\zeta_\pp} + \tilde h_{\zeta_\pp} .
\ee
Therefore, on the reduced CSSS $\Phi^{[f]}/\G$ the flux rotation $\tilde {\bb Y}_{\zeta_\pp}$ is a Hamiltonian vector field of charge $\tilde H_{\zeta_\pp}$ if and only if $\tilde h_{\zeta_\pp} = 0 $, i.e. if and only if $H_{\dd_H\zeta_\pp} = 0$.

However, for $H_{\dd_H\zeta_\pp} = 0$ to vanish (without $ H_{\zeta_\pp}$ to vanish as well) one needs 
\be
\dd_H \zeta_\pp = 0.
\ee
In turn, this condition can be satisfied nontrivally throughout $\Phi^{[f]}_\text{ext}$---i.e. without incurring in overly-restrictive integrability conditions---only if $\varpi$ is flat, i.e. only if $\bb F= 0$.

Then, if $\bb F = 0$ and if in addition the parameters $\zeta_\pp$'s do not change under flux-rotations $\bb Y_{\zeta'_\pp}(\zeta_\pp) =0$ (this is the case if $\zeta_\pp$ is independent of $E_\Coul$), the charges $\tilde H_{\zeta_\pp}$ satisfy a $\Lie(G_{|\pp R})$ Poisson algebra
\be
\{ \tilde H_{\zeta_\pp}, \tilde H_{\zeta_\pp'}\} := \Omega^{\text{red}, [f]}(\tilde {\bb Y}_{\zeta_\pp}, \tilde {\bb Y}_{\zeta_\pp'}) = \tilde H_{[\zeta_\pp,\zeta_\pp']},
\label{eq:HHH}
\ee 
constituting a representation of the following Lie algebra of field-space vector fields:\footnote{This equation holds under the same condition which determine the validity of \eqref{eq:HHH}.}
\be
\lbr \tilde{\bb Y}_{\zeta_\pp}, \tilde{\bb Y}_{\zeta_\pp'} \rbr = \tilde{\bb Y}_{[\zeta_\pp,\zeta_\pp']}.
\ee

Notice that the Hamiltonian nature of flux rotations relies on the flatness of $\varpi$, even though $\Omega^{\text{red}, [f]}$ is $\varpi$-independent, because the very {\it definition} of flux rotations relies on a choice of $\varpi$ to provide the radiative/Coulombic split of the electric field: only $E_\Coul$ is affected by the action of $\bb Y_{\zeta_\pp}$---see paragraph \ref{sec:fluxrot}.
In particular, if the topology of $\A$ is such that no flat $\varpi$ exists on it,\footnote{This is the same as saying that $\A$ admits no global section \cite{Singer1,Singer2}.} it is then impossible to define a radiative/Coulombic split that corresponds to flux rotations which are Hamiltonian.

To summarize, covariant flux rotations define Hamiltonian vector fields, i.e. kinematical symmetries, on $\Phi^{[f]}/\G$ only if they are based on a $\varpi$ which is flat. In this case, their Hamiltonian generator is given by a (gauge-invariant) smearing of the electric flux, $\tilde H_{\zeta_\pp}$.
Moreover, if the parameters $\zeta_\pp$'s are chosen independent of $E_\Coul$, these charges satisfy a $\Lie(G_{|\pp R})$ Poisson algebra, hallmark of the noncommutativity of the electric fluxes $f$.

Finally, note that since flux rotations alter the Coulombic part of the electric field and therefore the energy content of the field configuration, it seems unlikely that flux rotations can be promoted to dynamical symmetries too.

\section{Beyond CSSS: edge modes}\label{sec:beyond}

In the first part of this article I built a canonical symplectic structure on reduced CSSS.
In the reminder of this article I will discuss how to define a reduced symplectic structure in a context that goes beyond the CSSS framework. This will involve the inclusion of new ``edge mode'' dof symplectically conjugate to the electric flux.
I will argue that the Donnelly--Freidel prescription for the inclusion of edge modes \cite{DonFreid} is the most natural one.
Despite this fact I will show that this prescription is equivalent to breaking gauge symmetry at the boundary.
Based on this observation I will argue at the end that only the CSSS approach provides a canonical framework which is free of ambiguities. 

\paragraph{Beyond CSSS}

In the construction of $\Omega^{\text{red},[f]}$, it was crucial to restrict attention to a (covariant) flux superselection sector. However, the notion of superselection has at times been criticized and put under scrutiny \cite{AharSuss,AharBook,Giulini_1996,SpekkensEtAl}.
This criticism relies on the observation that the superselection of a dof $\phi$ is often the artificial consequence of one's inadvertent neglect of another dof $\bar\phi$ which is naturally ``conjugate'' to $\phi$ and thus serves as a ``reference frame'' for it: were $\bar\phi$ included in the description of the system, $\phi$ would not be superselected.
  
As briefly summarized in Sect. \ref{sec:1.6}, a careful analysis of the gluing of YM dof across adjacent regions shows that in the present context the neglected dof responsible for the superselection of the flux $f$ are the gauge invariant, radiative, dof supported in the complement of $R$ within the Cauchy surface $\Sigma$ \cite{GoRieYMdof} (see Sect. \ref{sec:gluing}).
Therefore, the emergence of superselection in the present context is perfectly in line with the understanding of superselection outlined above.
This said, it is nonetheless of interest to investigate the consequences of {\it not} restricting to a flux superselection sector.

Then, $\Phi^{[f]}$ is replaced with the larger space of all  on-shell field configurations
\be
\Phi_\GC := \bigcup_{f\in\mathcal F} \{\phi \in \Phi | \GC^f = 0\}
\ee
where $\mathcal F := \{f\}$ is the total space of fluxes.
On the reduced on-shell space, $\Phi_\GC/\G$, the 2-form $\Omega^\text{red}_\varpi$ \eqref{eq:Omegared} can only have a larger kernel than before, with the result that this kernel is now nontrivial even in Abelian theories.
This is because in $\Phi_\GC/\G$ one can not only ``rotate'' fluxes in a give conjugacy class but also change their conjugacy class, and now both transformations go undetected by  $\Omega^\text{red}_\varpi$.

In a CSSS, the issue of the kernel of  $\Omega^\text{red}_\varpi$ was solved by recognizing the presence of a canonically-given symplectic structure on the space of covariantly superselected fluxes $[f]\cong \G_{|\pp R}/\G_{|\pp R}^o$.
However, in general no canonical symplectic structure exists for the total space of fluxes $\mathcal F\cong \mathcal C^\infty(\pp R, \Lie(G))$.\footnote{This group is distinct from $\G_{|\pp R}$ since the latter does not contain ``large  boundary gauge transformations.'' Cf. footnote \ref{fnt:large}.}
In a finite dimensional analogue, whereas a (co)adjoint orbit in (the dual of) a Lie algebra admits a canonical symplectic structure, the Lie algebra itself does not: e.g. the Lie algebras $(\bb R, +)$ and $\Lie(\SU(2))$ even fail to be even-dimensional.

Therefore,  the only option to solve the problem of the kernel of $\Omega^\text{red}_\varpi$ in the context of non-superselected on-shell fields, is to enlarge, or  extend, the phase space by including new, {\it additional}, dof canonically conjugate to the fluxes  $f\in\mathcal{F}$.
The question is: is there a most natural way to perform this extension?

\paragraph{Two possible phase space extensions}

As I have already noted in section \ref{sec:completing}, fluxes are best understood as objects valued in the {\it dual} of the Lie algebra, $\mathcal F  \cong \mathcal C^\infty(\pp R, \Lie(G)^*)$, where the dualization is made through the map $f \mapsto f^*  = \tr(f\cdot)$. For brevity I will use the (slightly misleading) notation $\Lie(\G^\pp)^*:=\mathcal C^\infty(\pp R, \Lie(G)^*)$.

Enlarging the phase space by adding new dof canonically conjugate to the flux means ``doubling'' the space $\cal F$ to obtain a new symplectic space $(\mathcal D^{\cal F},\omega)$ on which gauge transformation have a Hamiltonian action.
There are two natural ways of achieving this.

These two ways are based on the ``doubled'' spaces $(\mathcal D^{\cal F}_0, \omega_0=\dd \vartheta_0)$ and $(\mathcal D^{\cal F}_\text{DF}, \omega_\text{DF}=\dd \vartheta_\text{DF})$ respectively defined by:
\be
\mathcal D^{\cal F}_0 := \Lie(\G^\pp)\times\Lie(\G^\pp)^* \ni (\alpha, f^*)
\quad\text{and}\quad
\vartheta_0 := \oint \sqrt{h} \,\tr(  f  \dd \alpha),
 \ee
 and
 \be
\mathcal D^{\cal F}_\text{DF} := \G^\pp\times\Lie(\G^\pp)^* \ni (k, f^*)
\quad\text{and}\quad
\vartheta_\text{DF} := \oint \sqrt{h} \,\tr(  f  \dd k k^{-1}).
 \ee

Notice that $\mathcal D^{\cal F}_0 \cong \mathcal C^{\infty}(\pp R, \T^*\Lie(G))$ and 
 $\mathcal D^{\cal F}_\text{DF} \cong \mathcal C^{\infty}(\pp R, \T^*G)$, both featuring $\Lie(\G^\pp)^*$ as momentum space. In particular, with the latter identification, $(k,f^*)_{x\in\pp R}$ are coordinates on  $\T^*G$ obtained through the right-invariant trivialization of the bundle, and $\vartheta_\text{DF}$  originates in the tautological 1-form on $\T^*G$.
The space $(\mathcal D^{\cal F}_\text{DF}, \omega_\text{DF})$ is the {\it edge-mode} phase-space proposed by Donnelly and Freidel \cite{DonFreid}.\footnote{The space $(\mathcal D^{\cal F}_0, \omega_0)$ has appeared in \cite[Sect.2.7, Rmk.15]{Rejzner_2021}.  See also \cite{Schiavina} for more details.} 
It will become clear that this is the most natural choice between the two.

The inclusion of the degrees of freedom $\alpha$ or $k$ leads to the following extensions of the on-shell phase space (here,  $\bullet\in\{0,\text{DF}\}$):
\be
\pi_{\cal F,\bullet} :\Phi_\GC^{\mathcal F,\bullet} := \to \Phi_\GC,
\qquad
\Phi_\GC^{\mathcal F,\bullet} := 
\begin{cases}
\Phi_\GC \times \Lie(\G^\pp)& \text{for } \bullet = 0,\\
\Phi_\GC \times \G^\pp& \text{for } \bullet = \text{DF},
\end{cases}
\ee
for projections $\pi_{\cal F,\bullet}$ which send $(\alpha, f) \mapsto f$ and $(k,f)\mapsto f$ respectively.

The demand of gauge invariance of $\vartheta_\bullet$, i.e. $\bb L_{\xi^\#} \vartheta_\bullet = 0$ for $\dd\xi = 0$, forces us to demand that gauge transformations act on $\alpha$ and $k$ (now seen as coordinates in $\Phi_\GC^{\cal F,\bullet}$) by the adjoint representation and (inverse) left translations respectively:
\be
\alpha^g = g_{|\pp R}^{-1} \alpha g_{|\pp R}
\qquad\text{and}\qquad
k^g = g_{|\pp R}^{-1} k.
\ee
Thus, the extended spaces $\Phi_\GC^{\cal F,\bullet}$ are naturally foliated by gauge orbits.
I will call the tangent space to the gauge orbits in $\Phi_\GC^{\cal F,\bullet}$ the {\it vertical subspace of} $\Phi_\GC^{\cal F,\bullet}$ and I will denote it $V_{\cal F,\bullet}$.

Pulling back $\varpi$ from $\Phi_\GC$ to $\Phi_\GC^{\cal F,\bullet}$ through the canonical projection $\pi_{\cal F,\bullet}$ (but omitting the pullback in the following formulas), one can introduce horizontal derivatives on $\Phi_\GC^{\cal F,\bullet}$:
\be
\dd_H f = \dd f + [f,\varpi_{|\pp R}]
\qquad\text{and}\qquad
\begin{cases}
\dd_H \alpha = \dd \alpha + [\alpha,\varpi_{|\pp R}] &\text{for } \bullet = 0,\\
\dd_H k = \dd k + \varpi_{|\pp R} k &\text{for } \bullet = \text{DF}.
\end{cases}
\label{eq:67}
\ee
Hence, the horizontal modifications of the canonical symplectic forms on $\Phi_\GC^{\mathcal F,\bullet}$ are
\begin{subequations}
\be
\vartheta_{0}^H := \oint \sqrt{h} \,\tr\big(f \dd_H \alpha\big) \in \Omega^1(\Phi_\GC^{\mathcal F,0}),
\ee
and
\be
\vartheta_\text{DF}^H := \oint \sqrt{h} \,\tr\big(f \dd_H k k^{-1}\big) \in \Omega^1(\Phi_\GC^{\cal F,\text{DF}}).
\ee
\end{subequations}
It is immediate to check that these 1-forms are basic with respect to the action of gauge transformations, and therefore so are the following 2-forms (cf. the derivation of \eqref{eq:OmegaBasic})
\begin{subequations}
\be
\omega_{0}^H := \dd \vartheta_{0}^H = \oint \sqrt{h} \,\tr\big(\dd_H f\curlywedge \dd_H \alpha + [f,\alpha ] \bb F\big) \in \Omega^2(\Phi_\GC^{\mathcal F,0}). 
\ee
and
\be
 \omega^H_\text{DF} := \dd \vartheta_\text{DF}^H = \oint \sqrt{h} \,\tr\big( \dd_H f \curlywedge \dd_H k k^{-1} + \tfrac12 f [\dd_H k k^{-1} \stackrel{\curlywedge}{,} \dd_H k k^{-1} ] + f \bb F\big)\in \Omega^2(\Phi_\GC^{\mathcal F,\text{DF}}). 
\ee
\end{subequations}

Using the projections $\pi_{\cal F,\bullet}$, one can thus define the following presymplectic 2-forms over the extended on-shell phase spaces $ \Phi_\GC^{\cal F,\bullet}$ (here, $\iota: \Phi_\GC \hookrightarrow \Phi$):
\be
\Omega^{H,\cal F}_{\bullet} := \pi_{\cal F,\bullet}^*\iota^*\Omega^H + \omega_{\bullet}^H\in \Omega^2(\Phi_\GC^{\cal F}).
\label{eq:70}
\ee
These presymplectic 2-forms are basic and closed. 
Moreover, their respective kernels are given by the vertical subspaces of $\Phi_\GC^{\cal F,\bullet}$:
\be
\ker(\Omega^{H,\cal F}_{\bullet}) = V_{\cal F,\bullet}.
\ee

Hence, the presymplectic 2-forms $\Omega^{H,\cal F}_{\bullet} $ induce a symplectic structure on the reduced spaces $\Phi_\GC^{{\cal F,\bullet}}/\G$.
Introducing the projections $\tilde \pi_{\cal F,\bullet}:\Phi_\GC^{\cal F,\bullet}\to \Phi_\GC^{\cal F,\bullet}/\G$, the reduced symplectic structures $\Omega^{\text{red},\cal F}_{\bullet}\in\Omega^2(\Phi_\GC^{{\cal F,\bullet}}/\G)$ are defined by the relations
\be
\tilde\pi_{\cal F,\bullet}^* \Omega^{\text{red},\cal F}_{\bullet} := \Omega^{H,\cal F}_{\bullet} .
\ee

Whereas $\Omega^{\text{red},\mathcal F}_{0}$ is $\varpi$-dependent, the DF symplectic structure is $\varpi$-{\it in}dependent.
Indeed, equations (\ref{eq:67}--\ref{eq:70}) and \eqref{eq:thetaV} yield (omitting the pullbacks\footnote{The action of the missing pullbacks is independent of $\varpi$.}):
\be
\Omega^{H,\cal F}_{\bullet} = \dd (\theta^H + \vartheta^H_\text{DF})  = \dd (\theta^H + \theta^V + \vartheta_\text{DF}) = \Omega + \omega_\text{DF},
\label{eq:73}
\ee
where the right-most term in this equation is manifestly independent of the choice of $\varpi$.

Given the relationship of (flat) functional connections and gauge fixings this seems to mean that the DF symplectic structure is fully gauge-invariant.
However, despite these appearances, I shall argue in section \ref{sec:DFbreaking} that the gauge invariance of the DF symplectic structure is an illusion.

\paragraph{Lie bialgebras and quantum doubles}
Both symplectic structures $\Omega^{H}_{\cal F,\bullet}$ find their origin in the theory of Lie-bialgebras and quantum doubles \cite{YKS}, albeit being somewhat trivial examples thereof. 
In particular, $\omega_0$ reflects the canonical symplectic structure carried by the double Lie-bialgebra $\mathfrak d_0 = \Lie(G)\oplus\Lie(G)^*$ built from the Lie bialgebra  $\mathfrak g_0 =(\Lie(G), [\cdot,\cdot], \gamma = 0)$ with trivial cobracket $\gamma$. Similarly, $\omega_\text{DF}$ reflects the canonical symplectic structure carried by the Heisenberg double ${\frak D}_+ = \exp \mathfrak d_0 \cong \T^*G$. 

I mention this because, although the cases treated here are the most trivial examples of Lie-bialgebras and quantum doubles, it turns out that upon discretization other far less trivial ``double'' structures can naturally arise \cite{FreidGir} or become available \cite{RieSDYM,HagRieEncoding}. These structures can be understood as deformations of these most trivial cases to new phase spaces where gauge acts through a quantum-group symmetry.
Typically, these more general structure involve some sort of ``exponentiated flux'' (in a way similar to how ${\frak D}_+ = \exp \mathfrak d_0$). I refer to the cited articles for further references on this topic and its relevance for quantum gravity, self-dual formulations of YM theory, and the theory of topological phases of matter.

\paragraph{Dof in $(\Phi^{[f]}_\text{ext}, \Omega^{H,[f]}_\text{ext})$ vs. $(\Phi^{\cal F,\text{DF}}_\GC,\Omega^{H,\cal F}_\text{DF})$}
The reader will have surely noticed the parallel between the formulas that characterize the field-space extension à la DF $\Phi^{\cal F,\text{DF}}_\GC$, and those that describe the presymplectic structure in the extended description of the CSSS $\Phi^{[f]}_\text{ext}$. Indeed, the two are {\it formally} mapped onto each other by $u \leadsto k$.\footnote{E.g. $\vartheta^{[f]} = \oint\sqrt{h}\, \tr( f_o u^{-1} \dd u) = \oint \sqrt{h},\tr(f \dd u u^{-1}) \leadsto \oint \sqrt{h}\,\tr( f \dd k k^{-1} ) = \vartheta_\text{DF}$.}

But crucially, whereas the $u$'s are just an (over)-parametrization of the already existing flux dof in the CSSS $[f]$, in the DF framework not only the electric fluxes live in the larger space $f\in\cal F$ but also the edge modes $k$'s embody new, independent dof. Mathematically this difference is encoded in the relation $f = u f_o u^{-1}$ and the ensuing flux-stabilizer symmetry $u\mapsto u g_o$ that reduces the variables $u\in\G_{|\pp R}$ to variables in $\G_{|\pp R}/\G_{|\pp R}^o\cong [f]$. There is no such symmetry acting on the edge modes.

\paragraph{Flux rotations in the Donnelly-Freidel extension}
The natural extension of flux rotations to the Donnelly--Freidel extended phase space requires that the edge modes $k$ also transform under this kinematical symmetry.
In $\Phi_\GC^{\cal F,\text{DF}}$, I thus define\footnote{Note that the vector fields $\bb Y_{\zeta_\pp}$ are here redefined to be sections of $\T\Phi_\GC^{\mathcal F,\bullet}$ rather than $\T\Phi^{[f]}$.} 
\be
\bb Y_{\zeta_\pp}( f) = - [f ,\zeta_\pp],
\qquad
\bb Y_{\zeta_\pp}( k ) = \zeta_\pp k
\qquad\text{and}\qquad
\bb Y_{\zeta_\pp}(\bullet) = 0 \text{ otherwise},
\ee
i.e. for $\bullet\in\{A, E_\rad, \psi, \bar\psi\}$.
As above, for flux rotations to be projectable onto the reduced phase space, $\zeta_\pp$ must be field dependent in a way that makes it transform covariantly under gauge transformations: $\bb L_{\xi^\#} \zeta_\pp = [\zeta_\pp, \xi_{|\pp R}]$. I call the corresponding flux rotations, covariant flux rotations.

Notice that the very definition of flux rotations {\it depends} on a choice of $\varpi$: it requires splitting the electric field into radiative and Coulombic components, so that $\bb Y_{\zeta_\pp}$ can act on the latter and not on the former. 
This is why, although $\Omega^H_{\cal F,\text{DF}}$ does not depend on $\varpi$, the following flow equation does:
\be
\Omega^{H,\cal F}_{\text{DF}}(\bb Y_{\zeta_\pp}) = - \dd_H H^\text{DF}_{\zeta_\pp} + H^\text{DF}_{\dd_H\zeta_\pp}
\qquad\text{where}\qquad
H_{\zeta_\pp}^\text{DF} = \oint \sqrt{h}\,\tr(f \zeta_\pp).
\ee

From this and in complete analogy with the reasoning made within a single CSSS, one deduces that covariant flux rotations are Hamiltonian in the reduced symplectic space $(\Phi_\GC^{\cal F,\text{DF}}/\G,\Omega^{H,\cal F}_{\text{DF}})$ if and only if $\varpi$ is flat.

In the Abelian case, $f$ is left invariant by flux rotations, which have the sole effect of translating $k$: 
i.e. Abelian flux rotations  have absolutely no effect on the Gauss constraint and the Coulombic electric field. 
Indeed, in the CSSS framework, Abelian flux rotations are completely trivial.
Here their action is nontrivial only because of the extension of the phase space by edge modes.
In this sense---contrary to what happens in a CSSS---in DF the physical significance of Abelian flux rotations ultimately relies on the physical interpretation one attaches to the edge mode $k$; see below.

\paragraph{``Boundary symmetries" of the Donnelly-Freidel extension}
Having introduced the new edge-mode dof $k$, the possibility arises of producing a new symmetry that translates the $k'$'s while leaving all other fields untouched. 
This idea yields what DF called ``surface (or boundary) symmetries'' \cite{DonFreid}.

Define the vector field $\bb Z_{\eta_\pp}\in\mathfrak X^1(\Phi_\GC^{\cal F,\text{DF}})$ by the following action on the coordinate functions of $\Phi_\GC^{\cal F,\text{DF}}$:
\be
\bb Z_{\eta_\pp}(k) = k\eta_\pp 
\qquad \text{and}\qquad
\bb Z_{\eta_\pp}(\bullet) = 0 \text{ otherwise},
\ee
for $\eta_\pp$ a possibly field-dependent parameter valued in $\Lie(\G^\pp)$.
Since these transformations and gauge transformations act on the opposite side of $k$, the vector field $\bb Z_{\eta_\pp}$ is projectable to the reduced phase space---i.e. $\lbr \bb Z_{\eta_\pp} , \xi^\#\rbr = 0$ ($\dd \xi = 0$)---if and only if $\dd \eta_\pp =0$. I will henceforth assume this condition to hold. The ensuing $\bb Z_{\eta_\pp}$ are DF's boundary symmetries.

Boundary symmetries are Hamiltonian \cite{DonFreid} (hence the name ``symmetries''). Indeed, 
\be
\Omega^H_{\cal F,\text{DF}}(\bb Z_{\eta_\pp}) = - \dd Q^\text{DF}_{\eta_\pp} 
\qquad\text{where}\qquad
Q_{\eta_\pp}^\text{DF} = \oint \sqrt{h}\,\tr\big(\eta_\pp \Ad_k^{-1} f  \big).
\ee
Notice that $Q_{\eta_\pp}^\text{DF}$ is gauge invariant and hence the pullback by $\tilde \pi_{\cal F,\text{DF}}$ of a function $\tilde Q_{\eta_\pp}^\text{DF}$ defined on the reduced phase space: $Q_{\eta_\pp}^\text{DF}  = \tilde\pi_{\cal F,\text{DF}}^*\tilde Q_{\eta_\pp}$.

In the Abelian case, these transformations have the same action as flux rotations---however, this is a coincidence due to the fact that Abelian flux rotations are in a sense trivial (see above).
In general, they represent a pure translation of the edge modes $k$ by a parameter which is ``malleable'' over $\pp R$ but ``rigid''  throughout phase space.
Their physical meaning fully relies on the physical interpretation one attaches to the edge mode $k$.

Geometrically DF's boundary symmetries encode an ambiguity in the identification of the ``origin'' of the extended phase space $\cal D^{\cal F}_\text{DF}=\G^\pp \times \Lie(\G^\pp)^*$ with that of $\mathcal C^\infty(\pp R,\T^*G)$. In fact, whereas $\T^*G$ seen as a group possesses an identity, i.e. a preferred origin, as a manifold it is completely homogeneous and thus lacks a preferred origin. Therefore the manifold isomorphism $\T^*G \cong G\times \Lie(G)^*$ is natural but not canonical, i.e. depends on the choice of an origin.\footnote{Combinations of flux rotations and ``boundary symmetry'' can also change the orientation of the tangent plane at the origin, without shifting the origin.} In other words, the origin of these symmetry can be ultimately traced back to the fact that the projection from $\T^*G$ to $\Lie(G)^*$ fails to be canonical---even if the cotangent bundle $\T^*G$ is trivial.

This suggests that the ``boundary symmetries'' might encode a fundamental ambiguity in the definition of the DF extended phase space and symplectic structure.
This is what I will discuss next.

\paragraph{DF edge modes as open Wilson lines}
Comparison to the lattice suggest an interpretation of the edge mode $k$'s as open Wilson lines which land transversally onto the boundary.
This interpretation explains not only the fact that $k$ is conjugate to the flux, but also its gauge transformation property.

Since variations of $A$ {\it within} $R$ do not affect $k$, for this interpretation to be valid the whole Wilson line needs to lie in the {\it complementary} region $\bar R = \Sigma\setminus \mathring R $---which might seem puzzling if $\pp R$ is an asymptotic boundary.

Thus, for any $x\in \pp R$, one can interpret $k$ as being given by the path-ordered exponential of $A$  along some (arbitrary) choice of paths $\{\gamma_x\}_{x\in\pp R}$:
\be
k(x) = \overleftarrow{\bb P\text{exp}} \int_{\gamma_x} A
\qquad\text{where}\qquad
\gamma_x:[0,1]\to \bar R, \;\gamma_x(1)=x.
\ee 

From this perspective, the ``boundary symmetries'' of the edge modes are nothing else than changes in the choice of gauge at $\gamma_x(0)$ (or possibly changes in the choice of the ensemble of paths $\{\gamma_x\}$). 
According to this construction, the Wilson lines $k(x)$ have the interpretation of ``gauge reference frames'' with respect to the {\it exterior} of $R$; 
or, if $\gamma_x(0)$ is taken arbitrarily close to $\gamma_x(1)=x$, it seems reasonable to conclude that the edge modes $k$ are nothing else than the result of breaking of the original gauge symmetry at $\pp R$.

This viewpoint is confirmed by the following observation:
the DF extension and symplectic structures can be obtained by reducing the space of on-shell configurations equipped with the symplectic structure $\Omega$ with respect to the action of the group $\mathring \G$ of gauge transformations that are trivial at the boundary.\footnote{I have heard or read this argument before, but I was unable to track a publication filling in the details.} I will call $\mathring \G$ the group of {\it bulk} gauge transformations.

This observation means that DF {\it does} break the gauge symmetry at the boundary, only to give the impression it does not by resorting to a Stückelberg trick at $\pp R$. Even more importantly, it also means that DF is based on an implicit choice of a (boundary) gauge fixing and that the effects of changing this gauge fixing are completely degenerate with the effects of changing the field configuration up to bulk-only gauge transformations.
Let me sketch a proof.

\paragraph{DF edge modes from breaking of boundary gauge invariance\label{sec:DFbreaking}}
To define a symplectic form on $\Phi_\GC/\mathring \G$, it is necessary to first parametrize this space effectively.
Instead of working with $\Phi_\GC$ foliated by orbits of $\mathring \G$, I will work with a larger space $\underline \Phi_\GC \times \cal K$ foliated by orbits of an enlarged gauge group $ \G \times\mathring \G$:
\be
\pi_{\cal K}: \underline \Phi_\GC \times \mathcal K \to \Phi_\GC, \quad (\underline A, \underline E, \dots , k) \mapsto (A = k^{-1} \underline{A} k + k^{-1}\d k, E = k^{-1} \underline E k, \dots)
\ee
where $k\in\mathcal K\cong \G$ is a new $\G$-valued field, and where the two gauge symmetries act as:
\be \G:
\begin{cases}
\underline A^{g} = g^{-1} \underline A g + g^{-1}\d g,\\
\underline E^{g} = g^{-1} \underline E g,\\
\dots\\
k^g = g^{-1} k,
\end{cases}
\quad\text{and}\qquad
\mathring\G:
\begin{cases}
\underline A^{\mathring g} =  \underline A ,\\
\underline E^{\mathring g} =  \underline E ,\\
\dots\\
k^{\mathring g} =  k\mathring g;
\end{cases}
\ee
here the ``$\dots$'' stand for the obvious generalizations in presence of matter fields.
Note that the $\G$-symmetry is meant to reabsorb the new dof $k$ (à la Stückelberg), whereas the the $\mathring \G$ symmetry is the original gauge symmetry which now conveniently acts on the new $k$ fields only.

Denote $\iota:\Phi_\GC\hookrightarrow\Phi$ as before. Then, the on-shell symplectic structure $\iota^*\Omega$---for the full $\Omega = \int \sqrt{g} \,\tr(\dd E\curlywedge \dd A ) + \dots$---{\it is} basic with respect to the action of {\it bulk} gauge transformations $\mathring \G$.
Its pullback by  $\pi_{\cal K}$ yields the following 2-form on the extended space $\underline\Phi_\GC\times \mathcal K$:
\be
\pi_{\cal K}^*\iota^*\Omega = \int \sqrt{g} \,\tr( \dd \underline E\curlywedge\dd \underline A) + \dots + \dd  \oint \sqrt{h} \,\tr( f \dd k k^{-1}).
\ee
This 2-form is basic with respect to the action of the enlarged gauge group $\G\times\mathring\G$.
This is clear from the following two facts: on the one hand $\pi_{\cal K}^*(A) = k^{-1} \underline A k + k^{-1}\d k$ etc. are manifestly $\G$-invariant expressions, and on the other hand (on-shell of the Gauss constraint) $k$ appears only at the boundary where the action of $\mathring \G$ is trivial.
Since $\mathring \G$ acts trivially on the expression $\pi_{\cal K}^*\iota^*\Omega$, I will denote by the same symbol the 2-form obtained by projecting $\pi_{\cal K}^*\iota^*\Omega$ down to $(\underline\Phi_\GC\times \mathcal K)/\mathring\G$. Note that for now I am quotienting out the {\it bulk} gauge symmetries only.

From these expressions and the action of the $\G$-gauge symmetry it is clear that (cf. also \eqref{eq:73})\footnote{A subtlety: this identification is correct only up to global issues. In the DF phase space one is free to consider $k\in\G^\pp$, whereas in the gauge-breaking setting discussed here one finds that $k \in \mathcal G_{|\pp R}$ which corresponds to the connected component of $\G^\pp$ which is connected to the identity. It is not clear to me if this distinction is of any relevance, i.e. whether taking $k\in\G^\pp$ in DF means simply dealing with multiple copies of the same space/theory---one per connected component of $\mathcal G^\pp$,---which cannot communicate with each other.}
\be
\big((\underline\Phi_\GC\times \mathcal K)/\mathring\G, \pi_{\cal K}^*\iota^*\Omega\big)\cong\big(\Phi_\GC^{\cal F,\text{DF}}, \Omega^{H,\cal F}_{\text{DF}}\big).
\ee

Therefore, the DF edge modes are nothing else than the would-be-gauge dof unfrozen by the action of quotienting $\Phi_\GC$ {\it only} by bulk gauge transformations, rather than by the full group of gauge transformations. In other words, the DF edge modes are the would-be-gauge dof unfrozen through the action of explicitly breaking gauge invariance at the boundary.

Changes in the would-be-boundary-gauge of $A$ correspond to right translations of $k$ by elements of $\G{}_{|\pp R}$---similarly to the DF boundary symmetries.
This suggests that the DF boundary symmetries corresponds to changes in the choice of gauge at the boundary in a context where one is demanding configurations to be equivalent only up to {\it bulk} gauge transformations.

Let me be more precise about this point, because it involves an important subtlety.
To understand what  the relationship is among \i right translations of $k$, \ii DF boundary symmetries, and \iii (would-be-)gauge transformations supported at the boundary, I will revert to discussing the action of the full group of gauge transformations $\G$ rather than just $\mathring \G$. In other words, I will consider the effect of extending the action of $\mathring \G$ to $\G$ in order to understand how would-be-gauge transformations manifest in the gauge-breaking interpretation of the DF phase space presented above.

\begin{figure}[t]
\begin{center}
\includegraphics[width=.8\textwidth]{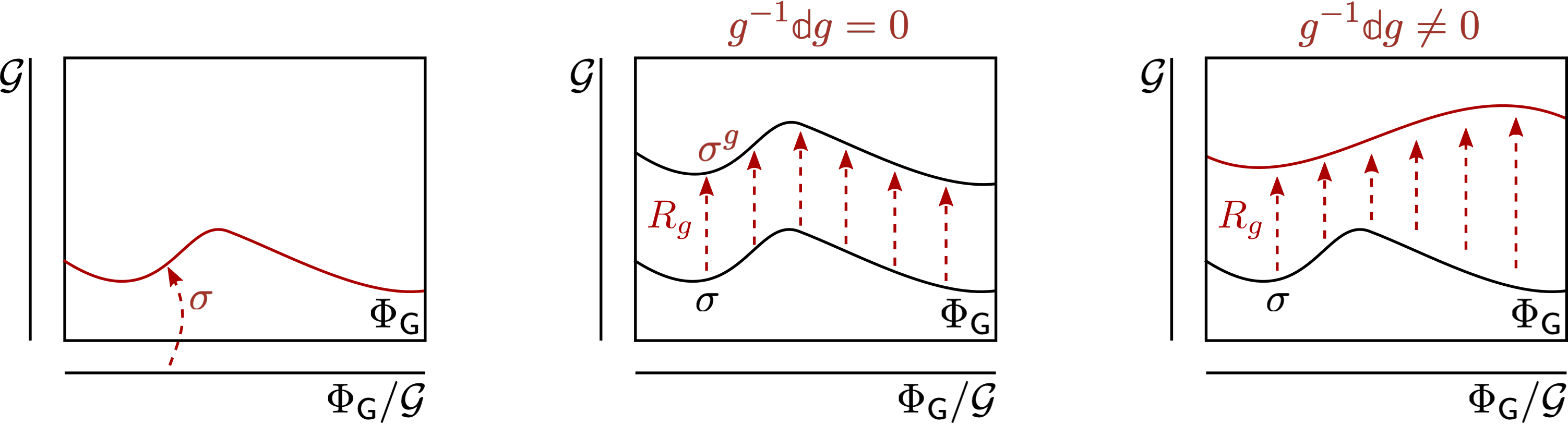}
\caption{\footnotesize The (fiducial) infinite dimensional field space bundle $\Phi_\GC\to\Phi_\GC/\G$. ({\it Left}) The choice of a gauge fixing section $\sigma:\Phi_\GC/\G\to\Phi_\GC$. ({\it Center}) The generation of the equivariant horizontal foliation associated to $\sigma$ by means of field-{\it in}dependent gauge transformations. ({\it Right}) A change in gauge fixing through the action of a field-dependent gauge transformation. The DF boundaries symmetries correspond to changes of leaf in the gauge fixing foliation depicted in the central panel.}
\label{fig:gaugefixing}
\end{center}
\end{figure}

Thus, consider the bundle $\Phi_\GC \to \Phi_\GC/\G$ and refer to figure \ref{fig:gaugefixing} for a graphical representation of what follows.
The choice of a gauge fixing in $\Phi_\GC \to \Phi_\GC/\G$ is usually understood as the choice of global section $\sigma:\Phi_\GC/\G \to \Phi_\GC $ (left panel in fig.  \ref{fig:gaugefixing}). Through vertical translations of the section $\sigma \to \sigma^g:= R_g\circ \sigma$ by field-{\it in}dependent $g\in\G$ ($\dd g g^{-1} = 0$), one can define from $\sigma$ an equivariant horizontal foliation  $H_\sigma \subset \T\Phi_\GC$  composed of leaves each ``parallel'' to $\sigma$ (center panel in fig.  \ref{fig:gaugefixing}). 
Since the choice of a section and of an equivariant horizontal foliation are in 1-to-1 correspondence, I will henceforth identify the choice of a gauge fixing with the entire equivariant horizontal {\it  foliation}, and not with the single section. The choice of a single section in a gauge fixing corresponds to choosing a leaf in the gauge fixing foliation.
Infinitesimal changes of horizontal leaf within the horizontal foliation are nothing else than the field-{\it in}dependent boundary symmetries of DF $\bb Z_{\eta_\pp}$, $\dd \eta_\pp = 0$. Boundary symmetries change the leaf in a gauge fixing foliation, not the gauge fixing itself.

Here is an example.
In electromagnetism (on a manifold without boundaries), a choice of section is provided by the condition $\nabla^i A_i=0$.
Acting on this condition by a field-{\it in}dependent gauge transformation $A \mapsto A + \d \lambda$ one produces a family of conditions of the type $\nabla^iA_i = \Lambda$ for $\Lambda = \Delta \lambda$ a field-{\it in}dependent function over $R$.
Then, according to the language introduced above, each function $\lambda$ corresponds to a different leaf in the  horizontal foliation corresponding to the Coulomb gauge-fixing of $A$. 
To change the gauge fixing (foliation), say from Coulomb to axial gauge, one needs instead to act on the condition $\nabla^iA_i$ with a field-{\it dependent} gauge transformation, which thus changes functional form (right panel of fig. \ref{fig:gaugefixing}). 
Only field-independent gauge transformations, i.e. {\it leaf changes}, are Hamiltonian in DF (c.f. the condition $\dd \eta_\pp = 0$) and their boundary values correspond to the DF boundary symmetries.

Now, note that the tangent space $H_\sigma$ to a gauge fixing foliation defines a unique connection $\varpi=\varpi_\sigma$ through  $H_\sigma=\ker \varpi_\sigma$. Note that $\varpi_\sigma$ encodes all the leaves associated to the gauge fixings at once, and cannot tell them apart. This means that there is no analogue of the DF boundary symmetry  in a formulation based on $\varpi$. 
More explicitly, the Frobenius integrability of $H_\sigma$ means that $\varpi_\sigma$ is flat, i.e. $\varpi_\sigma=h^{-1}\dd h$ for some group valued field-space function $h$;  in this formulation, leaf changes are field-{\it in}dependent left translations of $h$, i.e. $h\mapsto g h$ with $\dd g g^{-1}= 0$, which do not affect $\varpi_\sigma$ at all. In this regard, see \cite[Sect.9]{GoHopfRie} on the relation between flat connections, gauge fixings and dressings (cf. also \cite{Francois2020}).

Conversely, as noted above, ``true'' changes of gauge fixing, i.e. changes of the gauge fixing foliation, can be obtained by acting on $\sigma$ by a field-{\it dependent} translation.
These transformations however are {\it not} Hamiltonian symmetries of the DF framework.
Indeed, it is straightforward to check that the DF symplectic structure is {\it not} invariant under such transformations.\footnote{This relies on the fact that if $\dd g g^{-1}\neq0$ then $\dd k k^{-1}\neq \dd (kg) (kg)^{-1} $.}
Therefore, the DF symplectic structure on the DF extended phase space depends on an implicit choice of a (boundary) gauge-fixing  as much as the reduced symplectic structure on a CSSS depends on an explicit choice of a connection $\varpi$.

(By an ``{\it implicit} choice of a gauge fixing,'' I mean that in the DF formalism there is no place for specifying which gauge fixing one is using in expressing e.g. $A$ up to boundary gauge transformations $\mathring\G$ in terms of the $\G$-gauge classes $(\underline A, k)\sim(\underline A^g, g^{-1}k)$. In this sense, the boundary gauge is {\it broken} rather than fixed.)

In sum, although the DF symplectic structure is $\varpi$ independent, it would be erroneous to conclude that it is also gauge-fixing independent---even though its dependence on a gauge fixing is implicit.
In the DF framework, a new Hamiltonian boundary symmetry arises which is not available in the CSSS framework and corresponds to changes of a leaf within a gauge-fixing foliation.
Importantly, changes in the choice of gauge fixing in the DF framework {\it fail} to be Hamiltonian symmetry of the corresponding reduced phase space.

In the light of this, it is curious to note that the extension of the phase space by the alternative Lie algebra-valued edge modes $\alpha$---although dependent on $\varpi$---does not suffer of the same type of ambiguity afflicting the DF extension. But since a $\varpi$ dependence is also akin to a gauge-fixing dependence, ultimately both extensions suffer of very similar issues.

\paragraph{A physical interpretation for the edge modes?}
A brief remark.
The gauge-breaking point of view developed above does {\it not} encompass possible ``emergent'' models of physical edge dof similar to those which underpin e.g. the quantum Hall effect in the effective Chern--Simons description,\footnote{In these effective models, both the bulk and boundary dof emerge as collective modes from the same set of underlying, or ``fundamental,'' dof---i.e. electrons and Maxwell fields. Also, recall that YM theories have a very different symplectic structure compared to Chern--Simons theories. See the remark after \eqref{eq:OmegaBasic}.}  nor physical models of the edge modes in terms of a physically coupled boundary system.

In the latter case, the boundary system is physical, i.e. corresponds to an actual ``object'' at the boundary of the region. In other words, the boundary is a physical interface. 
From this perspective group- or Lie algebra-valued edge modes seem a natural but far from unique choice of dof to be coupled at this interface.\footnote{Group valued edge modes are possibly the most ``natural'' insofar the gauge group acts freely on them. Also, see \cite{Wolf} for an example, among others, of a different kind of boundary dof.}
Still, in certain cases, DF-like variables {\it do} emerge naturally.
For example, one could couple the gauge system in question (say Abelian) with a superconductor material (at its boundary): if the superconductor is well-described through spontaneous symmetry breaking,\footnote{See e.g. \cite{WeinbergQFT2}.} then the superconductor's phase provides a physical model for the (Abelian) edge mode, so that changes in $k$ correspond to changes in the state of the superconductor {\it relative} to the gauge fields \cite{SussMemory} (see also the discussion of the Higgs connection in \cite[Sect.7]{GoHopfRie}).

Interpreting edge modes as models of a physical system living at the interface $\pp R$ means that edge modes in general do not ``disappear'' upon gluing of two complementary regions along $\pp R$,\footnote{In Chern--Simons their disappearance relies on the chirality of the edge theory.} a fact that might have consequences for the interpretation of the ``entangling (or fusion) product'' procedure of gluing introduced in \cite{DonFreid}.

\section{Gluing, briefly}\label{sec:gluing}

Finally, a few words about gluing based on joint work with Gomes \cite{GoRieYMdof}.
To talk about gluing without introducing further complications of topological origin, consider a setting where the gauge system lives over a topologically trivial Cauchy surface $\Sigma\cong \bb R^D$ or $\bb S^D$ viewed as the ``gluing'' of two complementary regions $\Sigma = R^+ \cup R^-$ across the interface $S=\pm\pp R^\pm$, with $R^\pm \cong \bb D^D$ and $S\cong \bb S^{D-1}$.

It has been argued that the introduction of edge modes as gauge reference frames is a necessary step in the reconstruction of global gauge-invariant dof from regional ones. 

It has also been argued that edge modes are necessary because without them the union of the regional dof does not encompass all the global dof, i.e. there are more physical dof in $\Sigma$ than in the disjoint union of $R^\pm$.\footnote{This corresponds to the non-factorizability of the Hilbert space of lattice gauge theory upon subdivision of the lattice. Also, a side note: in \cite{DonFreid} it was also made clear that edge modes are an over parametrization of the global dof, hence the authors' ``entangling product'' (or ``fusion product'')---that is a symplectic reduction procedure that introduces both a new constraint identifying the fluxes on the two sides of the interface and a quotienting procedure to mod-out the conjugate degree of freedom, i.e. ``half'' of the edge modes.}

The second statement is correct, the first one is not.
Contrary to what happens in a non-gauge system, the reduced symplectic structures on the CSSS's associated with $R^\pm$ and $\Sigma$ {\it fail} to be additive under the gluing $ (R^+, R^-)\to \Sigma = R^+ \cup R^-$, i.e. $\Omega^{\text{red}}_{\Sigma} \neq \Omega^{\text{red},[f]}_{R^+} + \Omega^{\text{red},[-f]}_{R^-} $, even after unfreezing the flux dof $f$.\footnote{Notice that in summing the regional symplectic structures, the two KKS contributions for the fluxes cancel each other. Indeed, for the gluing to be meaningful, the fluxes $f^\pm = u_\pm^{-1} f_o^\pm u_\pm$ must be equal and opposite to each other, $f^+ = - f^-$, and therefore---choosing  $f_o^+ = -f_o^-$ as references---one has $[u_+]=[u_-]$. If the fluxes did not match, it would mean that a charged system were present at the interface.}  
The missing dof in $\Omega^{\text{red}}_{\Sigma}$ are the dof conjugate to the electric flux through the interface $S=\pm\pp R^\pm$.

Despite this fact, it turns out that the missing term in $\Omega^{\text{red}}_{\Sigma}$ {\it can} be reconstructed from a combination of the dof present in the regional symplectic structures $\Omega^{\text{red},[\pm f]}_{R^\pm}$. 

This result might be surprising and is discussed in great detail in  \cite{GoRieYMdof}.
The crucial point is that the ``missing'' dof which spoil additivity are encoded in the {\it mismatch} of the regional radiative/horizontal dof at the interface $\pp R$.
This quantity is $\Sigma$-nonlocal and, clearly, can be computed from the knowledge of the radiative/horizontal modes in  {\it both} regions, without being encoded in either region alone. 
I hold the appearance of these mismatches to be a neat example of the relational interpretation of gauge theories \cite{GoRieGhost, GomesPhil, Rovelli1,Teh2015, Rovelli2}.\footnote{A note of warning: although I support their general perspective, I personally disagree with some of the more specific statements  regarding what is observable and what is not which are put forward in e.g.  \cite{Rovelli2}. This discussion, which would take us too far astray here, is related to the content of footnote \ref{fnt14}.}

\section{Conclusions}

In this article I have studied the construction of a symplectic structure on the reduced phase space of YM theory in the presence of boundaries.
I have done so both within covariant superselection sectors for the electric flux, and in a larger context where new ``edge-mode'' dof conjugate to the fluxes are included in an extended phase space.

The construction within covariant superselection sectors leads to a result that is completely canonical. 
In the non-Abelian case, this is achieved after one realizes that a canonical completion of the ``radiative'' symplectic structure ($\iota^*\Omega^H$) exists which equips the Coulombic electric field with its own symplectic structure. 
This completion is based on the Kirillov--Konstant--Souriau construction and leads to non-commutative electric fluxes whose boundary smearings generate---if $\varpi$ is flat---physical transformations of the underlying system (flux rotations).

The second construction relies on the inclusion à la Donnelly and Freidel (DF) of new edge dof to the phase space \cite{DonFreid}. 
Although this construction seems at first sight canonical, i.e. independent of any external choice, I argued that the result is nonetheless dependent on an (implicit) choice of gauge at the boundary. 
This is related to the fact that DF edge-modes can be constructed as would-be-gauge boundary dof originating in the incomplete reduction of the phase space by {\it bulk} gauge transformations only.

Importantly, the implicit gauge-fixing dependence present in the edge mode description fails to be a Hamiltonian symmetry of the resulting reduced phase space. (The DF boundary symmetries have a related, but more limited, interpretation.)
In other words, there is no residual ``meta-symmetry'' on the reduced phase space which allows one to ``physically'' implement changes in the choice of the gauge fixing---a choice which therefore remains imprinted in the formalism.
In sum, despite being the most natural choice in a context not restricted by a choice of (covariant) superselection sector, the DF framework includes new would-be-gauge dof by actually {\it breaking} gauge invariance at the boundary.

Compare this to what happens in a covariant superselection sector where \i no new dof need to be included and \ii the resulting symplectic form is completely canonical, i.e. independent from any external choice. Once again, the only surprising feature arising within a (non-Abelian) covariant superselection sector is the need to complete the symplectic structure for the covariantly superselected fluxes---but then the completion provided by the Kirillov--Konstant--Souriau construction is fully canonical.

Edge modes are also not necessary for ``gluing'' the YM dof supported on adjacent regions, a fact thoroughly discussed in \cite{GoRieYMdof}. There it is shown that a formulation of gluing that does not rely on edge modes has the advantage of revealing the characteristically nonlocal and relational features of the YM dof.

Finally, a word on the superselection of electric fluxes---which imply the superselection of charges \cite{StrocchiWightman
}.\footnote{See also: \cite{StrocchiMorchio, Buchholz1986, BalachandranVaidya2013} as well as \cite{Herdegen}. Moreover, for recent results on a residual gauge-fixing dependence of QED in the presence of (asymptotic) boundaries and flux superselection, see  \cite{DybaWege19}  (and also \cite{MundRehrenSchroer}). At present it is unclear how these recent results square with the classical treatment presented here.} 
This concept has been put under scrutiny and criticized in the past \cite{AharSuss} (cf. \cite{SpekkensEtAl}).
In this regard, note that in the present context the superselection of the fluxes is a consequence of restricting one's analysis to a specific region $R$ by deliberately ``tracing over'' its complement $\bar R$ in a Cauchy surface $\Sigma=R\cup\bar R$.\footnote{This language is borrowed from the literature on entanglement entropy, where superselection sectors do play a role \cite{Casini_gauge,DelDitRieEnt}.} Thus, the flux superselection is a consequence of this tracing, and {\it not} a property of the entire universe---a distinction that might get muddled when considering idealized asymptotic boundaries.
Furthermore, the fact that covariant superselection sectors admit a canonical symplectic structure, whereas the most natural way to go beyond flux superselection inherently breaks the gauge symmetry at the boundary, provides---in my view---a strong argument in favour of the viability of the notion of flux superselection as attached to finite regions.

\subsection*{Acknowledgements}
I am indebted with Henrique Gomes for the work done together on this topic over the years, and for his thoughtful comments on various aspects of this work.
I would also like to thank Hal Haggard for a valuable conversation and many fruitful questions, as well as Jordan François for his feedback.
Finally, my gratitude goes to Claire and her Little Fields Farm for the many joyful moments that helped me stay sane during 2020.

\appendix

\section{Appendix}

\subsection{Relation to Marsden--Weinstein--Meyer symplectic reduction}\label{app:AppMWM}
In this appendix I briefly sketch the relationship between the symplectic reduction procedure developed in this article and the theory of moments maps developed by Marsden and Weinstein, and Meyer (MWM) \cite{MarsdenWeinstein1974,Meyer1973}.
As mentioned in the introduction, drawing a rigorous relationship would however require not only a careful separation of bulk gauge transformation from boundary ones, but also of their moment maps---whose detailed analysis I leave to future work (see also \cite{Riello2021}).

\paragraph{MWM symplectic reduction in a nutshell}
Adapting the notation from the main text, 
consider a (finite dimensional) symplectic space $(\Phi,\Omega)$, on which a (finite) Lie group $\G$ acts smoothly,
\be
\begin{array}{rccl}
R:&\Phi \times \G &\to& \Phi,\\
&(\phi,g)&\mapsto& \phi^g :=R_g(\phi).
\end{array} 
\ee 
The action $R$ is said Hamiltonian if an equivariant moment map $J$ exists, i.e. if a map $J:\Phi\to\Lie(\G)^*$ exists such that for all $\xi\in\Lie(\G)$,
\be
\bb L_{\xi^\#}J = \mathrm{ad}^*_\xi J
\qquad\text{and}\qquad
\Omega(\xi^\#) = - \dd\langle J,\xi\rangle,
\ee
where $\xi^\#\in\Gamma(\mathrm T\Phi)$ is the infinitesimal flow generated by $\xi$. 
If $\Phi$ is closed and compact of dimension $2n$, the prescription $\int_\Phi \Omega^n J = 0$ fixes $J$ uniquely.

Then, according to the results of MWM, if $0\in\Lie(\G)^*$ is a regular value of $J$, the space
\be
\Phi//\G:=J^{-1}(0)/\G
\ee
is canonically equipped with a symplectic structure $\Omega^\text{red}$ inherited from $(\Phi,\Omega)$ through the equation 
\be
\tilde\pi^*\Omega^\text{red} = \iota^*\Omega
\label{eq:MWM}
\ee
where $\tilde\pi: J^{-1}(0) \to J^{-1}(0)/\G$ and $\iota: J^{-1}(0) \hookrightarrow \Phi$ are a canonical projection and injection, respectively.

This result can be generalized\footnote{The original work referenced above already presented the result in this more general case. See also \cite{sjamaar1991} for further generalizations.} to non-vanishing values of $J$. 
A convenient way to think of this generalization is via the ``shifting trick'' \cite[Sect.26]{guillemin1990symplectic} (and \cite{Arms1996}). 
Consider first the coadjoint orbit $[j] \subset \Lie(\G)^*$ equipped with its canonical KKS symplectic structure $\omega^{[j]}_\text{KKS}$. On $([j],\omega_\text{KKS}^{[j]})$, the coadjoint action of $\G$ is Hamiltonian with respect to the (identity) moment map $I: [j]\hookrightarrow\Lie(\G)^*$. 
Thus, the enlarged symplectic space
\be
(\tilde\Phi , \tilde\Omega) := (\Phi \times [j] , \Omega + \omega_\text{KKS}^{[j]})
\ee
carries a Hamiltonian action of $\G$ with moment map 
\be
\tilde J= J - I,
\ee
which allows one to define the symplectic reduction of $\Phi$ at $J\neq 0$ in terms of the reduction of the enlarged space $\tilde \Phi$ at $\tilde J =  0$:
\be
\Phi//_{\hspace{-1pt}[j]\hspace{1pt}}\G := \tilde J^{-1}(0)/\G.
\ee
Since the action of $\G$ on $[j]$ is transitive, none of its dof survive the quotient and the only role of the enlargement of $(\Phi,\Omega)$ by  $([j],\omega^{[j]}_\text{KKS})$ has been that of shifting the value of the moment map $J$ at which the reduction is performed.\footnote{It is not hard to see that the reduced space $\Phi//_{\hspace{-1pt}[j]\hspace{1pt}}\G$ is also equal to $J^{-1}([j])/\G$, which is also isomorphic to $J^{-1}(j_o)/\G_{j_o}$ for some $j_o\in[j]$ with stabilizer $\G_{j_o}$. The space $J^{-1}(j_o)/\G_{j_o}$ is the one referred to in the original work \cite{MarsdenWeinstein1974}.}

\paragraph{Relation to YM theory without boundaries}
Ignoring complications due to an infinite dimensional context and the presence of reducible configurations, the MWM symplectic reduction can be directly applied to the YM phase space when boundaries are absent.
Then, on a Cauchy surface $\Sigma$, $\pp\Sigma=\emptyset$, one has for pure YM theory: $\Phi = \mathrm T^*\A$, $\A= \Omega^1(\Sigma,\Lie(G))\ni A$, $\Omega = \int_\Sigma \sqrt{g}\, \tr(\dd E^i\curlywedge\dd A_i)$, $\G = C^\infty(\Sigma,G)$ and $R_g(A) = A^g = g^{-1}A g + g^{-1}\d g$.

From these formulas it is easy to verify that the relevant moment map $J:\Phi\to \Lie(\G)^*$ for the gauge transformations $A\mapsto A^g$ is given by the Gauss constraint\footnote{If $\Phi$ was enlarged to contain charged matter fields, then $J$ would be equal to the full Gauss constraint featuring their charge density $\rho$.} $\GC = \D_i E^i$
\be
\langle J, \bullet \rangle = \int_\Sigma \sqrt{g}\, \tr(  \GC \bullet ) \qquad (\partial \Sigma = \emptyset),
\ee
and that the MWM symplectic reduction on $J^{-1}(0)/\G$ coincides with the one detailed in the main text. 
This is most directly seen by comparing equation \eqref{eq:MWM} with \eqref{eq:Omegared} and noticing that \emph{in the absence of boundaries}: (\textit{i})  $\iota$ and $\tilde\pi$ have the same meaning in both equations; (\textit{ii}) $\iota^*\Omega^H = \iota^*\Omega$, since $\Omega-\Omega^H = \dd \theta^V = - \dd \int \sqrt{g}\,\tr(\GC \varpi)$ which identically vanishes when pulled back by $\iota$. (Incidentally, this also shows that in the absence of boundaries there is no residual dependence on $\varpi$ left on the lhs of \eqref{eq:Omegared}.)

Notice that reduction at $J\neq 0$ has no physical bearing in this case:\footnote{Unless one is willing to consider the presence of non-dynamical charges---i.e. charged particles that source Gauss without being themselves included in the phase space of the theory.} all physical configurations of a gauge theory must satisfy the Gauss constraint.

\paragraph{Boundaries and superselection sectors}
Restricting the above construction to a smooth and bounded $R\subset\Sigma$, $\pp R\neq \emptyset$, one recognizes that the correct moment map for the action of $\G$ is not directly given by the Gauss constraint, but rather by its ``integration by parts:'' 
\be
\langle J, \bullet \rangle = \int_R \sqrt{g}\, \tr( E^i \D_i \bullet  ) = \int_R \sqrt{g}\, \tr(  \GC \bullet) + \oint_{\pp R} \sqrt{g}\,\tr(  E_s \bullet ).
\ee

On the one hand, this equation makes clear that, in the presence of boundaries, one cannot be satisfied by performing the MWM symplectic reduction solely at $J=0$, for this would demand not only the vanishing of the Gauss constraint in the interior of  $R$, but also the vanishing of the electric flux through $\pp R$---a condition which is physically overly restrictive. 

On the other hand, performing the MWM through the shifting trick at a generic orbit $[j]$ is also not physically satisfactory, for choosing a generic orbit $[j]$ would introduce violations of the Gauss constraint in the interior of $R$. 

In fact, the ideal solution sits somewhere in between these two extremes: heuristically, one would like to focus precisely on those ``distributional'' orbits $[j]$ such that $j\in[j]$ takes nontrivial values at the boundary $\pp R$ only. In other words, one would like to fix the above bulk integral to zero, while letting the flux $E_s$ to belong to the (coadjoint) orbit of a \emph{certain} nonvanishing electric flux $f\neq0$. Rather symbolically, this could be written as $[j]=[f]$.

By a careful separation of the radiative and Coulombic components of the electric field, as done in the main body of this article, one can in principle isolate the bulk and boundary components of $J$ and thus provide---through the application of the ``shifting trick'' to the boundary component only---an alternative, albeit ultimately equivalent, explanation of equations \eqref{eq:eq1} or \eqref{eq:OmegaHfOmega}, and hence of the symplectic reduction by flux superselection sectors advocated for in this article. 

The reader wishing to make the connection with the MWM paradigm of symplectic reduction more explicit can find in \cite{Riello2021} a presentation of many of the results of this paper which follows much more closely the MWM paradigm (but contains no \emph{explicit} reference to the shifting trick). That article provides an elementary (and much less general) summary of the results of this article and of \cite{GoRieYMdof}.

\subsection{Uniqueness of $E_\Coul$}\label{app:UniqEcoul}
The goal of this appendix is to prove the uniqueness of $E_\Coul$ as a solution to the Gauss constraint $\GC^f$ \eqref{eq:GaussCoul} at irreducible configurations of $A$. 

\begin{Def}[Reducibility parameters]
 $\chi\in\Lie(\G)$ is said a \emph{reducibility parameter} for $A$ if and only if it is such that $\D\chi := \d \chi + [A,\chi] = 0$.  
\end{Def}

Reducibility parameters are to YM configurations what Killing vector fields are to metrics in General Relativity: {\it global} symmetries. The set of reducibility parameters of a configuration $A$ forms a vector space (and in fact a Lie algebra) whose dimension is necessarily {\it finite} and bounded by $\dim(G)$. This dimension is maxed out by vacuum configurations with $F[A]=0$.

\begin{Def}[Irreducible configurations of $A$]
A configuration of the gauge potential $A\in\A$ is said \emph{irreducible} if and only if its only reducibility parameter is the vanishing one, $\chi=0$.
\end{Def}

In Abelian theories all configurations are reducible (consider $\chi = const$). In non-Abelian theories, on the other hand, irreducible configurations constitute a dense set in $\A$. See footnote \ref{fnt:reducible}.

\begin{Def}[SdW boundary value problem---cf. section \ref{sec:SdW}]
Given a region $R$, the following elliptic boundary value problem for the $\Lie(G)$-valued function $\xi$
$$
\begin{cases}
\D^2 \xi =\alpha& \text{in }R,\\
\D_s \xi = \beta & \text{at }\pp R,
\end{cases}
$$
is called a \emph{Singer--DeWitt (SdW) boundary value problem} with bulk source $\alpha$ and boundary condition $\beta$ (both valued in $\Lie(G)$).
\end{Def}

\begin{Lemma}[Kernel of the SdW boundary value problem]\label{Lemma:SdW} The kernel of the SdW boundary value problem  at $A\in\A$ is given by the irreducibility parameters of $A$. 
\end{Lemma}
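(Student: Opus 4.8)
The plan is to prove the two inclusions separately. The substantive content lies entirely in the reverse inclusion, which I would handle via a covariant Green's identity combined with the positive-definiteness of the supermetric $\bb G$. The forward inclusion is immediate: if $\chi$ is a reducibility parameter, so that $\D\chi=0$, then $\D^2\chi=\D^i\D_i\chi=0$ in $R$ and $\D_s\chi=s^i\D_i\chi=0$ at $\pp R$, whence $\chi$ solves the homogeneous SdW problem and lies in the kernel.

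For the converse, suppose $\xi$ solves $\D^2\xi=0$ in $R$ with $\D_s\xi=0$ at $\pp R$. First I would establish the covariant integration-by-parts formula
\be
\pp^i\,\tr(\xi\,\D_i\xi) = \tr(\D^i\xi\,\D_i\xi) + \tr(\xi\,\D^2\xi),
\ee
which holds because the $\Ad$-invariance of the trace form makes the two connection contributions cancel, so that the covariant divergence of the current $\tr(\xi\,\D_i\xi)$ collapses to an \emph{ordinary} divergence. Integrating over $R$ and applying the ordinary divergence theorem then yields
\be
\oint \sqrt{h}\,\tr(\xi\,\D_s\xi) = \int \sqrt{g}\,g^{ij}\,\tr(\D_i\xi\,\D_j\xi) + \int \sqrt{g}\,\tr(\xi\,\D^2\xi).
\ee
Inserting the homogeneous conditions, the left-hand side vanishes because $\D_s\xi=0$ at $\pp R$, and the last term vanishes because $\D^2\xi=0$ in $R$, leaving $\int\sqrt{g}\,g^{ij}\tr(\D_i\xi\,\D_j\xi)=0$.

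To conclude, I would observe that this surviving integral is precisely $\bb G(\xi^\#,\xi^\#)$, the supermetric evaluated on the vertical vector $\xi^\#=\int(\D_i\xi)\tfrac{\delta}{\delta A_i}$. Since $\bb G$ is positive-definite, the vanishing of $\bb G(\xi^\#,\xi^\#)$ forces $\D_i\xi=0$ pointwise throughout $R$, so $\xi$ is a reducibility parameter; together with the forward inclusion this gives the claimed equality of the kernel with the space of reducibility parameters. The two points I expect to be load-bearing---and which I would flag as the genuine content rather than routine---are the $\Ad$-invariance of the trace form (this is exactly what makes the connection terms drop and turns the covariant divergence into a total one, so that Stokes applies), and the definiteness of $\bb G$ (this is what upgrades ``the integral of a square vanishes'' to ``the square vanishes pointwise''). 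It is also worth stressing that the Neumann-type boundary condition $\D_s\xi=0$ is what annihilates the boundary term; this is the point at which the unrestricted gauge freedom at $\pp R$ emphasized in the main text enters the argument.
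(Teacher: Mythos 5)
Your proof is correct and takes essentially the same route as the paper's: the paper's single displayed computation, $0 = -\int \sqrt{g}\,\tr(\xi\,\D^2\xi) + \oint \sqrt{h}\,\tr(\xi\,\D_s\xi) = \int \sqrt{g}\,g^{ij}\tr(\D_i\xi\,\D_j\xi) = \bb G(\xi^\#,\xi^\#)$, is precisely your covariant Green's identity specialized to the homogeneous data, and the conclusion there likewise rests on the positive-definiteness of $\bb G$ forcing $\D\xi=0$ pointwise. You have simply made explicit the two steps the paper leaves implicit---the $\Ad$-invariance of the trace that turns the covariant divergence into an ordinary one, and the definiteness argument---so there is nothing to correct.
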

\begin{proof}
First notice that, by definition, a $\fG$-valued variable $\xi$ is in the kernel of the SdW boundary value problem if and only if
$$
\begin{cases}
\D^2 \xi =0 & \text{in }R,\\
\D_s \xi = 0 & \text{at }\pp R.
\end{cases}
$$
Clearly any $\xi$ which is a reducibility parameter of $A$ satisfies this condition. To see why the converse is also true notice that from this equation one deduces
$$
0 = - \int \sqrt{g} \,\tr(\xi \D^2 \xi ) + \oint \sqrt{h} \, \tr(\xi \D_s \xi) = \int \sqrt{g} \,g^{ij}\tr(\D_i\xi \D_j \xi ) = \bb G( \xi^\#, \xi^\#) ,
$$
which vanishes if and only if $\D\xi =0$, i.e. if and only if $\xi$ is a reducibility parameter of the configuration $A$.
\end{proof}

\begin{Prop}[Uniqueness of $E_\Coul$]\label{Prop:uniqueCoul} Suppose that $A\in\A$ is an irreducible. Then, for any choice of functional connection $\varpi$ and electric flux $f$, the Gauss constraint $\GC_f = 0$ has one and only one solution $E_\Coul=E_\Coul(A,\rho,f)$.
\end{Prop}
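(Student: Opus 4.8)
The plan is to reduce the general statement to the Singer--DeWitt (SdW) case, where the Coulombic condition linearizes into a scalar elliptic boundary value problem, and then to bootstrap to an arbitrary $\varpi$ by a purely algebraic argument. The conceptual input is that, at an irreducible $A$, the connection $\varpi$ yields a genuine splitting $\T_A\A = H\oplus V$ with $H=\ker\varpi$; dualizing gives $\T^*_A\A = \mathrm{Ann}(V)\oplus\mathrm{Ann}(H)$, and this is precisely the radiative/Coulombic split of the electric field. Indeed $E_\rad\in\mathrm{Ann}(V)$ is characterized by the $\varpi$-independent condition \eqref{eq:Erad}, while $E_\Coul\in\mathrm{Ann}(H)$ is characterized by the verticality condition \eqref{eq:Ecoul}, which says exactly that $E_\Coul$ annihilates the horizontal vectors $\hat H(\bb X)$ and does depend on $\varpi$. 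Since $E_\rad$ solves the homogeneous version of \eqref{eq:GaussCoul} identically, the Gauss constraint transfers entirely onto $E_\Coul$, and the claim becomes: among Coulombic fields there is exactly one solving $\D_i E^i_\Coul = \rho$ in $R$ and $s_i E^i_\Coul = f$ at $\pp R$.

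First I would settle the SdW case. There $E_\Coul^i = g^{ij}\D_j\varphi$, so \eqref{eq:GaussCoul} becomes the SdW boundary value problem $\D^2\varphi = \rho$ in $R$, $\D_s\varphi = f$ at $\pp R$. By Lemma \ref{Lemma:SdW} the kernel of this problem consists of the reducibility parameters of $A$, which is trivial by hypothesis; formal self-adjointness of $\D^2$ with the Neumann-type datum $\D_s$ (established by the same Green's identity used to prove Lemma \ref{Lemma:SdW}) identifies the cokernel with the kernel, so no solvability obstruction survives and $\varphi$ --- hence $E_\Coul^{(\sdw)}$ --- exists and is unique. This is exactly the point at which, in the reducible case, the finitely many integral Gauss laws $\int_R\sqrt{g}\,\tr(\chi\rho) = \oint_{\pp R}\sqrt{h}\,\tr(\chi f)$ (one per reducibility parameter $\chi$) would reappear as compatibility conditions.

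For a general $\varpi$ uniqueness is immediate: if $E_\Coul$ and $E_\Coul'$ are two Coulombic solutions, their difference $\Delta = E_\Coul - E_\Coul'$ lies in $\mathrm{Ann}(H)$ and satisfies $\D_i\Delta^i = 0$, $s_i\Delta^i = 0$, hence also $\Delta\in\mathrm{Ann}(V)$; therefore $\Delta\in\mathrm{Ann}(H)\cap\mathrm{Ann}(V) = \mathrm{Ann}(H+V) = \{0\}$, where irreducibility guarantees $H+V = \T_A\A$. For existence I would transport the SdW solution: using $\T^*_A\A = \mathrm{Ann}(V)\oplus\mathrm{Ann}(H)$, set $E_\Coul^{(\varpi)} := P_{\mathrm{Ann}(H)}\,E_\Coul^{(\sdw)}$, the projection along $\mathrm{Ann}(V)$. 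By construction $E_\Coul^{(\varpi)}\in\mathrm{Ann}(H)$ is Coulombic, and it differs from $E_\Coul^{(\sdw)}$ by an element of $\mathrm{Ann}(V)$ --- a radiative field, which by \eqref{eq:Erad} has vanishing divergence and normal trace --- so it still solves \eqref{eq:GaussCoul} with the same $(\rho,f)$. Note that $\mathrm{Ann}(V)$ is $\varpi$-independent, which is what legitimizes calling the correction radiative regardless of the connection.

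The only genuinely analytic step is the solvability of the SdW boundary value problem in the second paragraph; everything else is the linear algebra of the complementary distributions $H,V$ and their annihilators. I therefore expect the main obstacle to be the elliptic theory underlying that solvability (Fredholm alternative, self-adjointness, regularity), which in this article is taken as given: SdW boundary value problems are assumed uniquely invertible at irreducible configurations, with Lemma \ref{Lemma:SdW} supplying the crucial fact that their kernel is exhausted by reducibility parameters.
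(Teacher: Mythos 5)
Your proof is correct at the paper's level of rigor and follows the same two-step skeleton as the paper's own proof: first settle the SdW case by reducing the Gauss constraint to an SdW boundary value problem whose kernel is trivial at irreducible $A$ (Lemma \ref{Lemma:SdW}, together with the standing assumption that SdW problems are invertible---your Fredholm remark identifying the cokernel with the kernel is a small addition the paper leaves implicit), then bootstrap to an arbitrary $\varpi$. It is only in the bootstrap that you genuinely diverge. The paper writes $\varpi' = \varpi_\sdw + \nu$ with $\nu$ horizontal and covariant, decomposes the \emph{unknown} as $(E'_\Coul)^i = \varepsilon_\rad^i + g^{ij}\D_j\gamma$ in the SdW basis, and converts the Coulombic condition \eqref{eq:Ecoul} into the explicit system \eqref{eq:Ecoul_unique}: $\gamma$ solves the same SdW problem as in Part 1 (hence $\gamma=\varphi$), while $\varepsilon_\rad$ is pinned down as the component representation of the known SdW-horizontal 1-form $\alpha_\perp = \int \sqrt{g}\,\tr\big(\D^i\gamma\, \D_i\nu(\hat H_\sdw(\cdot))\big)$. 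You instead argue structurally: uniqueness because the difference of two Coulombic solutions lies in $\mathrm{Ann}(H)\cap\mathrm{Ann}(V)=\mathrm{Ann}(H+V)=\{0\}$, and existence by projecting $E^{(\sdw)}_\Coul$ onto $\mathrm{Ann}(H)$ along $\mathrm{Ann}(V)$. Your uniqueness argument is cleaner than the paper's---it never routes through the SdW decomposition and isolates exactly what is used, namely $H+V=\T_A\A$ and nondegeneracy of the pairing. Your existence argument, however, buries the one substantive point in the word ``projection'': one must know that the dual splitting $\T^*_A\A = \mathrm{Ann}(V)\oplus\mathrm{Ann}(H)$ restricts to the space of smooth electric fields, i.e.\ that $P_{\mathrm{Ann}(H)}E^{(\sdw)}_\Coul$ is again a bona fide field with a divergence and a normal trace rather than a bare functional. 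The paper's explicit system is precisely the cashing-out of that step: it exhibits the $\mathrm{Ann}(V)$-component concretely as the radiative field $\varepsilon_\rad$ determined by the last line of \eqref{eq:Ecoul_unique}. Since the paper itself only asserts (rather than proves) the field-representability of $\alpha_\perp$, your version is not weaker than the original, and you do flag the deferral to elliptic theory; but a complete write-up should make that restriction-to-fields step explicit, either by invoking the paper's computation or by a regularity argument for the operator $\nu$. One byproduct of the paper's concrete route that your abstract one yields only implicitly is the remark recorded after the proof: $\gamma=\varphi$ for every connection, so the Coulombic fields associated to any two functional connections differ by a radiative field.
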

\begin{proof}
The proof of this statement proceeds in two steps. In the first step I prove the existence and uniqueness of the solution to the Gauss constraint for the SdW choice of connection, i.e. $\varpi=\varpi_\sdw$. In the second step, I show that this result can be used to prove existence and uniqueness for any other choice of connection.

{\it Part 1.} 
For the SdW choice of connection $E^i_\Coul = g^{ij}\D_j \varphi$, the Gauss constraint becomes a SdW boundary value problem:
$$
\begin{cases}
\D^2 \varphi = \rho & \text{in }R,\\
\D_s \varphi = f & \text{at }\pp R
\end{cases}
\qquad\text{(SdW)}.
$$
From the invertibility of the SdW boundary value problem at irreducible configurations (Lemma \ref{Lemma:SdW}), we deduce existence and uniqueness of $\varphi$.

{\it Part 2.} 
Consider now an arbitrary connection $\varpi' = \varpi_\sdw + \nu$ where $\nu$ is a horizontal and covariant 1-form in $\Omega^1(\A,\fG)$, i.e. for any field-dependent $\xi$, $\bb i_{\xi^\#} \nu = 0$ and $\bb L_{\xi^\#} \nu = [\nu, \xi]$ (see \eqref{eq:varpi}).

Denoting with a prime (e.g. $E'_\Coul$) the quantities constructed from $\varpi'$ rather than $\varpi_\sdw$, solving the Gauss constraint for $E_\Coul'$ means solving the following system of equations:
\begin{AEquation}
\begin{cases}
\D_i (E_\Coul')^i = \rho & \text{in }R,\\
s_i(E_\Coul')^i = f & \text{at }\pp R,\\
 \int  \sqrt{g} \,\tr( (E'_\Coul)^i \dd_{H'} A_i) = 0.
\end{cases}
\label{eq:GaussPrime}
\end{AEquation}
where the last equation is a rewriting of \eqref{eq:Ecoul}.

Now, decompose $E = E'_\Coul$ into its SdW-radiative ($\varepsilon_\rad$) and SdW-Coulombic ($\D\gamma$) componets, that is $(E'_\Coul)^i = \varepsilon_\rad^i + g^{ij}\D_j \gamma$.
Also, observe that  $\dd_{H'}A = \dd_\perp A - \D \nu \equiv \dd_\perp A - \D \nu(\hat H_\sdw(\cdot))$, where the last equality follows from $\bb i_{\xi^\#} \nu = 0$.
Inserting these formulas in the system of equations above, and using \eqref{eq:Erad} for $\varepsilon_\rad$, one obtains:
\begin{AEquation}
\label{eq:Ecoul_unique}
\begin{cases}
\D^2 \gamma  = \rho & \text{in }R,\\
\D_s \gamma= f & \text{at }\pp R,\\
 \int  \sqrt{g} \,\tr( \varepsilon_{\rad}^i \dd_\perp A_i) = \int  \sqrt{g} \,\tr( \D^i\gamma \D_i\nu(\hat H_\sdw(\cdot))) .
\end{cases}
\end{AEquation}
From Part 1, $\gamma$ is uniquely determined by $A$, $\rho$ and $f$. To finish the uniqueness proof for $E'_\Coul$, the component $\varepsilon^i_\rad$ must also be shown unique from the last equation  of \eqref{eq:Ecoul_unique}. 

From the existence and uniqueness of $\gamma$, the right hand side of that equation yields  a well-determined SdW-horizontal one-form $\alpha_\perp := \int \sqrt{g} \,\tr(\D^i \gamma \D_i\nu(\hat H_\sdw(\cdot)) \in\T^*\Phi$. 

Since $\varepsilon_\rad$ is by construction radiative (i.e. satisfies \eqref{eq:Erad}), one sees that $$  \int  \sqrt{g} \,\tr( \varepsilon_{ \rad}^i \dd_\perp A_i) \equiv \int \sqrt{g} \,\tr(\varepsilon_\rad^i \dd A_i)$$ and therefore  from the last of \eqref{eq:Ecoul_unique} it follows that $\varepsilon^i_\rad$ is nothing but the ``component'' description of the 1-form $\alpha_\perp$. Since $\alpha_\perp$ is uniquely defined, so must be $\varepsilon^i_\rad$. 

 Thus, having uniquely determined $\gamma$ and $\varepsilon_\rad$ in terms of $(A,\rho,f)$, we have uniquely determined $(E'_\Coul)^i = \varepsilon_\rad^i + g^{ij}\D_j \gamma$ as well.
This concludes the proof.
\end{proof}

Note that, from \eqref{eq:GaussPrime} and \eqref{eq:Ecoul_unique}, $\gamma = \varphi$ and therefore the difference between the Coulombic modes associated to two different functional connections is always radiative, i.e. $E'_\Coul - E_\Coul = \varepsilon_\rad$.

\subsection{The kernel of $\iota^*\omega^H$: proof of \eqref{eq:kerOmegaH}}\label{app:fluxrot}
The goal of this appendix is to prove  \eqref{eq:kerOmegaH} which states that 
$$
\ker(\iota^*\Omega^H) = V^{[f]}\oplus  Y^{[f]}.
$$

(In this appendix, as in the main body of this article, I neglect {\it  reducible} configurations; cf. appendix \ref{app:UniqEcoul}, footnote \ref{fnt:reducible}, and the comment at the end of section \ref{sec:thecompletion}.)

\begin{Lemma}\label{Lemma:GaussVar}
 Given a choice of functional connection $\varpi$,
let $\phi$ be an on-shell  configuration $\phi\in\{\GC^f=0\}$ and  $\bb X\in \iota_*( \T_\phi\Phi^{[f]})$ a variation tangent to a CSSS (i.e. $\bb X$ this preserves the validity of the Gauss constraint but only the equivalence class of the flux $f$). 
Denote $\eta: = \varpi(\bb X)$  and $\delta_\bb X f:=\bb X (f) $. 
Suppose that 
 $\bb X$ acts on  all field components\footnote{Seen as (coordinate) \emph{functions} on field space, so that e.g. $\bb X(A)=X_A$ for $\bb X=\int X_A\frac{\delta}{\delta A}+\dots$. } except $E_\Coul$ as the gauge transformation $\eta$ would: that is $\bb X(\bullet)  = \eta^\#(\bullet) $   for $\bullet \in \{ A, E_\rad, \psi, \bar \psi\}$. 
 Then, $\bb X$ is uniquely determined by $\eta$ and its own action on $f$, according to the formula $\bb X = \eta^\# + \bb Y$ where
\begin{enumerate}[(i)]
\item $\bb Y$ is a functional of $(\delta_{\bb X} f + \mathrm{ad}_{\eta|\pp R}  f)$, i.e. $\bb Y := \bb Y[\,\delta_{\bb X} f + \mathrm{ad}_{\eta|\pp R}  f \,]$;
\item $\bb Y = 0$ if and only if $\delta_{\bb X} f = [f,\eta_{|\pp R}]$;
\item $\bb Y$  is tangent to $ \Phi^{[f]}\subset \Phi$;
\item $\bb Y$ is $\varpi$-horizontal, i.e. $\varpi(\bb Y)=0$;
\item Finally, if $\varpi=\varpi_\sdw$, then 
\begin{AEquation}
\bb Y^{(\sdw)} = \int \zeta \frac{\delta}{\delta \varphi}
\quad\text{where}\quad
\begin{cases}
\D^2 \zeta = 0 & \text{in }R,\\
\D_s \zeta =  \delta_{ \bb X} f - [f, \eta_{|\pp R}]& \text{at }\pp R
\end{cases}
\qquad\qquad\text{(SdW)}.
\label{eq:Ysdw}
\end{AEquation}
\end{enumerate}
\end{Lemma}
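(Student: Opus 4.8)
The plan is to set $\bb Y := \bb X - \eta^\#$, with $\eta := \varpi(\bb X)$, and to show that $\bb Y$ acts nontrivially only on $E_\Coul$, its action being fixed by a boundary value problem of exactly the same type as the Coulombic Gauss constraint \eqref{eq:GaussCoul}. By hypothesis $\bb X(\bullet) = \eta^\#(\bullet)$ for $\bullet\in\{A,E_\rad,\psi,\bar\psi\}$, so $\bb Y(\bullet) = 0$ for all these coordinate functions and all the content of $\bb Y$ sits in $\bb Y(E_\Coul)$. Property \iv is then immediate: since $\varpi$ is pulled back from $\A$ and $\bb Y(A)=0$, one has $\bb i_{\bb Y}\varpi = 0$; equivalently $\bb i_{\bb Y}\varpi = \varpi(\bb X) - \bb i_{\eta^\#}\varpi = \eta - \eta = 0$ by the projection property in \eqref{eq:varpi}. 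Property \iii follows because $\bb Y$ is the difference of two vectors tangent to $\Phi^{[f]}$, namely $\bb X$ (by assumption) and the gauge direction $\eta^\#$, and is therefore itself tangent to $\Phi^{[f]}$.

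Next I would pin down $\bb Y(E_\Coul)$ by differentiating \eqref{eq:GaussCoul} along $\bb X = \eta^\# + \bb Y$. Because $\eta^\#$ is a gauge transformation and the Gauss constraint is gauge-covariant, $\eta^\#(\D_iE^i_\Coul-\rho)=[\D_iE^i_\Coul-\rho,\eta]$ vanishes on shell; and since $\bb Y$ fixes $A$ and the matter fields, $\bb Y(\D_iE^i_\Coul-\rho) = \D_i\bb Y(E^i_\Coul)$. The bulk equation thus reduces to $\D_i\bb Y(E^i_\Coul)=0$ in $R$. Linearizing the boundary condition $s_iE^i_\Coul=f$ and using $s_i\eta^\#(E^i_\Coul)=[f,\eta_{|\pp R}]$ on shell gives $s_i\bb Y(E^i_\Coul) = \delta_{\bb X}f - [f,\eta_{|\pp R}] = \delta_{\bb X}f + \mathrm{ad}_{\eta_{|\pp R}}f$ at $\pp R$. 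Finally, since the radiative/Coulombic projectors depend only on $A$, which $\bb Y$ leaves fixed, the hypothesis $\bb Y(E_\rad)=0$ forces $\bb Y(E_\Coul)$ to remain Coulombic, i.e. to satisfy the third, $\varpi$-dependent condition $\int\sqrt{g}\,\tr(\bb Y(E^i_\Coul)\dd_H A_i)=0$ of \eqref{eq:Y} (this is the linearized version of \eqref{eq:Ecoul}). These three equations are precisely \eqref{eq:Y} with source $\delta_{\bb X}f+\mathrm{ad}_{\eta_{|\pp R}}f$.

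This system has the same structure as the Gauss constraint \eqref{eq:GaussCoul} — a covariant-divergence equation, a normal boundary datum, and a Coulombic (horizontality) condition — so at irreducible $A$ it is uniquely solvable by the argument of Proposition \ref{Prop:uniqueCoul}. Since the source enters linearly and only through the boundary datum, this yields \i: $\bb Y=\bb Y[\delta_{\bb X}f+\mathrm{ad}_{\eta_{|\pp R}}f]$ depends linearly on that combination alone. Uniqueness also yields \ii: the homogeneous problem (zero bulk source, zero boundary datum, Coulombic) admits only the trivial solution, so $\bb Y=0$ if and only if $\delta_{\bb X}f+\mathrm{ad}_{\eta_{|\pp R}}f=0$, i.e. if and only if $\delta_{\bb X}f=[f,\eta_{|\pp R}]$.

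For \v I would specialize to $\varpi=\varpi_\sdw$, where the Coulombic condition forces $\bb Y(E^i_\Coul)=g^{ij}\D_j\zeta$ for a $\fG$-valued $\zeta$. Substituting into the bulk and boundary equations turns $\D_i\bb Y(E^i_\Coul)=0$ into $\D^2\zeta=0$ in $R$ and $s_i\bb Y(E^i_\Coul)=\delta_{\bb X}f-[f,\eta_{|\pp R}]$ into $\D_s\zeta=\delta_{\bb X}f-[f,\eta_{|\pp R}]$ at $\pp R$, reproducing \eqref{eq:Ysdw}; reading off the action on the SdW Coulombic potential then gives $\bb Y^{(\sdw)}=\int\zeta\frac{\delta}{\delta\varphi}$. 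The one genuinely delicate point — unique solvability of the Coulombic-type boundary value problem at irreducible configurations, via Lemma \ref{Lemma:SdW} — is already settled by Proposition \ref{Prop:uniqueCoul}, so the main obstacle reduces to the bookkeeping check that the third condition of \eqref{eq:Y} is exactly the requirement that $\bb Y(E_\Coul)$ stay Coulombic, which follows from the $A$-only dependence of the split.
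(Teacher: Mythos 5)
Your proof is correct and follows essentially the same route as the paper's: set $\bb Y = \bb X - \eta^\#$, linearize the three-part Gauss system \eqref{eq:GaussPrime} along $\bb X$ to obtain the boundary value problem for $\bb Y(E_\Coul)$ with source $\delta_{\bb X}f - [f,\eta_{|\pp R}]$, invoke Proposition \ref{Prop:uniqueCoul} for unique (linear) solvability to get (i)--(ii), with (iii)--(iv) by the same difference-of-tangent-vectors and pullback-of-$\varpi$ observations, and (v) by the pure-gradient reduction to an SdW problem. The only cosmetic difference is your justification of the third (Coulombic) condition — arguing that the projectors are frozen because $\bb Y(A)=0$ — where the paper instead Lie-derives the defining identity $\int\sqrt{g}\,\tr(E^i_\Coul\,\dd_H A_i)=0$ along $\bb X$ and subtracts the $\eta^\#$ contribution; the two arguments are equivalent.
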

\begin{proof}
Note that $\bb X$ is uniquely determined if so is its action on on the remaining coordinate on $\Phi$, i.e. $E_\Coul$.
Therefore, the uniqueness of $\bb X$ as a functional of $\eta$ and $\delta_{\bb X}f$ is a corollary of proposition \ref{Prop:uniqueCoul} which states the uniqueness of the solution  of the Gauss constraint $E_\Coul$. 
Indeed, if $(A,\rho,f)$ determine $E_\Coul$ uniquely then these quantities and their first order variations, that is $({\bb X}(A), {\bb X}(\rho), {\bb X}(f))$, uniquely determine the first order variation of $E_\Coul$.

Therefore, let me prove that $\bb X(E_\Coul)$ is uniquely determined in terms of $\eta$ and $\delta_{\bb X}f$.
For this, consider first the variation of\footnote{Here we suppress the prime, $E'_\Coul \leadsto E_\Coul$.} \eqref{eq:GaussPrime} along an {\it arbitrary} direction $\bb X$:
$$
\begin{cases}
\D_i \bb X(E^i_\Coul) = \bb X(\rho) - [ \bb X(A_i), E^i_\Coul ] &\text{in }R,\\
s_i\bb X(E^i_\Coul) = \bb X(f) &\text{at }\pp R,\\
 \bb L_{\bb X} \int  \sqrt{g} \,\tr( E_\Coul^i \dd_H A_i) = 0.
\end{cases}
$$
The last equation states that variations along $\bb X$ do not alter the fact that $E_\Coul$ satisfies its defining functional property, that is $\int  \sqrt{g} \,\tr( E_\Coul^i \dd_H A_i) = 0$ (see the proof of \eqref{Prop:uniqueCoul}).
Now, specializing to a configuration $\phi\in \Phi_\GC$ that satisfies the Gauss constraint and $\bb X$ that satisfies the hypothesis of the proposition, the above simplifies to
$$
\begin{cases}
\D_i \bb X(E^i_\Coul)=[\rho,\eta] - [ \D_i\eta, E^i_\Coul ] = \D_i [E^i_\Coul,\eta] &\text{in }R,\\
s_i \bb X(E^i_\Coul) = \delta_{\bb X}f &\text{at }\pp R,\\
 \int  \sqrt{g} \,\tr\big( \bb X(E^i_\Coul) \dd_H A_i ) + \int \sqrt{g}  E^i_\Coul [\dd_H A_i,\eta] \big) =0.
\end{cases}
$$
where I used that $\varpi$ is defined as a pullback from $\A$ and that $\bb L_{\eta^\#} \dd_H A = [\dd_H A, \eta]$.
Introducing 
$$
\delta_{\bb Y} E_\Coul^i := \bb X(E^i_\Coul)- [E^i_\Coul, \eta] 
$$
the above system of equations can be rewritten as
$$
\begin{cases}
\D_i  \delta_{\bb Y} E_\Coul^i = 0 &\text{in }R\\
s_i  \delta_{\bb Y} E_\Coul^i = \delta_{\bb X}f - [f, \eta_{|\pp R}]&\text{at }\pp R\\
\int  \sqrt{g} \,\tr(  \delta_{\bb Y} E_\Coul^i   \dd_H A_i) = 0
\end{cases}
$$

To conclude, refer to proposition \ref{Prop:uniqueCoul} (cf. \eqref{eq:GaussPrime}) to deduce that $\delta_{\bb Y} E_\Coul^i$ is uniquely determined by $\delta_{\bb X} f$ and $\eta_{|\pp R}$. 
Thus, whereas  $\bb X$ acts on $(A,E_\rad, \psi, \bar\psi)$ as a pure gauge transformation $\eta$ (by hypothesis), its action on $E_\Coul$ is fully determined by the boundary value of $\eta$ and the action of $\bb X$ on the electric flux $f$. Therefore, one can write $\bb X = \eta^\# + \bb Y$ where the vector $\bb Y$ is defined by: $\bb Y(\bullet) = 0 $ for $\bullet \in \{ A, E_\rad, \psi, \bar\psi\}$ and $ \bb Y(E_\Coul) = \delta_{\bb Y} E_\Coul$ as determined above.

The vector $\bb Y$ is tangent to $ \Phi^{[f]}\subset\Phi$ because both $\bb X$ and $\eta^\#$ are.\footnote{Note, however, that nowhere in the proof of the lemma  $\delta_{\bb X} f$ was required to be of the form $[f, \zeta'_\pp]$. Indeed, the lemma would work in precisely the same way for the more general vectors tangent  to $\Phi_\GC$ rather than $\Phi^{[f]}$.}
Moreover, $\bb Y$ is horizontal because the connection $\varpi\in\Omega^1(\Phi,\fG)$ is defined as a pullback to $\Phi$ of a connection on $\A$: indeed, from this it follows that $\varpi(\bb Y) = \varpi(\bb X) - \varpi(\xi^\#) = \varpi( (\bb X)_A ) - \xi =0$.
Finally, $\bb Y$ vanishes if and only if $\delta_{\bb X} f = [f,\xi_{|\pp R}]$, i.e. $\delta_{\bb X} f = [f, \varpi(\bb X)_{|\pp R}]$, that is if and only if $f$ also transforms by the {\it same} gauge transformation as every other field.

Since $\bb X$ stays within the tangent of the covariant superselection sector, its action on $f$ must also be of the form $\delta_{\bb X}f=[f, \zeta'_\pp]$, for some  $\zeta'_\pp\in \Lie(\G_{|\pp R})$, where $\G_{|\pp R}:=\{ u_\pp \in  C^\infty(\pp R,G) | \exists g\in \G \text{ such that } u_\pp = g_{|\pp R}\}$. But importantly, in general $\zeta'_\pp\neq \varpi(\bb X)_{|\pp R} \equiv \eta_{|\pp R}$; e.g. $\zeta'_\pp$ can be non-zero even if $\varpi(\bb X)\equiv \eta =0$.  Thus, the result of the previous paragraph can be rephrased as stating that $\bb Y$ vanishes if and only if $\zeta'_\pp = \eta_{|\pp R}$.

Let me now specialize to $\varpi=\varpi_\sdw$. In this case $E^i_\Coul = \D^i \varphi$ and the above system of equations becomes:
$$
\begin{cases}
\D_i  \delta_{\bb Y} E_\Coul^i = 0 &\text{in }R,\\
s_i  \delta_{\bb Y} E_\Coul^i = \delta_{\bb X}f - [f, \eta_{|\pp R}]&\text{at }\pp R,\\
\int  \sqrt{g} \,\tr(  \delta_{\bb Y} E_\Coul^i   \dd_\perp A_i) = 0
\end{cases}
\qquad\qquad\text{(SdW)}.
$$
From the last equation, and the properties of the SdW split, it follows that $\delta_{\bb Y} E_\Coul^i = \D^i \zeta$ is a pure gradient. Plugging this relationship back into the first two equations one obtains a SdW boundary value problem for $\zeta$. Now, a vector $\bb Y$ annihilating $(A, E_\rad, \psi, \bar \psi)$, but not annihilating $ E_\Coul$, is proportional in the SdW basis to $\frac{\delta}{\delta \varphi}$. From $\D^i \zeta = \bb Y(E_\Coul^i)= \bb Y( \D^i \varphi) = \D^i \bb Y(\varphi)$, one finally deduces that $\bb Y = \int \zeta\frac{\delta}{\delta \varphi}$ as in \eqref{eq:Ysdw}.
\end{proof}

The main outcome of this lemma is the characterization of the vectors $\bb Y$ which we shall refer to as  ``flux rotations:''

\begin{Def}[Flux rotations]\label{Def:FluxRot}
Given a functional connection $\varpi$ and a covariant superselection sector $\Phi^{[f]}$, call \emph{flux rotations} $\bb Y_{\zeta_\pp}\in Y^{[f]}\subset \T\Phi^{[f]}$ vectors  $\bb Y_{\zeta_\pp}\in\mathfrak \T_\phi\Phi^{[f]}$ defined by the following action on the coordinate functions $\{A, E_\rad, \psi, \bar \psi, f\}$ on  $\Phi^{[f]}$:
$$
\bb Y_{\zeta_\pp}(\bullet) = 0 \quad\text{for }\bullet \in \{A, E_\rad, \psi, \bar \psi\},
\quad \text{and} \quad
\begin{cases}
\D_i  \bb Y_{\zeta_\pp} (E_\Coul^i) = 0 &\text{in }R,\\
s_i  \bb Y_{\zeta_\pp}(E_\Coul^i) = - [f, \zeta_\pp]&\text{at }\pp R,\\
\int  \sqrt{g} \,\tr\big(  \bb Y_{\zeta_\pp} (E_\Coul^i)  \dd_H A_i\big) = 0.
\end{cases}
$$
With an slight abuse of language and notation, call also \emph{flux rotations} vector fields over $\Phi^{[f]}$ which are sections of $Y^{[f]}$: $$\bb Y_{\zeta_\pp} \in \mathcal Y^{[f]}:=\Gamma(\Phi^{[f]},Y^{[f]})\subset \mathfrak X^1(\Phi^{[f]}).$$
\end{Def}

Note that the last of the equations defining $\bb Y_{\zeta_\pp}$ states that $\bb Y_{\zeta_\pp}(E_\Coul)$ is itself Coulombic. Therefore the defining equation for flux rotations has a unique solution for the very same reason that the Gauss constraint does---see Proposition \ref{Prop:uniqueCoul}.

Note also that, as vector fields, flux rotation admit parameters $\zeta_\pp$ which are themseleves field-{\it dependent} parameters valued in $\Lie(\G_{\pp R})$, i.e. $\zeta_\pp \in\Gamma(\Phi^{[f]} ,\Phi^{[f]} \times \Lie(\G_{\pp R}))$.

Let me now collect two important properties enjoyed by flux rotations in the following proposition (whose proof is trivial at the light of Lemma \ref{Lemma:GaussVar} and Definition \ref{Def:FluxRot}):

\begin{Prop}\label{Prop:FluxRot}
Flux rotations $\bb Y \in\mathcal Y^{[f]}\subset \mathfrak X^1(\Phi^{[f]})$ satisfy the following properties:
\begin{enumerate}[(i)]
\item they are horizontal, $\varpi(\bb Y_{\zeta_\pp}) = 0$, and 
\item if $G$ is Abelian, flux rotations are trivial, $\mathcal Y^{[f]} = \{0\}$.
\end{enumerate}
\end{Prop}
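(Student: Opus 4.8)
The plan is to dispatch the two claims separately; both reduce to the defining equations of Definition \ref{Def:FluxRot} combined with the uniqueness statement of Proposition \ref{Prop:uniqueCoul}, so the argument is short.

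First I would establish horizontality, part (i). The key observation is that on $\Phi$ the connection $\varpi$ was introduced purely by pullback from $\A$ along $\pi:\Phi\to\A$; consequently the contraction $\bb i_{\bb X}\varpi$ of any field-space vector $\bb X$ depends only on its $A$-component $\bb X(A)$. Since by construction $\bb Y_{\zeta_\pp}$ annihilates every coordinate function except $E_\Coul$ --- in particular $\bb Y_{\zeta_\pp}(A)=0$ --- it follows at once that $\varpi(\bb Y_{\zeta_\pp})=0$. This is exactly the mechanism already recorded in item (iv) of Lemma \ref{Lemma:GaussVar}, so no further work is needed.

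Next I would treat the Abelian triviality, part (ii). Writing $v^i:=\bb Y_{\zeta_\pp}(E_\Coul^i)$, the defining system reads
\[
\D_i v^i = 0 \ \text{in } R, \qquad s_i v^i = -[f,\zeta_\pp] \ \text{at } \pp R, \qquad \textstyle\int \sqrt{g}\,\tr(v^i\,\dd_H A_i)=0 ,
\]
which is precisely the Coulombic Gauss problem \eqref{eq:GaussCoul} with vanishing bulk source and boundary datum $-[f,\zeta_\pp]$, supplemented by the Coulombic (third) condition. When $G$ is Abelian the bracket $[f,\zeta_\pp]$ vanishes identically, so the boundary datum is zero and the whole system becomes homogeneous. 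Since $v=0$ manifestly solves it, the uniqueness asserted in Proposition \ref{Prop:uniqueCoul} (applied to this linearized problem) forces $v=0$. As $\bb Y_{\zeta_\pp}$ already kills all remaining coordinates, this yields $\bb Y_{\zeta_\pp}\equiv 0$, i.e. $\mathcal Y^{[f]}=\{0\}$.

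As a cross-check I would also note the coordinate-free version of part (ii): once $[f,\zeta_\pp]=0$, the first two equations say $v$ is radiative in the sense of \eqref{eq:Erad}, while the third says it is Coulombic; since the radiative and Coulombic subspaces are complementary by construction, their intersection is trivial and $v=0$ follows directly. The only point requiring care --- and the nearest thing to an obstacle --- is that the uniqueness step rests on irreducibility of $A$, exactly as in Appendix \ref{app:UniqEcoul}; at reducible configurations the SdW problem has a nontrivial kernel and the statement would carry the corresponding caveat.
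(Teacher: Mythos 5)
Your proof is correct and follows essentially the same route as the paper, which declares the proposition ``trivial'' in light of Lemma \ref{Lemma:GaussVar} (whose item (iv) is exactly your pullback argument for horizontality) and Definition \ref{Def:FluxRot} (whose homogeneous system, with $[f,\zeta_\pp]=0$ in the Abelian case, is killed by the same uniqueness reasoning as Proposition \ref{Prop:uniqueCoul}). Your coordinate-free cross-check --- that $v$ would be simultaneously radiative and Coulombic, hence zero --- is a nice touch, since it sidesteps the irreducibility hypothesis that fails for all Abelian configurations, a caveat the paper itself only handles by fiat (cf. footnote \ref{fnt:reducible} and the parenthetical remark at the end of section \ref{sec:thecompletion}).
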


Now, thanks to the above lemma and definition, it is possible to finally characterize the degeneracy properties of $\Omega^H$ in a covariant superselection sector. As expected gauge transformations (i.e. vertical vectors) are in the kernel of $\iota^*\Omega^H$. But so are {\it flux rotations}, which indeed constitute the most interesting part of this kernel:

\begin{Prop}[The kernel of $\iota^*\Omega^H$]\label{Prop:kerOmegaH}
Given a choice of functional connection $\varpi$, in the covariant superselection sector $\Phi^{[f]}$ one has  
$$
\ker(\iota^*\Omega^H) = V^{[f]}\oplus Y^{[f]},
$$ 
where $V^{[f]} = \bigcup_{\phi\in\Phi^{[f]}}\T\mathcal O_\phi$ is the space of vertical vector fields in $\T\Phi^{[f]}$, and  $ Y^{[f]}\subset\T\Phi^{[f]}$ is the space of flux rotations over $\Phi^{[f]}$.
 \end{Prop}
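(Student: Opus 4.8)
The plan is to prove the two inclusions separately, leaning on the explicit formula \eqref{eq:OmegaH} for $\Omega^H$ and on Lemma \ref{Lemma:GaussVar}.

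\emph{Easy inclusion} $V^{[f]}\oplus Y^{[f]}\subseteq\ker(\iota^*\Omega^H)$. First I would recall that $\Omega^H$ is basic \eqref{eq:OmegaBasic}, so that $\bb i_{\xi^\#}\Omega^H=0$ gives $V^{[f]}\subseteq\ker(\iota^*\Omega^H)$ immediately. For a flux rotation $\bb Y_{\zeta_\pp}\in Y^{[f]}$ I would note that it acts trivially on $\{A,E_\rad,\psi,\bar\psi\}$ and only on $E_\Coul$, which does \emph{not} appear in \eqref{eq:OmegaH}. Moreover, since $\varpi$ (hence $\bb F=\dd\varpi+\tfrac12[\varpi\stackrel{\curlywedge}{,}\varpi]$, cf.\ \eqref{eq:FF}) is pulled back to $\Phi$ from $\A$ and $\bb Y_{\zeta_\pp}$ projects to zero on $\A$ (because $\bb Y_{\zeta_\pp}(A)=0$), one has $\bb i_{\bb Y_{\zeta_\pp}}\bb F=\pi^*(\bb i_{\pi_*\bb Y_{\zeta_\pp}}\bb F)=0$; so the curvature term is killed as well and $\bb i_{\bb Y_{\zeta_\pp}}\Omega^H=0$. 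The sum is direct because $\eta^\#$ is vertical whereas flux rotations are horizontal (Proposition \ref{Prop:FluxRot}(i)), whence $V^{[f]}\cap Y^{[f]}=\{0\}$.

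\emph{Reverse inclusion: setup.} Conversely, I would take $\bb X\in\ker(\iota^*\Omega^H)\cap\iota_*(\T\Phi^{[f]})$ and set $\eta:=\varpi(\bb X)$, so that $\hat H(\bb X)=\bb X-\eta^\#$ is horizontal. Since $\Omega^H$ is basic, $\bb i_{\bb X}\Omega^H=\bb i_{\hat H(\bb X)}\Omega^H=0$, and the crux is to show that this forces $\hat H(\bb X)(\bullet)=0$ for $\bullet\in\{A,E_\rad,\psi,\bar\psi\}$. For this I would exploit the structure of \eqref{eq:OmegaH}: the radiative electric field is paired with the horizontal modes of $A$ through the $L^2$ pairing $\int\sqrt{g}\,\tr(\dd_H E_\rad^i\curlywedge\dd_H A_i)$, while $\psi$ and $\bar\psi$ are paired through $\int\sqrt{g}\,\dd_H\bar\psi\gamma^0\curlywedge\dd_H\psi$; both pairings are non-degenerate on the relevant constrained subspaces. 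For the SdW connection this is transparent, since by \eqref{eq:SdW-hor} and \eqref{eq:Erad} horizontal $A$-variations and radiative $E$-variations satisfy identical conditions, so their $L^2$ pairing is just the positive-definite inner product; the general case reduces to this.

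\emph{Reverse inclusion: the testing order.} The subtle point, and the one I expect to be the main obstacle, is the $\tr(\rho\bb F)$ correction, which could in principle hide a nonzero radiative or matter component of $\hat H(\bb X)$. The key observation is that contracting $\bb i_{\hat H(\bb X)}\Omega^H=0$ against a \emph{pure} $E_\rad$-variation $\bb X'$ (one with $\pi_*\bb X'=0$) is blind to the curvature term, since $\bb F(\cdot,\bb X')=0$; it isolates $\int\sqrt{g}\,\tr(\dd_H E_\rad(\bb X')\,\hat H(\bb X)(A))$, which by non-degeneracy forces $\hat H(\bb X)(A)=0$. Once $\hat H(\bb X)(A)=0$ one has $\pi_*\hat H(\bb X)=0$, hence $\bb i_{\hat H(\bb X)}\bb F=0$ and the $\tr(\rho\bb F)$ term drops out entirely; testing against pure $A$-, $\psi$- and $\bar\psi$-variations then kills $\hat H(\bb X)(E_\rad)$, $\hat H(\bb X)(\bar\psi)$ and $\hat H(\bb X)(\psi)$ in turn. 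This ordering — radiative $E_\rad$ first — is what neutralizes the cross-term.

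\emph{Conclusion via Lemma \ref{Lemma:GaussVar}.} Having shown that $\bb X$ acts on every field component except $E_\Coul$ exactly as the gauge transformation $\eta$, I would invoke Lemma \ref{Lemma:GaussVar} directly: it gives $\bb X=\eta^\#+\bb Y$ with $\bb Y$ the unique horizontal vector determined by $\eta$ and $\delta_{\bb X}f$. Since $\bb X$ is tangent to the CSSS $\Phi^{[f]}$, its action on the flux is necessarily $\delta_{\bb X}f=-[f,\zeta'_\pp]$ for some $\zeta'_\pp\in\Lie(\G_{|\pp R})$, so $\bb Y=\bb Y_{\zeta_\pp}\in Y^{[f]}$ with $\zeta_\pp=\zeta'_\pp-\eta_{|\pp R}$ (cf.\ Definition \ref{Def:FluxRot}). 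Hence $\bb X\in V^{[f]}\oplus Y^{[f]}$, completing the reverse inclusion and the proof.
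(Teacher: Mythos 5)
Your proposal is correct and follows essentially the same route as the paper's own proof: contract \eqref{eq:OmegaH} with $\bb X$, use the independence/non-degeneracy of the pairings between horizontal, radiative and matter modes to force $\hat H(\iota_*\bb X)(\bullet)=0$ for $\bullet\in\{A,E_\rad,\psi,\bar\psi\}$, and then invoke Lemma \ref{Lemma:GaussVar} (together with tangency to the CSSS) to write $\bb X=\eta^\#+\bb Y_{\zeta_\pp}$, the converse inclusion being immediate from basicness and the absence of $E_\Coul$ in $\Omega^H$. The one place where you are more careful than the paper's Appendix \ref{app:fluxrot} argument---testing against pure $E_\rad$-variations first so that the $\tr(\rho\,\bb F)$ cross-term is neutralized before the remaining components are attacked---is precisely the device the paper itself uses in the proof of the kernel of $\Omega^{H,[f]}_\text{ext}$ in Appendix \ref{app:Vext}, so it is a refinement squarely within, not a departure from, the paper's strategy.
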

\begin{proof}
A vector $\bb X\in\T_{\phi}\Phi^{[f]}$ is in the kernel of $\iota^*\Omega^H$ if and only if (iff) $\iota^*\Omega^H(\bb X) = 0$, i.e. iff $\iota^*(\Omega^H(\iota_\ast \bb X)) = 0$, i.e. iff (see \eqref{eq:OmegaH})
$$
\iota^*\int \sqrt{g}\;\tr\big( -   h_A \dd_H E_\rad + h_\rad \dd_H A\big) + \iota^*\int \sqrt{g} \;\big(\dd_H \bar \psi \gamma^0 h_\psi  - h_{\bar \psi} \gamma^0 \dd_H \psi\big) = 0,
$$
where we set $h_\bullet := \bb i_{\iota_*\bb X} \dd_H \bullet \equiv \bb i_{\hat H(\iota_*X)} \dd \bullet \equiv (\hat H(\iota_*\bb X)){}_{\bullet}$ for $\bullet \in \{ A, E_\rad, \psi, \bar\psi\}$.

Since $\iota^*\dd_H A$, $\iota^* \dd_H E_\rad$, $\iota^* \dd_H\psi$, $\iota^* \dd_H \bar \psi$ are all  independent from each other, this expression vanishes identically iff $h_\bullet = 0$ for all $\bullet$ as above. 
Therefore the only horizontal component of $\iota_*\bb X$ that can survive is $h_\Coul := \bb i_{\iota_*\bb X} \dd_H E_\Coul$.

In view of the horizontal/vertical decomposition of $\iota_*\bb X$, the statement that $h_\bullet = 0$ is equivalent to demanding $(\iota_*\bb X)_\bullet =(\xi^\#)_{\bullet}$ or equivalently that $(\iota_*\bb X)(\bullet) = \xi^\#(\bullet)$, for $\bullet$ as above and  $\xi := \varpi( \iota_*\bb X) = \varpi((\iota_* \bb X)_A)$---the last equality follows from the fact that the functional connection  $\varpi \in \Omega^1(\Phi,\fG)$ is defined by pullback of a functional connection on $\A$.

This means that $\iota_*\bb X$ satisfies the conditions under which Lemma \ref{Lemma:GaussVar} holds.
Thus, from Lemma \ref{Lemma:GaussVar}, Definition \ref{Def:FluxRot}, and the arguments above, it follows that if $\bb X\in \ker(\iota^*\Omega^H)$ then $\iota_*\bb X = \eta^\# + \bb Y_{\zeta_\pp}$ for $\zeta_\pp = \eta_{|\pp R}- \zeta'_\pp $ in the notation of the lemma. 
Hence, $\ker(\iota^*\Omega^H) \subset V^{[f]}\oplus Y^{[f]}$

To conclude the proof, it is enough to observe that any vector in $V^{[f]}$ or in $ Y^{[f]}$ is in the kernel of $\Omega^H$: the first case is obvious, the second follows from the fact that flux rotations $\bb Y_{\zeta_\pp}\in Y^{[f]}$ only act on the Coulombic component of the electric field which is not featured in $\Omega^H$ (cf. \eqref{eq:OmegaH}).
\end{proof}

\begin{CorP}\label{Cor:kerOmegaH}
Suppose the hypotheses of proposition \ref{Prop:kerOmegaH} hold.
 If moreover $G$ is Abelian or $f$ is trivial (either because $f=0$ or because $\pp R\neq \emptyset$), then  $\ker(\iota^*\Omega^H)=V$.
  \end{CorP}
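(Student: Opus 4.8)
The plan is to read the corollary directly off Proposition~\ref{Prop:kerOmegaH}, which already gives the decomposition $\ker(\iota^*\Omega^H)=V^{[f]}\oplus Y^{[f]}$. Since $V^{[f]}$ is exactly the vertical distribution $V$ restricted to the CSSS $\Phi^{[f]}$, all that remains is to show that under either hypothesis the flux-rotation summand collapses, $Y^{[f]}=\{0\}$. To do this I would return to the defining equations of Definition~\ref{Def:FluxRot} (equivalently \eqref{eq:Y}): a flux rotation $\bb Y_{\zeta_\pp}$ annihilates $A,E_\rad,\psi,\bar\psi$ and acts on $E_\Coul$ through a Coulombic boundary value problem whose sole inhomogeneity is the boundary datum $s_i\bb Y_{\zeta_\pp}(E_\Coul^i)=-[f,\zeta_\pp]$. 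By the uniqueness of the Coulombic solution (Proposition~\ref{Prop:uniqueCoul}), this problem has only the trivial solution $\bb Y_{\zeta_\pp}(E_\Coul)=0$ whenever its boundary datum vanishes; as $\bb Y_{\zeta_\pp}$ acts trivially on all other coordinates, this forces $\bb Y_{\zeta_\pp}=0$. Hence the entire argument reduces to checking that $[f,\zeta_\pp]$ is forced to vanish, or is vacuous, in each listed case.

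I would then dispatch the cases in turn. If $G$ is Abelian the bracket vanishes identically, $[f,\zeta_\pp]=0$ for all $f,\zeta_\pp$; this is precisely Proposition~\ref{Prop:FluxRot}(ii), which I would simply invoke. If $f=0$ the same conclusion is immediate from $[0,\zeta_\pp]=0$. If instead $\pp R=\emptyset$, there is no boundary and hence no boundary condition: the defining system of Definition~\ref{Def:FluxRot} reduces to $\D_i\bb Y_{\zeta_\pp}(E_\Coul^i)=0$ together with the Coulombic (horizontality) condition, again a homogeneous SdW-type problem whose unique solution at irreducible configurations is $\bb Y_{\zeta_\pp}(E_\Coul)=0$. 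In all these cases $Y^{[f]}=\{0\}$, and therefore $\ker(\iota^*\Omega^H)=V^{[f]}\oplus\{0\}=V$, which is the claim.

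The one genuinely delicate point---and the step I expect to be the main obstacle---is the appeal to uniqueness of the Coulombic field in the Abelian regime: there every configuration is reducible (the constant gauge parameters sit in the kernel of the SdW operator), so Proposition~\ref{Prop:uniqueCoul} applies only after these reducibility directions are excluded, consistently with the standing assumption of this appendix that reducible configurations are set aside (cf. the remark at the end of Section~\ref{sec:thecompletion}). With that caveat understood, the cleanest route in the Abelian case is to cite Proposition~\ref{Prop:FluxRot}(ii) directly rather than re-run the boundary value problem, which is why I would lean on it above.
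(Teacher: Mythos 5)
Your proof is correct and follows essentially the same route as the paper's: the paper likewise reads the corollary off Proposition \ref{Prop:kerOmegaH} and observes that, by Lemma \ref{Lemma:GaussVar} (equivalently the defining boundary value problem of Definition \ref{Def:FluxRot}), the flux rotations $\bb Y_{\zeta_\pp}$ vanish identically once the boundary datum $[f,\zeta_\pp]$ vanishes or is vacuous, i.e. when $G$ is Abelian, $f=0$, or $\pp R=\emptyset$ (the statement's ``$\pp R\neq\emptyset$'' is evidently a typo for $\pp R=\emptyset$, as you correctly assumed). Your additional caveat about the Abelian case, where every configuration is reducible, goes beyond the paper's one-line argument but only reinforces the same conclusion.
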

 \begin{proof}
 If $G$ is Abelian or $f$ is trivial, then it is immediate to see (e.g. from lemma \ref{Lemma:GaussVar}) that $\bb Y\equiv 0$. Hence $ Y^{[f]} = \{ 0 \}$ is trivial and $\ker(\iota^*\Omega^H) = V$.
 \end{proof}

\subsection{The kernel of $\Omega^{H,[f]}_\text{ext}$: proof of \eqref{eq:kerOmegaHFext}}\label{app:Vext}
The goal of this appendix is to prove \eqref{eq:kerOmegaHFext} which states that 
$$
\ker(\Omega^{H,[f]}_\text{ext}) = V_\text{ext}.
$$

\begin{Prop}[The kernel of $\Omega^{H,[f]}_\text{ext}$]
In the covariant superselection sector ${[f]}$, one has 
$$\ker(\Omega^{H,[f]}) = V_\text{ext},$$
where $V_\text{ext} = \mathrm{Span}\big\{ (\xi^\#,\sigma_o^\S)\} \subset \mathfrak \T\Phi^{[f]}_\text{ext}$ is the space of vertical vector fields in $(\Phi^{[f]}_\text{ext},\Pi)$; i.e. $V_\text{ext} = \fG^\# \oplus \Lie(\G^o_{|\pp R})^\S$ is the direct sum of pure gauge transformations and flux-reference stabilizer transformations. (Cf. section \ref{sec:KKS}.)
\end{Prop}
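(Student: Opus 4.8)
The plan is to prove the two inclusions separately, the inclusion $V_\text{ext}\subseteq\ker(\Omega^{H,[f]}_\text{ext})$ being immediate. Indeed, it was already established that $\Omega^{H,[f]}_\text{ext}$ is basic with respect to the full structure group $\G_\text{ext} = \G\times\G^o_{|\pp R}$, i.e. that $\bb i_{(\xi^\#,\sigma_o^\S)}\Omega^{H,[f]}_\text{ext} = 0$ for all $(\xi^\#,\sigma_o^\S)\in V_\text{ext}$. Hence the whole vertical distribution lies in the kernel, and only the reverse inclusion $\ker(\Omega^{H,[f]}_\text{ext})\subseteq V_\text{ext}$ requires work.

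For the reverse inclusion I would take a generic $\bb X\in\T\Phi^{[f]}_\text{ext}$, set $\eta := \varpi(\bb X)$ (a functional of $\bb X(A)$ only, since $\varpi$ is pulled back from $\A$), and impose $\bb i_{\bb X}\Omega^{H,[f]}_\text{ext} = 0$. Using the coordinate expression \eqref{eq:OmegaHfExt}, the contraction splits into a bulk piece coming from $\pi_o^*\iota^*\Omega^H$ and a boundary KKS piece coming from $\omega^{H,[f]}_\text{ext}$. The bulk piece is identical to the one analysed in the proof of Proposition \ref{Prop:kerOmegaH}: the field-space one-forms $\dd_H A$, $\dd_H E_\rad$, $\dd_H\psi$, $\dd_H\bar\psi$ and $u^{-1}\dd_H u$ are mutually independent, so reading off the coefficients of $\dd_H E_\rad$, $\dd_H\psi$ and $\dd_H\bar\psi$ forces the horizontal components $h_A := \bb i_{\bb X}\dd_H A$, $h_\psi$ and $h_{\bar\psi}$ to vanish. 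Since $\bb F$ is horizontal and pulled back from $\A$ (cf. \eqref{eq:FF}), the one-form $\bb i_{\bb X}\bb F$ is a linear functional of $h_A$ alone, and therefore vanishes once $h_A = 0$; the curvature cross-terms $\tr(\rho\bb F)$ and $f_o u^{-1}\bb F u$ then drop out of the contraction, and the remaining coefficient of $\dd_H A$ yields $h_{E_\rad} = 0$ as well. In other words $\hat H(\bb X)$ has no component along the bulk coordinates, so $\bb X$ acts on $\{A,E_\rad,\psi,\bar\psi\}$ exactly as the gauge transformation $\eta^\#$.

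The genuinely new input is the coefficient of the remaining independent differential $u^{-1}\dd_H u$, which receives a contribution only from the first term of $\omega^{H,[f]}_\text{ext}$ (the term $f_o u^{-1}\bb F u$ lives in the $A$-directions and contributes solely through $\bb i_{\bb X}\bb F = 0$). Writing $m := \bb i_{\bb X}(u^{-1}\dd_H u)$ and using $\bb i_{\bb X}[\,u^{-1}\dd_H u\stackrel{\curlywedge}{,}u^{-1}\dd_H u\,] = 2[m, u^{-1}\dd_H u]$ together with the cyclicity of the trace, this coefficient equals $-\oint\sqrt{h}\,\tr([f_o,m]\, u^{-1}\dd_H u)$. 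Because $u$ is a free $\G_{|\pp R}$-valued coordinate, the horizontal velocity $\bb i_{\bb Y}(u^{-1}\dd_H u)$ ranges over all of $\Lie(\G_{|\pp R})$ as $\bb Y$ varies, so vanishing of this coefficient is equivalent to $[f_o,m] = 0$, i.e. $m\in\Lie(\G^o_{|\pp R})$ is an infinitesimal stabilizer of the reference flux $f_o$. This is exactly the step in which the KKS completion ``absorbs'' the flux rotations that obstructed non-degeneracy of $\iota^*\Omega^H$ in Proposition \ref{Prop:kerOmegaH}.

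It then remains to assemble the pieces. Since $\dd_H u$ is horizontal, $\bb i_{\eta^\#}(u^{-1}\dd_H u) = u^{-1}\eta^\#(u) + u^{-1}\eta_{|\pp R}u = 0$ (using $\eta^\#(u) = -\eta_{|\pp}u$ from \eqref{eq:Yg}), so $m = \bb i_{\hat H(\bb X)}(u^{-1}\dd_H u)$ is precisely the $u$-velocity of the purely horizontal vector $\hat H(\bb X) = \bb X - \eta^\#$. Combined with the vanishing of all bulk components, this shows $\hat H(\bb X)(\bullet) = 0$ for $\bullet\in\{A,E_\rad,\psi,\bar\psi\}$ and $\hat H(\bb X)(u) = u\, m$ with $m\in\Lie(\G^o_{|\pp R})$; comparing with the definition of $\cdot^\S$ this is exactly $\hat H(\bb X) = m^\S$. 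Hence $\bb X = \eta^\# + m^\S = (\eta^\#, m^\S)\in V_\text{ext}$, which proves $\ker(\Omega^{H,[f]}_\text{ext})\subseteq V_\text{ext}$ and completes the argument. I expect the main obstacle to be the careful bookkeeping of the curvature terms $\bb F$, which appear in both the bulk and the KKS pieces: one must verify that they contribute only through $\bb i_{\bb X}\bb F$ and hence disappear the instant $h_A = 0$ is established, so that the $u^{-1}\dd_H u$-coefficient cleanly isolates the stabilizer condition $[f_o,m]=0$ rather than mixing with the radiative sector.
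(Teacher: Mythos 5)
Your proof is correct and takes essentially the same approach as the paper's own: the easy inclusion follows from basicness of $\Omega^{H,[f]}_\text{ext}$ under $\G\times\G^o_{|\pp R}$, and the reverse inclusion is obtained by contracting with the independent horizontal differentials, observing that the curvature terms drop as soon as $h_A=0$ is established, and extracting the stabilizer condition $[f_o,h_u]=0$ from the KKS term. The only cosmetic differences are the order in which the bulk horizontal components are shown to vanish and your somewhat more explicit final identification $\hat H(\bb X)=m^\S$, which the paper states more briefly.
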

\begin{proof}
Recall \eqref{eq:OmegaHf} and \eqref{eq:OmegaHfextOmega}:
$$
\Omega^{H,[f]}_\text{ext} = \pi_o^*\iota^*\Omega^{H} + \omega^{H,[f]}_\text{ext} = \pi_o^*\iota^*\Omega + \omega^{[f]}.
$$
One can easily check that $V^\text{ext}\subset\ker(\Omega^{H,[f]})$ because by construction both $\pi_o^*\iota^*\Omega^{H}$ and  $\omega^{H,[f]}_\text{ext}$ are  gauge-horizontal, and both $\pi_o^*\iota^*\Omega$ and $\omega^{[f]}$ are  flux-reference-stabilizer horizontal. 

One is left to prove that $\ker(\Omega^{H,[f]})\subset V_\text{ext}$. 
This can be done by adapting the argument put forward in the proof to Proposition \ref{Prop:kerOmegaH}.
Using the same notation as there: $\bb X\in\ker(\Omega^{H,[f]} )$ iff 
\begin{AAlign}
0 = \Omega^{H,[f]}_\text{ext} (\bb X) 
= & \iota^*\int \sqrt{g}\;\tr\big( -   h_A \dd_H E_\rad + h_\rad \dd_H A\big) + \iota^*\int \sqrt{g} \;\big(\dd_H \bar \psi \gamma^0 h_\psi  - h_{\bar \psi} \gamma^0 \dd_H \psi\big)\notag\\& 
+ \oint \sqrt{h} \,\tr\big( [h_u ,  f_o]  u^{-1}\dd_H u  + f_o u^{-1}  \bb F(\iota_*\bb X) u\big)
\label{eq:OmegaHfext(X)}
\end{AAlign}
where $h_u := u^{-1} \bb i_{\bb X}\dd_H u \equiv u^{-1}\bb i_{\hat H(\bb X)} \dd u \equiv u^{-1}{\hat H(\bb X)}_u$.
Notice that $\dd_H f = \Ad_u[  u^{-1}\dd_H u, f_o] $.
 
Therefore, if  $\bb X\in\ker(\Omega^{H,[f]} )$, the above expression must vanish when contracted with {\it any} vector $\bb X'\in\T_\phi\Phi^{[f]}_\text{ext}$. 

First, consider the vector $\bb X^1$ defined by $\bb X^1(E_\rad) = X^1_\rad$ and $\bb X^1(\bullet) = 0$ for $\bullet \in \{ A, \psi,\bar\psi, u\}$. Notice that, since $E_\rad$ does not participate to the Gauss constraint, this vector is indeed tangent to $\Phi^{[f]}_\text{ext}$. Moreover, since $(\bb X^1)_A = 0$ this vector is also horizontal. Therefore, $0 = \Omega^{H,[f]}_\text{ext}(\bb X, \bb X^1) = \iota^*\int \sqrt{g}\,\tr( -h_A X^1_\rad )$. From the arbitrariness of $X^1_\rad$, one concludes that $h_A$ vanishes. 

Since $\iota_*\bb X(A) = h_A = 0$, $\bb F(\iota_*\bb X) =0$ too. Hence, equation \eqref{eq:OmegaHfext(X)} reduces to 
\begin{AAlign}
0 = \Omega^{H,[f]}_\text{ext} (\bb X) 
= & \iota^*\int \sqrt{g}\;\tr\big( h_\rad \dd_H A\big) + \iota^*\int \sqrt{g} \;\big(\dd_H \bar \psi \gamma^0 h_\psi  - h_{\bar \psi} \gamma^0 \dd_H \psi\big)\notag\\& 
+ \oint \sqrt{h} \,\tr\big( [h_u ,  f_o]  u^{-1}\dd_H u  \big)
\label{eq:OmegaHfext(X)2}
\end{AAlign}
 
From the independence of $\dd_H A$, $\dd_H\psi$ and $\dd_H \bar \psi$ one similarly concludes that $h_\rad$, $h_\psi$ and $h_{\bar\psi}$ also vanish. 

Finally, for the last term in \eqref{eq:OmegaHfext(X)2} to vanish identically, one must demand that $ [h_u ,  f_o] =0$ i.e. that $h_u \in\Lie(\G_{|\pp R}^o)$. 

Therefore one concludes that if  $\bb X\in\ker(\Omega^{H,[f]} )$, then $\hat H(\bb X)  \in\Lie(\G_{|\pp R}^o)^\S$ i.e. that $\bb X$ is either vertical or a flux-stabilizer transformations. In formulas, $\bb X \in V_\text{ext}$.
\end{proof}

\footnotesize 

\begin{thebibliography}{10}
\providecommand{\url}[1]{\texttt{#1}}
\providecommand{\urlprefix}{URL }
\expandafter\ifx\csname urlstyle\endcsname\relax
  \providecommand{\doi}[1]{doi:\discretionary{}{}{}#1}\else
  \providecommand{\doi}{doi:\discretionary{}{}{}\begingroup
  \urlstyle{rm}\Url}\fi
\providecommand{\eprint}[2][]{\url{#2}}

\bibitem{GoRieGhost}
H.~Gomes and A.~Riello,
\newblock \emph{{The observer's ghost: notes on a field space connection}},
\newblock JHEP \textbf{05}, 017 (2017),
\newblock \doi{10.1007/JHEP05(2017)017}.

\bibitem{GoRiePRD}
H.~Gomes and A.~Riello,
\newblock \emph{Unified geometric framework for boundary charges and particle
  dressings},
\newblock Phys. Rev. D \textbf{98}, 025013 (2018),
\newblock \doi{10.1103/PhysRevD.98.025013}.

\bibitem{GoHopfRie}
H.~Gomes, F.~Hopfm{\"u}ller and A.~Riello,
\newblock \emph{A unified geometric framework for boundary charges and
  dressings: Non-abelian theory and matter},
\newblock Nuclear Physics B \textbf{941}, 249  (2019),
\newblock \doi{https://doi.org/10.1016/j.nuclphysb.2019.02.020}.

\bibitem{RielloSoft}
A.~Riello,
\newblock \emph{{Soft charges from the geometry of field space}},
\newblock JHEP \textbf{05}, 125 (2020),
\newblock \doi{10.1007/JHEP05(2020)125}.

\bibitem{GoRieYMdof}
H.~Gomes and A.~Riello,
\newblock \emph{{The quasilocal degrees of freedom of Yang-Mills theory}}
  (2019),
\newblock (v3 to appear soon),
\newblock \eprint{1910.04222}.

\bibitem{StrocchiBook}
F.~Strocchi,
\newblock \emph{An Introduction to Non-Perturbative Foundations of Quantum
  Field Theory},
\newblock Oxford Universtiy Press (2013).

\bibitem{strocchi_phil}
F.~Strocchi,
\newblock \emph{{Symmetries, Symmetry Breaking, Gauge Symmetries}}  (2015),
\newblock \eprint{1502.06540}.

\bibitem{RegTeit}
T.~Regge and C.~Teitelboim,
\newblock \emph{{Role of Surface Integrals in the Hamiltonian Formulation of
  General Relativity}},
\newblock Annals Phys. \textbf{88}, 286 (1974),
\newblock \doi{10.1016/0003-4916(74)90404-7}.

\bibitem{BrownHenn}
J.~D. Brown and M.~Henneaux,
\newblock \emph{Central charges in the canonical realization of asymptotic
  symmetries: an example from three-dimensional gravity},
\newblock Comm. Math. Phys. \textbf{104}(2), 207 (1986),
\newblock \doi{https://doi.org/10.1007/BF01211590}.

\bibitem{Schwab}
S.~G. Avery and B.~U.~W. Schwab,
\newblock \emph{Noether's second theorem and ward identities for gauge
  symmetries},
\newblock Journal of High Energy Physics \textbf{2016}(2), 31 (2016),
\newblock \doi{10.1007/JHEP02(2016)031}.

\bibitem{Ali}
A.~Seraj and D.~Van~den Bleeken,
\newblock \emph{{Strolling along gauge theory vacua}},
\newblock JHEP \textbf{08}, 127 (2017),
\newblock \doi{10.1007/JHEP08(2017)127},
\newblock \eprint{1707.00006}.

\bibitem{gervais76}
J.~Gervais, B.~Sakita and S.~Wadia,
\newblock \emph{The surface term in gauge theories},
\newblock Physics Letters B \textbf{63}(1), 55  (1976),
\newblock \doi{https://doi.org/10.1016/0370-2693(76)90467-6}.

\bibitem{wadia77}
S.~Wadia and T.~Yoneya,
\newblock \emph{The role of surface variables in the vacuum structure of
  yang-mills theory},
\newblock Physics Letters B \textbf{66}(4), 341  (1977),
\newblock \doi{https://doi.org/10.1016/0370-2693(77)90010-7}.

\bibitem{gervais80}
J.-L. Gervais and D.~Zwanziger,
\newblock \emph{Derivation from first principles of the infrared structure of
  quantum electrodynamics},
\newblock Physics Letters B \textbf{94}(3), 389  (1980),
\newblock \doi{https://doi.org/10.1016/0370-2693(80)90903-X}.

\bibitem{HennTroess18}
M.~Henneaux and C.~Troessaert,
\newblock \emph{{Asymptotic symmetries of electromagnetism at spatial
  infinity}},
\newblock JHEP \textbf{05}, 137 (2018),
\newblock \doi{10.1007/JHEP05(2018)137},
\newblock \eprint{1803.10194}.

\bibitem{guillemin1990symplectic}
V.~Guillemin and S.~Sternberg,
\newblock \emph{Symplectic techniques in physics},
\newblock Cambridge university press (1990).

\bibitem{Arms1996}
J.~Arms,
\newblock \emph{Reduction of poisson algebras at nonzero momentum values},
\newblock Journal of Geometry and Physics \textbf{21}(1), 81 (1996),
\newblock \doi{https://doi.org/10.1016/S0393-0440(96)00009-5}.

\bibitem{MarsdenWeinstein1974}
J.~Marsden and A.~Weinstein,
\newblock \emph{Reduction of symplectic manifolds with symmetry},
\newblock Reports on Mathematical Physics \textbf{5}(1), 121 (1974),
\newblock \doi{https://doi.org/10.1016/0034-4877(74)90021-4}.

\bibitem{Meyer1973}
K.~R. Meyer,
\newblock \emph{Symmetries and integrals in mechanics},
\newblock In M.~Peixoto, ed., \emph{Dynamical Systems}, pp. 259--272. Academic
  Press,
\newblock ISBN 978-0-12-550350-1,
\newblock \doi{https://doi.org/10.1016/B978-0-12-550350-1.50025-4} (1973).

\bibitem{Riello2021}
A.~Riello,
\newblock \emph{{Edge modes without edge modes}}  (2021),
\newblock \eprint{2104.10182}.

\bibitem{StrocchiWightman}
F.~Strocchi and A.~S. Wightman,
\newblock \emph{Proof of the charge superselection rule in local relativistic
  quantum field theory},
\newblock J. Math. Phys. \textbf{15}(12), 2198 (1974),
\newblock \doi{10.1063/1.1666601},
\newblock \textit{Erratum}: J. Math. Phys. \textbf{17}(10), 1930 (1976),
  doi:10.1063/1.522818, https://doi.org/10.1063/1.522818.

\bibitem{Buchholz1986}
D.~Buchholz,
\newblock \emph{Gauss' law and the infraparticle problem},
\newblock Physics Letters B \textbf{174}(3), 331  (1986),
\newblock \doi{https://doi.org/10.1016/0370-2693(86)91110-X}.

\bibitem{StrocchiMorchio}
J.~Fr{\"o}hlich, G.~Morchio and F.~Strocchi,
\newblock \emph{Infrared problem and spontaneous breaking of the lorentz group
  in qed},
\newblock Physics Letters B \textbf{89}(1), 61  (1979),
\newblock \doi{https://doi.org/10.1016/0370-2693(79)90076-5}.

\bibitem{BalachandranVaidya2013}
A.~P. Balachandran and S.~Vaidya,
\newblock \emph{Spontaneous lorentz violation in gauge theories},
\newblock The European Physical Journal Plus \textbf{128}(10), 118 (2013),
\newblock \doi{10.1140/epjp/i2013-13118-9}.

\bibitem{Herdegen}
A.~Herdegen,
\newblock \emph{{Asymptotic structure of electrodynamics revisited}},
\newblock Lett. Math. Phys. \textbf{107}(8), 1439 (2017),
\newblock \doi{10.1007/s11005-017-0948-9},
\newblock \eprint{1604.04170}.

\bibitem{DybaWege19}
W.~Dybalski and B.~Wegener,
\newblock \emph{{Asymptotic charges, large gauge transformations and
  inequivalence of different gauges in external current QED}},
\newblock JHEP \textbf{11}, 126 (2019),
\newblock \doi{10.1007/JHEP11(2019)126},
\newblock \eprint{1907.06750}.

\bibitem{MundRehrenSchroer}
J.~Mund, K.-H. Rehren and B.~Schroer,
\newblock \emph{{Gauss\textquoteright{} Law and string-localized quantum field
  theory}},
\newblock JHEP \textbf{01}, 001 (2020),
\newblock \doi{10.1007/JHEP01(2020)001},
\newblock \eprint{1906.09596}.

\bibitem{AharSuss}
Y.~Aharonov and L.~Susskind,
\newblock \emph{Charge superselection rule},
\newblock Phys. Rev. \textbf{155}, 1428 (1967),
\newblock \doi{10.1103/PhysRev.155.1428}.

\bibitem{AharBook}
D.~R. Yakir~Aharonov,
\newblock \emph{Quantum Paradoxes: Quantum Theory for the Perplexed},
\newblock John Wiley and Sons (2005).

\bibitem{Giulini_1996}
D.~Giulini, E.~Joos, C.~Kiefer, J.~Kupsch, I.-O. Stamatescu and H.~D. Zeh,
\newblock \emph{Decoherence and the Appearance of a Classical World in Quantum
  Theory},
\newblock Springer Berlin Heidelberg,
\newblock \doi{10.1007/978-3-662-03263-3} (1996).

\bibitem{SpekkensEtAl}
S.~D. Bartlett, T.~Rudolph and R.~W. Spekkens,
\newblock \emph{{Reference frames, superselection rules, and quantum
  information}},
\newblock Rev. Mod. Phys. \textbf{79}, 555 (2007),
\newblock \doi{10.1103/RevModPhys.79.555}.

\bibitem{DonFreid}
W.~Donnelly and L.~Freidel,
\newblock \emph{{Local subsystems in gauge theory and gravity}},
\newblock JHEP \textbf{09}, 102 (2016),
\newblock \doi{10.1007/JHEP09(2016)102},
\newblock \eprint{1601.04744}.

\bibitem{Geiller1}
M.~Geiller,
\newblock \emph{{Edge modes and corner ambiguities in 3d Chern--Simons theory
  and gravity}},
\newblock Nucl. Phys. \textbf{B924}, 312 (2017),
\newblock \doi{10.1016/j.nuclphysb.2017.09.010},
\newblock \eprint{1703.04748}.

\bibitem{Geiller_2020}
M.~Geiller and P.~Jai-akson,
\newblock \emph{Extended actions, dynamics of edge modes, and entanglement
  entropy},
\newblock Journal of High Energy Physics \textbf{2020}(9) (2020),
\newblock \doi{10.1007/jhep09(2020)134}.

\bibitem{FreiGeiPra2}
L.~Freidel, M.~Geiller and D.~Pranzetti,
\newblock \emph{{Edge modes of gravity - II: Corner metric and Lorentz
  charges}}  (2020),
\newblock \eprint{2007.03563}.

\bibitem{CarlipDellaPietra}
S.~Carlip, M.~Clements, S.~Della~Pietra and V.~Della~Pietra,
\newblock \emph{{Sewing Polyakov amplitudes. 1. Sewing at a fixed conformal
  structure}},
\newblock Commun. Math. Phys. \textbf{127}, 253 (1990),
\newblock \doi{10.1007/BF02096756}.

\bibitem{Balachandran:1994up}
A.~P. Balachandran, L.~Chandar and A.~Momen,
\newblock \emph{{Edge states in gravity and black hole physics}},
\newblock Nucl. Phys. \textbf{B461}, 581 (1996),
\newblock \doi{10.1016/0550-3213(95)00622-2},
\newblock \eprint{gr-qc/9412019}.

\bibitem{Carlip1995}
S.~Carlip,
\newblock \emph{{Statistical mechanics and black hole entropy}}  (1995),
\newblock \eprint{gr-qc/9509024}.

\bibitem{Polikarpov}
P.~V. Buividovich and M.~I. Polikarpov,
\newblock \emph{{Entanglement entropy in gauge theories and the holographic
  principle for electric strings}},
\newblock Phys. Lett. \textbf{B670}, 141 (2008),
\newblock \doi{10.1016/j.physletb.2008.10.032},
\newblock \eprint{0806.3376}.

\bibitem{Donnelly:2014fua}
W.~Donnelly and A.~C. Wall,
\newblock \emph{{Entanglement entropy of electromagnetic edge modes}},
\newblock Phys. Rev. Lett. \textbf{114}(11), 111603 (2015),
\newblock \doi{10.1103/PhysRevLett.114.111603},
\newblock \eprint{1412.1895}.

\bibitem{Schiavina}
P.~Mnev, M.~Schiavina and K.~Wernli,
\newblock \emph{{Towards holography in the BV-BFV setting}}  (2019),
\newblock \eprint{1905.00952}.

\bibitem{PalaisSlice}
R.~Palais,
\newblock \emph{On the existence of slices for the actions of non-compact
  groups},
\newblock Ann. of Math. \textbf{73}, 295 (1961).

\bibitem{IsenSlice}
J.~Isenberg and J.~E. Marsden,
\newblock \emph{A slice theorem for the space of solutions of einstein's
  equations},
\newblock Physics Reports \textbf{89}(2), 179 (1982).

\bibitem{SliceThm}
D.~R. Wilkins,
\newblock \emph{Slice theorems in gauge theory},
\newblock Proceedings of the Royal Irish Academy. Section A: Mathematical and
  Physical Sciences \textbf{89A}(1), 13 (1989).

\bibitem{Fernandes}
R.~{Loja Fernandes},
\newblock \emph{Lie algebroids, holonomy and characteristic classes},
\newblock Advances in Mathematics \textbf{170}(1), 119  (2002),
\newblock \doi{https://doi.org/10.1006/aima.2001.2070}.

\bibitem{Dirac}
P.~A.~M. Dirac,
\newblock \emph{{Gauge invariant formulation of quantum electrodynamics}},
\newblock Can. J. Phys. \textbf{33}, 650 (1955),
\newblock \doi{10.1139/p55-081}.

\bibitem{Dollard}
J.~D. Dollard,
\newblock \emph{Asymptotic convergence and the coulomb interaction},
\newblock Journal of Mathematical Physics \textbf{5}(6), 729 (1964).

\bibitem{FK}
P.~P. Kulish and L.~D. Faddeev,
\newblock \emph{{Asymptotic conditions and infrared divergences in quantum
  electrodynamics}},
\newblock Theor. Math. Phys. \textbf{4}, 745 (1970),
\newblock \doi{10.1007/BF01066485},
\newblock [Teor. Mat. Fiz.4,153(1970)].

\bibitem{LavelleQCD}
M.~Lavelle and D.~McMullan,
\newblock \emph{{Constituent quarks from QCD}},
\newblock Phys. Rept. \textbf{279}, 1 (1997),
\newblock \doi{10.1016/S0370-1573(96)00019-1},
\newblock \eprint{hep-ph/9509344}.

\bibitem{Bagan1}
E.~Bagan, M.~Lavelle and D.~McMullan,
\newblock \emph{Charges from dressed matter: construction},
\newblock Annals of Physics \textbf{282}(2), 471 (2000),
\newblock \doi{10.1006/aphy.2000.6048},
\newblock \eprint{hep-ph/9909257}.

\bibitem{Bagan2}
E.~Bagan, M.~Lavelle and D.~McMullan,
\newblock \emph{Charges from dressed matter: physics and renormalisation},
\newblock Annals of Physics \textbf{282}(2), 503 (2000),
\newblock \doi{10.1006/aphy.2000.6049},
\newblock \eprint{hep-ph/9909262}.

\bibitem{FrancoisPhD}
J.~Fran\c{c}ois,
\newblock \emph{Reduction of gauge symmetries: a new geometrical approach},
\newblock Ph.D. thesis, Aix-Marseille University,
\newblock
  {\hypersetup{urlcolor=BrickRed}\href{https://hal.archives-ouvertes.fr/tel-01217472/}{\texttt{[hal.archives-ouvertes.fr/tel-01217472]}}}\hypersetup{urlcolor=MidnightBlue}
  (2014).

\bibitem{Francois2020}
J.~Fran\c{c}ois,
\newblock \emph{{Bundle geometry of the connection space, covariant Hamiltonian
  formalism, the problem of boundaries in gauge theories, and the dressing
  field method}}  (2020),
\newblock \eprint{2010.01597}.

\bibitem{WeinbergQFT1}
S.~Weinberg,
\newblock \emph{The Quantum Theory of Fields. Volume 1. Foundations},
\newblock Cambridge Univ. Press (1995).

\bibitem{GoRieQCD}
H.~Gomes and A.~Riello,
\newblock \emph{{Large gauge transformations, gauge invariance, and the QCD
  $\theta_{\text{YM}}$-term}}  (2020),
\newblock \eprint{2007.04013}.

\bibitem{YKS}
Y.~Kosmann-Schwarzbach,
\newblock \emph{Lie bialgebras, poisson lie groups and dressing
  transformations},
\newblock In \emph{Integrability of nonlinear systems}, pp. 104--170. Springer,
\newblock Available at:
  \href{http://www.math.polytechnique.fr/cmat/kosmann/lnp2.pdf}{http://www.math.polytechnique.fr/cmat/kosmann/lnp2.pdf}
  (1997).

\bibitem{Singer1}
I.~M. Singer,
\newblock \emph{{Some Remarks on the Gribov Ambiguity}},
\newblock Commun. Math. Phys. \textbf{60}, 7 (1978),
\newblock \doi{10.1007/BF01609471}.

\bibitem{Singer2}
I.~M. Singer,
\newblock \emph{{The Geometry of the Orbit Space for Nonabelian Gauge Theories.
  (Talk)}},
\newblock Phys. Scripta \textbf{24}, 817 (1981),
\newblock \doi{10.1088/0031-8949/24/5/002}.

\bibitem{Rejzner_2021}
K.~Rejzner and M.~Schiavina,
\newblock \emph{Asymptotic symmetries in the {BV}-{BFV} formalism},
\newblock Communications in Mathematical Physics  (2021),
\newblock \doi{10.1007/s00220-021-04061-7}.

\bibitem{FreidGir}
M.~Dupuis, L.~Freidel, F.~Girelli, A.~Osumanu and J.~Rennert,
\newblock \emph{{On the origin of the quantum group symmetry in 3d quantum
  gravity}}  (2020),
\newblock \eprint{2006.10105}.

\bibitem{RieSDYM}
A.~Riello,
\newblock \emph{{Self-dual phase space for ( 3+1 )-dimensional lattice
  Yang-Mills theory}},
\newblock Phys. Rev. D \textbf{97}(2), 025003 (2018),
\newblock \doi{10.1103/PhysRevD.97.025003},
\newblock \eprint{1706.07811}.

\bibitem{HagRieEncoding}
H.~M. Haggard, M.~Han and A.~Riello,
\newblock \emph{{Encoding Curved Tetrahedra in Face Holonomies: Phase Space of
  Shapes from Group-Valued Moment Maps}},
\newblock Annales Henri Poincare \textbf{17}(8), 2001 (2016),
\newblock \doi{10.1007/s00023-015-0455-4},
\newblock \eprint{1506.03053}.

\bibitem{Wolf}
W.~Wieland,
\newblock \emph{{Conformal boundary conditions, loop gravity and the
  continuum}},
\newblock JHEP \textbf{10}, 089 (2018),
\newblock \doi{10.1007/JHEP10(2018)089},
\newblock \eprint{1804.08643}.

\bibitem{WeinbergQFT2}
S.~Weinberg,
\newblock \emph{The Quantum Theory of Fields. Volume 2. Modern Applications},
\newblock Cambridge Univ. Press (1996).

\bibitem{SussMemory}
L.~Susskind,
\newblock \emph{{Electromagnetic Memory}}  (2015),
\newblock \eprint{1507.02584}.

\bibitem{GomesPhil}
H.~Gomes,
\newblock \emph{{Holism as the significance of gauge symmetries}}  (2019),
\newblock \eprint{1910.05330}.

\bibitem{Rovelli1}
C.~Rovelli,
\newblock \emph{{Why Gauge?}},
\newblock Found. Phys. \textbf{44}(1), 91 (2014),
\newblock \doi{10.1007/s10701-013-9768-7},
\newblock \eprint{1308.5599}.

\bibitem{Teh2015}
N.~J. Teh,
\newblock \emph{A note on rovelli's `why gauge?'},
\newblock European Journal for Philosophy of Science \textbf{5}(3), 339 (2015),
\newblock \doi{10.1007/s13194-015-0109-x}.

\bibitem{Rovelli2}
C.~Rovelli,
\newblock \emph{{Gauge Is More Than Mathematical Redundancy}},
\newblock In \emph{{678th WE Heraeus-Seminar}: {Hundred Years of Gauge Theory}}
  (2020), \eprint{2009.10362}.

\bibitem{Casini_gauge}
H.~Casini, M.~Huerta and J.~A. Rosabal,
\newblock \emph{Remarks on entanglement entropy for gauge fields},
\newblock Phys. Rev. D \textbf{89}, 085012 (2014),
\newblock \doi{10.1103/PhysRevD.89.085012}.

\bibitem{DelDitRieEnt}
C.~Delcamp, B.~Dittrich and A.~Riello,
\newblock \emph{{On entanglement entropy in non-Abelian lattice gauge theory
  and 3D quantum gravity}},
\newblock JHEP \textbf{11}, 102 (2016),
\newblock \doi{10.1007/JHEP11(2016)102},
\newblock \eprint{1609.04806}.

\bibitem{sjamaar1991}
R.~Sjamaar and E.~Lerman,
\newblock \emph{Stratified symplectic spaces and reduction},
\newblock Annals of Mathematics \textbf{134}(2), 375 (1991).

\end{thebibliography}

\nolinenumbers

\end{document}